\documentclass[reqno, leqno,11pt]{article}
\usepackage{amsmath,amssymb,amsthm,amsbsy}
\usepackage[english]{babel}
\usepackage[ansinew]{inputenc}
\usepackage{graphicx}
\usepackage{a4wide}  
\usepackage{hyperref}
\usepackage{xcolor}
\usepackage{bbm,  mathrsfs}
\usepackage{scalerel,stackengine}

\numberwithin{equation}{section}


\newcommand\blue[1]{\textcolor{black}{#1}}
\newcommand{\red}[1]{\textcolor{black}{#1}}

\newtheorem{theo}{Theorem}[section]
\newtheorem{thm}[theo]{Theorem}
\newtheorem{prop}[theo]{Proposition}
\newtheorem{lemma}[theo]{Lemma}
\newtheorem{kor}[theo]{Corollary}
\newtheorem{definition}[theo]{Definition}
\newtheorem{claim}[]{Claim}[section]

\theoremstyle{remark}
\newtheorem{rem}{Remark}[section]

\newtheorem{ex}{Example}[section]

\newcommand{\e}{\mathrm{e}} 
\newcommand{\N}{\mathbb{N}}
\newcommand{\R}{\mathbb{R}}
\newcommand{\Z}{\mathbb{Z}}
\newcommand{\C}{\mathbb{C}}

\newcommand{\E}{\mathbb{E}}
\newcommand{\dd}{\mathrm{d}} 
\newcommand{\eps}{\varepsilon}

\stackMath
\newcommand\reallywidehat[1]{%
\savestack{\tmpbox}{\stretchto{%
  \scaleto{%
    \scalerel*[\widthof{\ensuremath{#1}}]{\kern.1pt\mathchar"0362\kern.1pt}%
    {\rule{0ex}{\textheight}}
  }{\textheight}%
}{2.4ex}}%
\stackon[-6.9pt]{#1}{\tmpbox}%
}
\parskip 1ex

\newcommand{\vect}[1]{\boldsymbol{#1}}


\newcommand{\be}{\begin{equation}}
\newcommand{\ee}{\end{equation}}

\def\1{{\mathchoice {1\mskip-4mu\mathrm l}      
{1\mskip-4mu\mathrm l}
{1\mskip-4.5mu\mathrm l} {1\mskip-5mu\mathrm l}}}

\title{Cluster expansions: Necessary and sufficient convergence conditions}
\author{Sabine Jansen\thanks{Mathematisches Institut, Ludwig-Maximilians-Universit{\"a}t, Theresienstr. 39, 80333 M{\"u}nchen, Germany.}\ \thanks{Munich Center for Quantum Science and Technology (MCQST), Schellingstr. 4, 80799 M{\"u}nchen.} \and Leonid Kolesnikov\footnotemark[1]}
\date{}
\begin{document}




\maketitle

\begin{abstract}	
We prove a new convergence condition for the activity expansion of correlation functions in equilibrium statistical mechanics with possibly negative pair potentials. For non-negative pair potentials, the criterion is an if and only if condition. The condition is formulated with a sign-flipped Kirkwood-Salsburg operator and known conditions such as Koteck{\'y}-Preiss and Fern{\'a}ndez-Procacci are easily recovered. In addition, we deduce new sufficient convergence conditions for hard-core systems in $\R^d$ and $\Z^d$ as well as for abstract polymer systems. \red{The latter improves on the Fern{\'a}ndez-Procacci criterion.}  \\

	\noindent \emph{Keywords:} Cluster expansions; correlation functions; Kirkwood-Salsburg equations; combinatorics of connected graphs; abstract polymer models; hard-core germ-grain models; subset polymers; hard spheres.\\
	
	\noindent \emph{MSC 2010 Classification: 82B05, 82B21.} 
\end{abstract}

\tableofcontents

\section{Introduction}
Since its introduction by Mayer in the early 40s, the method of cluster expansions was --- and remains --- a very important tool in equilibrium statistical mechanics. A classical application yields the analyticity of the logarithm of the partition function for a physical system at equilibrium by deriving a Taylor expansion in the activity or density parameter around zero. Such results can be quite useful, for example, in the study of phase transitions or  the decay of correlations, for a vast class of models. 

In 1971, Gruber and Kunz introduced in their seminal paper~\cite{gruber-kunz1971} systems of non-overlapping geometric objects --- referred to as \emph{polymers} --- given by subsets of a lattice. They presented a rigorous mathematical formalism in order to provide convergent cluster expansions for this model. Instead of the logarithm of the partition function, they considered the correlation functions of the system and derived convergent activity expansions by using a system of integral equations, the so-called Kirkwood-Salsburg equations, and solving the corresponding fixed point equation on a suitable Banach space. However, in the following years less analytical appoaches were favoured by researchers: Combinatorial proofs such as in \cite{brydges1986clustercourse}, relying on tree-graph identities~\cite{o.penrose1967}, and inductive proofs following the idea by Koteck{\'y} and Preiss \cite{kotecky-preiss1986} and its development in~\cite{dobrushin1996} by Dobrushin. The inductive method was presented in the more general setup of abstract polymers \red{(where the underlying space is not necessarily a lattice, nor are the polymers necessarily given by geometric objects)}. Notice that abstract polymer models are universal in the sense that a large class of classical models can be represented as polymer models due to the combinatorial structure of the corresponding partition functions (see, e.g.,~\cite{friedli-velenik2018book} for an application to the Ising model). Moreover, an interesting connection with probability theory was pointed out by Scott and Sokal in~\cite{scott-sokal2005}: Convergence of cluster expansions in abstract polymer models is related to the Lov{\'a}sz Local Lemma --- better sufficient conditions can provide refinements of the latter (see, e.g.,~\cite{bissacot2011improv}).

In 2008, Fern{\'a}ndez and Procacci proved a new sufficient criterion in the setup of abstract polymers improving on the result by Koteck{\'y} and Preiss. The initial proof~\cite{fernandez-procacci2007} relies on combinatorial arguments, an alternative proof via an induction \`{a} la Dobrushin~\cite{Fialho2020AbstractPG} appeared recently (finally, in this paper we provide an analytical proof in the spirit of Gruber-Kunz). 

Overall, in the last two decades, a notable effort was made to generalize classical sufficient conditions in the abstract polymer setup (including the condition by Fern{\'a}ndez and Procacci) to hold in continuous spaces and for systems with soft-core (or even more general) interactions, see~\cite{ueltschi2004,poghosyan-ueltschi2009,faris2008,jansen2019clustergibbs,fernandez-nguyen2020}. 

\blue{We want to go further by} employing a Kirkwood-Salsburg approach in the rather general setup of Gibbs point processes (or, in terms of statistical mechanics, grand-canonical Gibbs measures) defined via pairwise interactions. It is well-known that --- under mild additional moment conditions, which are automatically satisfied for non-negative pair potentials --- there is a one-to-one correspondence between the set of those Gibbs measures and the associated families of correlation functions (also known as \emph{factorial moment densities}). In the special case of a discrete space and hard-core interactions, the value of the $n$-point correlation function is given simply by the probability to see $n$ particles at the prescribed positions in the random configuration of particles. The correlation functions can be expanded as power series in the activity parameter $z$, i.e., in the intensity of the underlying Poisson process. We denote the Taylor expansion for the $n$-point correlation function in $z$ around zero by $\rho_n$ and write $\vect \rho$ for the family of those expansions. In general, the series $\vect \rho$ need not to be convergent at all; we are, however, interested in conditions which ensure pointwise convergence (towards the correlation functions). Furthermore, we want to consider the more general case where the underlying Poisson point process is inhomogeneous, i.e., where a different intensity value may be assigned to every point in the space, the activity $z$ is a function and the expansions $\vect \rho$ are multivariate power series in $z$. For a rigorous introduction of Gibbs point processes and corresponding correlation functions, see~\cite{jansen2019clustergibbs}, but notice that here we do not assume the interaction potential to be non-negative (unless explicitly stated).

The starting point of the paper and the central quantity to investigate are the activity expansions $\vect \rho$ which we consider independently of their interpretation in term of the correlation functions. Let us outline the main ideas present in the paper. The coefficients of the multivariate power series $\vect \rho$ are defined in terms of a certain family of rooted graphs \red{to which we refer as \emph{multi-rooted graphs} (see \cite{stell1964, jansen2019clustergibbs})}. Using the terminology from~\cite{faris2010species}, the activity expansions $\vect \rho$ are given by the exponential generating functions of the coloured weighted combinatorial species of multi-rooted graphs with a fixed set of roots. The set of all multi-rooted graphs has an essential structural property --- it is invariant under the operation of removal of a root. Taking a multi-rooted graph and removing an arbitrary root (as well as all edges incident to it), one gets again a multi-rooted graph on a smaller vertex set, where every neighbour of the removed root becomes a root vertex itself. The weight of the original graph is equal to the weight of the resulting graph times the weight of the edges removed. The corresponding property of the generating functions is expressed by the Kirkwood-Salsburg equations. Every possible rule for the choice of the root to remove induces a different combinatorial operation and therefore a different system of Kirkwood-Salsburg equations for the generating functions.

 \red{In this work we provide a condition for absolute convergence of the activity expansions $\vect \rho$ in terms of the existence of a measurable function solving a system of Kirkwood-Salsburg type inequalities (in the case of repulsive interactions, that condition is also a necessary one).} Our main result, Theorem~\ref{thm:main1}, is inspired by~\cite{bissacot-fernandez-procacci2010}; it is a slightly modified, strongly generalized version of Claim 1 therein. The goal, however, is not only to obtain abstract conditions which are both necessary and sufficient for convergence of the cluster expansions --- but also to demonstrate how these characterizations provide a universal approach to prove model-specific sufficient conditions on different levels of generality, both in discrete and continuous setups with repulsive interactions. A two-lane mechanism arises: On the one hand, for a candidate family of ansatz functions $\vect \xi$ (given, for example, as approximations of $\vect \rho$) one can search for conditions that ensure that these functions $\vect \xi$ satisfy the Kirkwood-Salsburg inequalities; on the other hand, given candidate sufficient conditions, one can construct a suitable family of ansatz functions $\vect \xi$ tailored to satisfy the Kirkwood-Salsburg inequalities under these conditions. 

This approach provides a unifying framework for the known conditions, but it also allows to prove stronger results. To emphasize this possibility, we derive a new sufficient condition for absolute convergence of the activity expansions \red{$\vect \rho$} in the setup of abstract polymers. In that general setup, our condition improves on any known condition that we are aware of. 

A more detailed outline of the main ideas intoduced above   can be found in~\cite{jansen2020proc} (for the case of non-negative pairwise interactions and without rigorous proofs).

In the further course of the paper, we investigate two particular hard-core setups as examples --- the subset polymers in $\Z^d$ and hard objects in $\R^d$.  There, the sets of roots of the multi-rooted graphs correspond to configurations of geometric objects. By breaking the geometric objects into smaller ``pieces" to which we refer as \emph{snippets}, we can identify these configurations with configurations of snippets (e.g., in the case of subset polymers we can identify a configuration of polymers with the disjoint union of monomers covering this configuration). Picking a root of a multi-rooted graph --- the combinatorial operation underlying the Kirkwood-Salsburg equations --- corresponds to picking a snippet. Different rules to pick a snippet in general give rise to characterizations of absolute convergence in terms of different Kirkwood-Salsburg inequalities. This way the latter can be tailored to a candidate  sufficient condition. Thus different sufficient conditions can be derived by playing both with the choice of different systems of  Kirkwood-Salsburg inequalities and the choice of different ansatz functions satisfying these inequalities. We illustrate this mechanism by deriving some sufficient conditions for a class of hard-core interaction models, in particular for multi-type systems of hard spheres in $\R^d$.  

The paper is organized as follows: In Section~\ref{sec2.1} we introduce the basic notation and present the general framework. Furthermore, in Theorem~\ref{thm:main1} we state our main result, a characterization of the domain of absolute convergence for the activity expansions $\vect \rho$, and use it to recreate the classical sufficient conditions by Koteck{\'y} and Preiss as well as the sufficient conditions by Fern{\'a}ndez and Procacci in a rather general setup (see  Corollaries~\ref{KPcond} and ~\ref{FPcond}, respectively). In Section~\ref{sec:abstract-polymers}, the same approach is used to prove a new, improved sufficient condition in the setup of abstract polymers (Proposition~\ref{goodprop}). The proof of the proposition relies on an auxiliary result (Lemma~\ref{kl2}) which is proved in Appendix~\ref{AppB}. In the Sections~\ref{sec:hardcore} and~\ref{sec:sub_poly} we consider the special case of hard-core interactions. Both in the continuum (Section~\ref{sec:hardcore}) and in the discrete setup (Section~\ref{sec:sub_poly}), we provide model-specific characterizations of the convergence domain, stated in Theorem~\ref{thm:main2} and Theorem~\ref{thm:bfp}, respectively. As an immediate consequence of Theorem~\ref{thm:bfp} we obtain an elementary proof of the well known Gruber-Kunz condition (Corollary~\ref{cor:GK}). In Section 3, we present a forest-graph equality and other combinatorial results in order to prove Theorems~\ref{thm:main1},~\ref{thm:main2} and~\ref{thm:bfp}. Finally, in Section 4, Theorem \ref{thm:main2} and Theorem~\ref{thm:bfp} are used to obtain practitioner-type sufficient conditions for a class of hard-core interaction models, including new sufficient conditions for subset polymers in $\Z^d$ (Theorem~\ref{Lgoof}) and hard objects in $\R^d$ (Theorem~\ref{cLgoof} and Theorem~\ref{GK_cont}). 

\red{The reader interested primarily in the discrete setup of subset polymers is encouraged to jump directly to Subsection~\ref{sec:sub_poly}, its main result being the characterization of the domain of convergence for the activity expansions $\vect \rho$ given by Theorem~\ref{thm:bfp} (compare to Theorem~\ref{thm:gen_bfp}). The main ideas behind the proof of Theorem~\ref{thm:bfp} in Subsection~\ref{sec:recc_sub} and behind the application of Theorem~\ref{thm:bfp} in Subsection 4.1 can be transferred to the continuous setup as well.}

\section{Main results} 
\subsection{(Locally) stable pair potentials}
\label{sec2.1}

Let $(\mathbb X, \mathcal X)$ be a measurable space, $\lambda$ a $\sigma$-finite reference measure, and $v$ a pair potential, i.e., 
$v: \mathbb X\times \mathbb X\to \R\cup\{\infty\}$ is measurable and symmetric --- in the sense that $v(x,y) = v(y,x)$ for any $x,y\in\mathbb{X}$. Corresponding to the potential $v$, \emph{Mayer's $f$ function}  is  given by
$$
	f(x,y) = \e^{-v(x,y)}-1.
$$

We call the pair potential $v$ \emph{stable} if there exists a measurable map $B:\mathbb X\to \R_+$ such that for any $n\in\mathbb{N}$ and $x_1,\ldots,x_n\in\mathbb{X}$
\begin{align}\label{def:stab}
\prod\limits_{1\leq i<j\leq n}(1+f(x_i,x_j))\leq \e^{\sum_{k=1}^nB(x_k)}
\end{align}
holds; we call $v$ \emph{locally stable} or \emph{Penrose stable} (due to O. Penrose, see \cite{o.penrose1967}) if  there exists a measurable map $C:\mathbb X\to \R_+$ such that for any $x_0\in\mathbb{X}$, $n\in\mathbb{N}$ and $x_1,...,x_n\in\mathbb{X}$ satisfying $\prod_{1\leq i<j\leq n}(1+f(x_i,x_j))\neq 0$ 
\begin{align}\label{def:loc_stab}
\prod\limits_{i=1}^n(1+f(x_0,x_i))\leq\e^{C(x_0)}
\end{align}
holds. Notice that every locally stable potential is stable and that every non-negative potential $v$ is locally stable (with the choice $C\equiv0$).

An \emph{activity function} is a measurable map $z:\mathbb X\to \R$. Physically relevant activities are non-negative but for the purpose of studying the convergence of expansions it can be helpful to admit negative (or complex) activities as well. We define the (signed) measure $\lambda_z$ on $\mathcal X$ by 
\begin{align}\label{def:lambda_z}
\lambda_z(B):= \int_B z(x) \lambda(\dd x), \quad B\in\mathcal X.
\end{align}

The weight of a graph $G$ with vertex set $[n]= \{1,\ldots,n\}$ and edge set $E(G)$ is 
$$
	w(G;x_1,\ldots,x_n) := \prod_{\{i,j\}\in E(G)} f(x_i,x_j).
$$
Let $\mathcal G_n$ be the set of all graphs with vertex set $[n]$, $\mathcal C_n\subset\mathcal G_n$ the set of connected graphs  and 
$$
	\varphi_n^\mathsf T(x_1,\ldots x_n):= \sum_{G \in \mathcal C_{n}} w(G;x_1,\ldots,x_n)
$$
the $n$-th Ursell function. For $n\in \N$ and $k\in\N_0$, let $\mathcal D_{n,n+k}\subset \mathcal G_{n+k}$ be the collection of all graphs $G$ such that every vertex $j\in \{n+1,\ldots, n+k\}$ connects to at least one of the vertices $i\in \{1,\ldots, n\}$. We may view the vertices  $\{1,\ldots, n\}$ as roots and call the graphs $G\in \mathcal D_{n,n+k}$ \emph{multi-rooted graphs} or,  following the footnote 53 in~\cite{stell1964}, \emph{root-connected} graphs. 
Consider the functions
$$
		\psi_{n,n+k}(x_1,\ldots,x_{n+k}) := \sum_{G \in \mathcal D_{n,n+k}} w(G;x_1,\ldots,x_{n+k}).
$$
For $n=1$, the functions coincide with the standard Ursell functions, i.e., $\psi_{1,1+k} = \varphi_{1+k}^\mathsf T$. We are interested in the associated series 
$$
		\rho_n(x_1,\ldots,x_n;z) 
		:=\sum_{k=0}^\infty \frac{1}{k!} \int_{\mathbb X^k} \psi_{n,n+k}(x_1,\ldots,x_n,y_1,\ldots,y_k) z(x_1)\cdots z(x_n)  \lambda_z^k(\dd \vect y). 
$$
The summand for $k=0$ is to be read as $\psi_{n,n}(x_1,\ldots,x_n) z(x_1)\cdots z(x_n)$. The series $\rho_n$ corresponds to the $n$-point correlation function of a grand-canonical Gibbs measure \cite[Eq. (4-7)]{stell1964}, see also \cite{jansen2019clustergibbs} --- it is the expansion of the correlation function in the activity $z$ around $0$. 


\blue{We will say that the activity expansions $\vect \rho$ converge absolutely for a non-negative activity function $z$ if
$$
\sum_{k=0}^\infty \frac{1}{k!} \int_{\mathbb X^k}\vert\psi_{n,n+k}(x_1,\ldots,x_n,y_1,\ldots,y_k)\vert z(x_1)\cdots z(x_n)  \lambda_z^k(\dd \vect y)<\infty 
$$
for all $n\in\N$ and $(x_1,\ldots,x_n)\in\mathbb{X}^n$.}

Our main concern is to derive necessary and sufficient convergence conditions, but sometimes it is useful to view the series as purely formal; relevant background on formal power series whose variable is a measure (here $\lambda_z(\dd x)$) is given in~\cite[Appendix~A]{jansen-kuna-tsagkaro2019}. 

Next we introduce sign-flipped Kirkwood-Salsburg \red{operators}. A \emph{selection rule} $s(\cdot)$ is a map from $P(\mathbb X):= \sqcup_{n=1}^\infty \mathbb X^n$ to $\N$ such that $s(x_1,\ldots,x_n) \in \{1,\ldots, n\}$ for all $(x_1,\ldots, x_n) \in P(\mathbb X)$. To lighten notation we write $x_s$ rather than $x_{s(x_1,\ldots,x_n)}$. Further let $(x'_2,\ldots, x'_{n})$ be the vector obtained from $(x_1,\ldots,x_n)$ by deleting the entry $x_s$, leaving the order otherwise unchanged. For the simplest selection rule that picks the first entry $s =1$, we have $x'_i = x_i$. The \emph{sign-flipped Kirkwood-Salsburg operator} $\tilde K_z^s$ with selection rule $s(\cdot)$ acts on families $\vect \xi = (\xi_n)_{n\in \N}$ of measurable symmetric functions $\xi_n:\mathbb X^n\to \R_+$ as 
\begin{multline}
	(\tilde K_z^s \xi)_n(x_1,\ldots,x_n):= z(x_s)\,  \prod\limits_{i=2}^n(1+f(x_s,x'_i))  \Big( \1_{\{n\geq 2\}}\xi_{n-1}(x'_2,\ldots, x'_{n}) \\
		+   \sum_{k=1}^\infty \frac{1}{k!} \int_{\mathbb X^k} \prod_{j=1}^k \bigl| f(x_s, y_j)\bigr|\, \xi_{n-1+k} (x'_2,\ldots, x'_{n}, y_1,\ldots, y_k) \lambda^k(\dd \vect y)\Big),
\end{multline}
for all $n\in\N$ and $(x_1,\ldots,x_n)\in\mathbb X^n$. Here we allow the functions $(\tilde K_z^s \xi)_n$ to assume the value ``$\infty$". For non-negative potentials \red{and on a suitably reduced domain}, $\tilde K_z^s$ differs from the standard Kirkwood-Salsburg operator~\cite[Chapter 4.2]{ruelle1969book} by a mere sign-flip: it has $|f(x_s,y_i)|$ instead of $f(x_s,y_i)$.

\begin{thm}  \label{thm:main1}
	Let $z(\cdot)$ be a non-negative activity and $s(\cdot)$ any selection rule. Consider the following two conditions:
	
	\begin{enumerate} 
		
		\item [(i)] There is a family $\vect \xi = (\xi_n)_{n\in \N}$ of measurable symmetric functions $\xi_n:\mathbb X^n\to \R_+$ such that 
		\be
		\begin{aligned} \label{eq:condition1}
			 z(x_1) \delta_{n,1} +  (\tilde K_z^s \vect \xi)_n\, (x_1,\ldots,x_n) \leq \xi_n(x_1,\ldots,x_n) 
		\end{aligned}
		\ee
		for all $n\in \N$ and  $(x_1,\ldots,x_{n})\in \mathbb X^{n}$.	\end{enumerate}

\begin{enumerate}
		
		 \item [(ii)] The series $\rho_n(x_1,\ldots, x_n;z)$ converges absolutely, for all $n\in \N$ and $(x_1,\ldots, x_n)\in \mathbb X^n$. 
	\end{enumerate} 
	
 	Condition (i) is sufficient for (ii) to hold; moreover, if (i) is satisfied, then 
	\be \label{rhokbound} 
		\sum_{k=0}^\infty \frac{1}{k!} \int_{\mathbb X^k} \bigl| \psi_{n,n+k}(x_1,\ldots,x_n,y_1,\ldots,y_k) \bigr| z(x_1)\cdots z(x_n)  \lambda_z^k(\dd \vect y)	\leq \xi_n(x_1,\ldots, x_n)
	\ee
	on $\mathbb X^n$,
	for all $n\in \N$.
	
	In addition, if we assume the pair potential to be non-negative, then (ii) implies (i) as well, so that the two conditions are equivalent in this case.
\end{thm}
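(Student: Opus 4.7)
The backbone is a combinatorial identity for the multi-rooted graph weights $\psi_{n,n+k}$ obtained by pruning the selected root $x_s$: each $G \in \mathcal D_{n,n+k}$ decomposes according to the set $J \subseteq \{1,\ldots,n\}\setminus\{s\}$ of its root neighbours (contributing $\prod_{j \in J} f(x_s,x_j)$) and the set $I \subseteq [k]$ of its non-root neighbours (which become new roots after pruning). Summing the weights and noting that the $J$-sum produces $\prod_{i \neq s}(1+f(x_s,x_i))$ yields
\[
\psi_{n,n+k}(\vect x, \vect y) = \prod_{i=2}^n (1+f(x_s, x'_i)) \sum_{I \subseteq [k]} \prod_{l \in I} f(x_s, y_l)\, \psi_{n-1+|I|,\,n-1+k}(\vect x', y_I, y_{[k]\setminus I})
\]
for $n \geq 2$, with the restriction $I \neq \emptyset$ in the case $n=1,\ k \geq 1$ (together with $\psi_{1,1}\equiv 1$).

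For the forward direction (i)$\Rightarrow$(ii), I would take absolute values in this identity. Since $1+f(x_s, x'_i) = \e^{-v(x_s, x'_i)} \geq 0$, the first product is untouched, and the triangle inequality gives
\[
|\psi_{n,n+k}(\vect x, \vect y)| \leq \prod_{i=2}^n (1+f(x_s, x'_i)) \sum_{I \subseteq [k]} \prod_{l \in I} |f(x_s, y_l)|\, |\psi_{n-1+|I|,\,n-1+k}(\vect x', y_I, y_{[k]\setminus I})|.
\]
Introduce $\bar \rho_n(\vect x) := \sum_{k \geq 0} \frac{1}{k!} \int_{\mathbb X^k} |\psi_{n,n+k}(\vect x, \vect y)|\, z(x_1)\cdots z(x_n)\, \lambda_z^k(\dd \vect y)$, the quantity we wish to dominate by $\xi_n$, and let $\bar\rho_n^{[M]}$ denote its truncation to $k \leq M - n$ (and zero when $n > M$). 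Integrating the above inequality against $\lambda_z^k$, applying Tonelli, and using the symmetry of $\psi_{n',n'+k'}$ in its non-root arguments to regroup the $y$-variables according to $m = |I|$ and $m' = k-m$, one obtains
\[
\bar\rho_n^{[M]}(\vect x) \leq z(x_1)\delta_{n,1} + (\tilde K_z^s \bar\rho^{[M-1]})_n(\vect x);
\]
the truncation thresholds match exactly because $\bar\rho_{n-1+m}^{[M-1]}$ sums over $m' \leq (M-1)-(n-1+m) = M-n-m$ and $m+m' = k \leq M-n$. An induction on $M$ starting from $\bar\rho^{[0]} \equiv 0$, using the monotonicity of $\tilde K_z^s$ (all coefficients are non-negative) and hypothesis \eqref{eq:condition1}, then yields $\bar\rho_n^{[M]} \leq \xi_n$ for every $M$, and monotone convergence produces \eqref{rhokbound}.

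For the converse under $v \geq 0$, I would first establish by induction on $n+k$, via the combinatorial identity, that $(-1)^k \psi_{n,n+k} \geq 0$: inside the sum each term $\prod_{l \in I} f(x_s, y_l)\cdot \psi_{n-1+|I|,\,n-1+k}$ has sign $(-1)^{|I|}\cdot(-1)^{k-|I|} = (-1)^k$ by the inductive hypothesis, and $\prod(1+f) \geq 0$. Consequently $|\psi_{n,n+k}| = (-1)^k \psi_{n,n+k}$ and the triangle inequality becomes an equality, so $\bar\rho_n = z(x_1)\delta_{n,1} + (\tilde K_z^s \bar\rho)_n$ holds literally; under (ii) the family $\vect \xi := \vect{\bar\rho}$ is everywhere finite and solves \eqref{eq:condition1}. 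The most delicate point I anticipate is the careful justification of the combinatorial identity for a general measurable selection rule $s(\cdot)$, together with the edge case $n=1$ (where the restriction $I \neq \emptyset$ is needed to avoid the undefined object $\psi_{0,k}$); once that identity is in place the rest of the argument reduces to standard applications of Tonelli and monotone convergence.
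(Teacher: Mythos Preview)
Your proposal is correct and follows essentially the same route as the paper: the combinatorial recurrence for $\psi_{n,n+k}$ (Lemma~\ref{lem:recursion}), the triangle inequality, the truncated sums $\bar\rho^{[M]}$ (the paper's $\tilde S_M$), the inequality $\bar\rho^{[M]} \leq \vect e_z + \tilde K_z^s \bar\rho^{[M-1]}$ (Proposition~\ref{tKSmult}), and the induction on $M$ with monotone convergence. The one noteworthy difference is your derivation of the alternating sign property $(-1)^k\psi_{n,n+k}\geq 0$: you obtain it directly by induction on $n+k$ through the recurrence itself, whereas the paper proves it via a forest-graph equality built on forest partition schemes (Proposition~\ref{forep} and Corollary~\ref{prop:alt-sign}); your argument is more self-contained for the purpose at hand, while the paper's forest machinery is set up for broader use elsewhere.
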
 

\begin{rem}We formulate this theorem --- as well as the following results ---
 for non-negative activities, mainly for the purpose of notational convenience. Naturally, such conditions for absolute convergence can be formulated in the usual framework of complex analysis by exchanging complex activities $z$ with $\vert z \vert$ in the convergence criteria. 

\end{rem}

We prove the theorem in Subsection \ref{subsect:3.2.}.
The known sufficient convergence conditions of Koteck\'y-Preiss and Fern\'andez-Procacci types are easily recovered from Theorem~\ref{thm:main1}. 
We start with the Koteck{\'y}-Preiss type criterion~\cite{kotecky-preiss1986}, as  extended to soft-core and continuum systems by Ueltschi in~\cite{ueltschi2004} \blue{(and to stable interactions by Ueltschi and Poghosyan in~\cite{poghosyan-ueltschi2009})}. 

\begin{kor}
	\label{KP}
	Let $z$ be a non-negative activity function and assume stable interactions in the sense of \eqref{def:stab} for some $B\geq 0$. If there exists a measurable function $a: \mathbb{X}\to \R_+$ such that for  all	$x\in\mathbb{X}$
	\begin{equation}\label{KPcond}
		\int_\mathbb X \bigl|f(x,y)\bigr| \e^{a(y)} \lambda_z(\dd y)+2B(x)  \leq a(x), 
\end{equation}
then  the activity expansions $\rho_n(x_1,\ldots,x_n;z)$ converge absolutely and the bounds 
$$
\rho_n(x_1,\ldots,x_n;z)\leq z(x_1)\cdots z(x_n) \e^{a(x_1)+\cdots + a(x_n)}
$$
hold for all $n\in\N$ and $(x_1,\ldots, x_n)\in \mathbb X^n$. Notice that for non-negative pair interactions, we can choose $B\equiv 0$ in condition~\eqref{KPcond}.
\end{kor}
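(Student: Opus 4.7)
The plan is to invoke Theorem~\ref{thm:main1} with the ansatz
\[
\xi_n(x_1,\ldots,x_n) := z(x_1)\cdots z(x_n)\,\exp\!\bigl(a(x_1)+\cdots+a(x_n)\bigr),\qquad n\in\N,
\]
paired with the trivial selection rule $s\equiv 1$. This ansatz already matches the pointwise bound that the corollary asserts for $\rho_n$, so the estimate~\eqref{rhokbound} of Theorem~\ref{thm:main1} delivers both the absolute convergence and the claimed inequality in one stroke, as soon as condition~(i) of the theorem is established for $\vect\xi$.

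The verification of (i) is a short explicit calculation based on the multiplicative structure of the ansatz. Because $\xi_{n-1+k}$ factorises over $y_1,\ldots,y_k$, the sum over $k$ defining $(\tilde K_z^1\vect\xi)_n$ telescopes into an exponential and gives
\[
(\tilde K_z^1\vect\xi)_n(x_1,\ldots,x_n)=z(x_1)\cdots z(x_n)\prod_{i=2}^n(1+f(x_1,x_i))\cdot e^{a(x_2)+\cdots+a(x_n)}\cdot\exp\!\Big(\int_{\mathbb X}|f(x_1,y)|\,e^{a(y)}\lambda_z(\dd y)\Big).
\]
For $n=1$ the required inequality reduces to $\int_{\mathbb X}|f(x_1,y)|e^{a(y)}\lambda_z(\dd y)\le a(x_1)$, which is an immediate weakening of~\eqref{KPcond} since $B\ge 0$. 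For $n\ge 2$, dividing by $\xi_n$ and using~\eqref{KPcond} to bound the integral exponential by $e^{a(x_1)-2B(x_1)}$, the inequality becomes equivalent to the one-vertex estimate
\[
\prod_{i=2}^n(1+f(x_1,x_i))\le e^{2B(x_1)}.
\]

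For non-negative pair potentials each factor $1+f=e^{-v}$ lies in $[0,1]$, so the one-vertex estimate is trivial with $B\equiv 0$; this already closes the last sentence of the corollary. The main obstacle is the general stable case: the estimate is not a direct consequence of~\eqref{def:stab} applied to $(x_1,\ldots,x_n)$ alone. The step I would spend most care on is a Ruelle/Penrose-type stability argument obtained by comparing~\eqref{def:stab} applied to the doubled configuration $(x_1,x_1,x_2,\ldots,x_n)$ with~\eqref{def:stab} applied to the subconfiguration $(x_2,\ldots,x_n)$ (cf.~\cite{o.penrose1967,poghosyan-ueltschi2009}), which recovers exactly the exponent $2B(x_1)$. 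Once this single-vertex bound is in hand, condition~(i) holds pointwise and Theorem~\ref{thm:main1} yields the corollary directly, with the explicit bound $\rho_n\le\prod_i z(x_i)e^{a(x_i)}$ coming straight from~\eqref{rhokbound} applied to the ansatz $\vect\xi$.
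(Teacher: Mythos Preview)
Your ansatz $\vect\xi$ and the overall strategy via Theorem~\ref{thm:main1} match the paper's proof, and for non-negative potentials your argument with $s\equiv 1$ is complete. The gap is in the general stable case: the one-vertex estimate $\prod_{i=2}^n(1+f(x_1,x_i))\leq \e^{2B(x_1)}$ for the \emph{fixed} index $1$ is not a consequence of stability~\eqref{def:stab}, and the doubling trick you sketch does not recover it. Applying~\eqref{def:stab} to $(x_1,x_1,x_2,\ldots,x_n)$ gives $(1+f(x_1,x_1))\prod_{i\geq 2}(1+f(x_1,x_i))^2\prod_{2\leq i<j}(1+f(x_i,x_j))\leq \e^{2B(x_1)+\sum_{k\geq 2}B(x_k)}$; to isolate $\prod_{i\geq 2}(1+f(x_1,x_i))$ you would need an \emph{upper} bound on $\prod_{2\leq i<j}(1+f(x_i,x_j))$, but~\eqref{def:stab} applied to $(x_2,\ldots,x_n)$ is an inequality in the same direction and cannot be divided out. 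If $v(x_1,x_1)=+\infty$ the doubled inequality is in fact vacuous. A concrete counterexample: take $\mathbb X=\{1,2,3\}$, $v(i,i)=+\infty$, $v(1,2)=v(1,3)=-a<0$, $v(2,3)=0$, with $B(1)=2a$, $B(2)=B(3)=0$. One checks that~\eqref{def:stab} holds for every configuration, yet for $(x_1,x_2,x_3)=(2,1,3)$ one has $\prod_{i\geq 2}(1+f(x_1,x_i))=\e^{a}>1=\e^{2B(2)}$, so your bound fails at the first index.

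The paper closes this gap not by forcing the estimate at index $1$ but by letting the selection rule depend on the configuration. From $\sum_{j}\sum_{i\neq j}v(x_j,x_i)=2\sum_{i<j}v(x_i,x_j)\geq -2\sum_k B(x_k)$ one obtains, by pigeonhole, an index $j=j(x_1,\ldots,x_n)$ with $\sum_{i\neq j}v(x_j,x_i)\geq -2B(x_j)$, i.e.\ $\prod_{i\neq j}(1+f(x_j,x_i))\leq \e^{2B(x_j)}$; this is exactly the bound~\eqref{stab_ub} in the paper. Taking $s$ to always select such a $j$ makes your computation of $(\tilde K_z^s\vect\xi)_n$ go through verbatim with $x_s$ in place of $x_1$, and the remainder of your argument then proves the corollary.
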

\begin{rem}
Notice that via the substitution $\hat{a}=a-2B$ the above criterion is equivalent to the existence of a measurable function $\hat a: \mathbb{X}\to \R_+$ such that for all $x\in\mathbb X$
$$
\int_\mathbb X \bigl|f(x,y)\bigr| \e^{\hat a(y)+2B(y)} \lambda_z(\dd y)  \leq \hat a(x).
$$
\end{rem}
\begin{proof}
	Assume that (\ref{KPcond}) holds and define $\vect  \xi=(\xi_n)_{n\in\mathbb{N}},\ \xi_n:\mathbb{X}^n\to [0, \infty)$, by 
	$$
		\xi_n(x_1,...,x_n):=z(x_1)\cdots z(x_n) \e^{a(x_1)+\cdots + a(x_n)}
	$$ 
	for some $a(\cdot)$ satisfying (\ref{KPcond}). The interactions fulfill the stability condition \eqref{def:stab}, therefore for every $n\in \N$ and $x_1,...,x_n\in\mathbb{X}$ there exists an index $j\in\{1,\ldots,n\}$ such that the bound 
\begin{align}\label{stab_ub}
\prod\limits_{ 1\leq i\leq n,i\neq j} (1+f(x_j,x_i))\leq \e^{2B(x_j)}
\end{align}
holds. Choose the selection rule $s$ that always picks an element $x_j$ satisfying \eqref{stab_ub} from $(x_1,\ldots,x_n)$. Plugging our choice of  $\vect  \xi$ into the left-hand side of Eq.~\eqref{eq:condition1} and bounding the interaction term as $\prod_{i=2}^n(1+f(x_s,x'_i))\leq \e^{2B(x_s)}$, we recognize an exponential series, and find altogether that 
	the left-hand side of~\eqref{eq:condition1} is bounded by 	
	$$
		z(x_s)z(x'_2)\cdots z(x'_n)\, 	\e^{a(x'_2)+\cdots + a(x'_n)} \exp\Bigl( \int_\mathbb X |f(x_s,y)| \e^{a(y)} \lambda_z(\dd y)+2B(x_s) \Bigr). 
	$$
	By condition~\eqref{KPcond}, this is in turn bounded by $\xi_{n}(x_1,\ldots, x_n)$.  It follows that condition (i) of Theorem~\ref{thm:main1} is satisfied. 
\end{proof}

Analogously, one shows that the criterion by Fern{\'a}ndez and Procacci~\cite{fernandez-procacci2007}, extended to soft-core and continuum systems \blue{by Faris in~\cite{faris2008} and }by Jansen in~\cite{jansen2019clustergibbs},  is sufficient for absolute convergence of the activity expansions $\vect \rho$. We prove the result in the slightly more general setup of locally stable interactions.

\begin{kor}
\label{FPcrit}
Let $z$ be a non-negative activity function and assume locally stable interactions in the sense of \eqref{def:loc_stab} for some $C\geq 0$. If there exists a measurable function $\mu: \mathbb{X}\to [0,\infty)$ such that for  all $x\in \mathbb X$
\begin{equation}\label{FPcond}
	z(x)\left(1+\sum\limits_{k=1}^{\infty}\frac{1}{k!}\int_{\mathbb{X}^{k}}\e^{\sum_{j=1}^k C(y_j)}\prod\limits_{j=1}^{k}\left|f(x,y_j)\right|	\prod\limits_{1\leq i<j\leq k}(1+f(y_i,y_j))  \lambda_\mu^{k}(\dd\vect y)\right) \leq \mu(x),
\end{equation}
then the activity expansions $\rho_n(x_n,\ldots, x_n;z)$ converge absolutely and the bounds
$$
\rho_n(x_n,\ldots, x_n;z)\leq \prod\limits_{1\leq i<j\leq n}\left(1+f(x_i,x_j)\right)\prod\limits_{i=1}^n\mu(x_i)
$$
hold for all $n\in \N$ and $(x_1,\ldots, x_n)\in \mathbb X^n$. Notice that for non-negative pair interactions, we can choose $C\equiv 0$ in condition~\eqref{FPcond}.
\end{kor}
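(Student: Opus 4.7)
My plan is to apply Theorem~\ref{thm:main1} with the Fern\'andez--Procacci-type ansatz
\[\xi_n(x_1,\ldots,x_n) := \prod_{1\le i<j\le n}(1+f(x_i,x_j))\,\prod_{i=1}^n\mu(x_i),\]
and an arbitrary selection rule $s$. In contrast to the stability-based Koteck\'y--Preiss argument of Corollary~\ref{KP}, there is no need to single out a distinguished root: local stability~\eqref{def:loc_stab} applies around every vertex of a compatible configuration, so the choice $s\equiv 1$ will do.

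The computation of $(\tilde K_z^s\vect\xi)_n$ reduces to showing that~\eqref{FPcond} implies (i) of Theorem~\ref{thm:main1}. The multiplicative structure of $\vect\xi$ gives the factorization
\[\frac{\xi_{n-1+k}(x_2',\ldots,x_n',y_1,\ldots,y_k)}{\xi_{n-1}(x_2',\ldots,x_n')} = \prod_{i=2}^n\prod_{j=1}^k(1+f(x_i',y_j))\,\prod_{1\le i<j\le k}(1+f(y_i,y_j))\,\prod_{j=1}^k\mu(y_j),\]
which allows us to pull $z(x_s)\prod_{i=2}^n(1+f(x_s,x_i'))\,\xi_{n-1}(x_2',\ldots,x_n')$ outside the Kirkwood--Salsburg sum. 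The residual series equals the left-hand side of~\eqref{FPcond} evaluated at $x = x_s$, up to the missing factor $\e^{\sum_j C(y_j)}$ inside the integrand.

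The main obstacle is to produce this missing exponential factor. For $n \ge 2$, the prefactor $\xi_{n-1}(x_2',\ldots,x_n')$ vanishes unless $(x_2',\ldots,x_n')$ is pairwise compatible, and on this event local stability~\eqref{def:loc_stab}, applied with each $y_j$ in the role of $x_0$, yields $\prod_{i=2}^n(1+f(y_j,x_i'))\le \e^{C(y_j)}$; taking the product over $j=1,\ldots,k$ supplies the full factor $\e^{\sum_j C(y_j)}$. Condition~\eqref{FPcond} at $x=x_s$ then gives $(\tilde K_z^s\vect\xi)_n \le \mu(x_s)\,\prod_{i=2}^n(1+f(x_s,x_i'))\,\xi_{n-1}(x_2',\ldots,x_n') = \xi_n$. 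The case $n=1$ is slightly degenerate -- there is no $x_i'$ around which to invoke local stability -- but since $C \ge 0$ the factor $\e^{\sum_j C(y_j)} \ge 1$ can be inserted for free, so the required inequality $z(x_1) + (\tilde K_z^s\vect\xi)_1(x_1) \le \mu(x_1)$ is literally~\eqref{FPcond} at $x=x_1$. Theorem~\ref{thm:main1} then yields absolute convergence of $\vect\rho$, and~\eqref{rhokbound} specializes to the claimed pointwise bound. The non-negative-potential statement corresponds to $C\equiv 0$, which annihilates the exponential factor.
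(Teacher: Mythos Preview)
Your proof is correct and follows essentially the same approach as the paper: the same ansatz $\xi_n(x_1,\ldots,x_n)=\prod_{i<j}(1+f(x_i,x_j))\prod_i\mu(x_i)$, the same selection rule $s\equiv 1$, and the same use of local stability to bound the cross factor $\prod_{i=2}^n\prod_{j=1}^k(1+f(x_i',y_j))\le \e^{\sum_j C(y_j)}$ before invoking~\eqref{FPcond} and Theorem~\ref{thm:main1}. If anything, you are slightly more explicit than the paper in noting that the compatibility hypothesis of~\eqref{def:loc_stab} is automatically in force because $\xi_{n-1+k}$ (and hence the whole left-hand side) vanishes otherwise, and in separating out the degenerate case $n=1$.
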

\begin{rem}
\blue{The Fern{\'a}ndez-Procacci condition improves on the the Koteck{\'y}-Preiss condition --- in the sense that the assumptions of  Corollary~\ref{KP} yield the assumptions of Corollary~\ref{FPcrit}. In other words, Corollary~\ref{FPcrit} in general guarantees convergence of $\vect \rho$ on a larger domain of activities
.}
\end{rem}

\begin{proof}
Assume that (\ref{FPcond}) holds and define $\vect \xi=(\xi_n)_{n\in\mathbb{N}},\ \xi_n:\mathbb{X}^n\to [0, \infty),$ by 
$$
	\xi_n(x_1,\ldots,x_n):=\prod\limits_{1\leq i<j\leq n}\left(1+f(x_i,x_j)\right)\prod\limits_{i=1}^n\mu(x_i)
$$
for some $\mu$ satisfying \eqref{FPcond}.  Let $s$ be the selection rule that always selects the first entry --- so that $x_s = x_1$ and $x'_i = x_i$ for $i\geq 2$. For locally stable pair potentials, we have 
\begin{align}\label{FPbound:locstab}
    &\nonumber\prod\limits_{i=2}^n(1+f(x_1,x_i))\xi_{n+k-1}(x_2,\ldots,x_n,y_1,\ldots,y_k)	\\
	 &\nonumber= \prod\limits_{1\leq i<j\leq n}(1+f(x_i,x_j))\prod\limits_{1\leq i<j\leq k}(1+f(y_i,y_j))\prod\limits_{i=2}^n\prod\limits_{j=1}^k(1+f(x_i,y_j))\prod\limits_{i=2}^n\mu(x_i)\prod\limits_{j=1}^k\mu(y_j)\\
	 &\leq \Bigl( \prod\limits_{1\leq i<j\leq n}(1+f(x_i,x_j)) \prod\limits_{i=2}^n\mu(x_i)\Bigr) 
	 \Bigl( \prod\limits_{1\leq i<j\leq k}(1+f(y_i,y_j)) \prod_{j=1}^k \mu(y_j)\Bigr)\e^{\sum_{j=1}^k C(y_j)}, 
\end{align}
where we used the local stability to estimate
\begin{align*}
\prod\limits_{i=2}^n\prod\limits_{j=1}^k(1+f(x_i,y_j))=\prod\limits_{j=1}^k\prod\limits_{i=2}^n(1+f(x_i,y_j))\leq\prod\limits_{j=1}^k\e^{C(y_j)}=\e^{\sum_{j=1}^k C(y_j)}.
\end{align*}

We plug our choice of $\vect \xi$ into the left-hand side of~\eqref{eq:condition1} and use the estimate \eqref{FPbound:locstab} together with the assumption \eqref{FPcond} to find that condition (i) of Theorem \ref{thm:main1} is satisfied. 
\end{proof}

\begin{rem}\blue{We see that Theorem~\ref{thm:main1} provides a mechanism to prove sufficient conditions for absolute convergence --- by constructing a sequence of ansatz functions $\vect \xi$ tailored  to satisfy the Kirkwood-Salsburg inequalities under the given condition. Conversely, given an appropriate sequence of ansatz functions $\vect \xi$, obtained, for example, as an approximation of $\vect \rho$, one can try to determine the corresponding sufficient condition for convergence.}
\end{rem}

We now proceed to demonstrate the usefulness of that approach by deriving a sufficient condition that improves on the classical examples above.

\subsection{Abstract polymer models}\label{sec:abstract-polymers}
In the following we want to consider the setup of abstract polymers~\cite{bissacot-fernandez-procacci2010, fernandez-procacci2007}, in which the two classical conditions above --- Koteck{\'y}-Preiss and Fern{\'a}ndez-Procacci --- were first introduced.

Let $\mathbb X$ be a countable set (the set of polymers), let $\mathcal X$ be the powerset of $\mathbb X$ and let $\lambda$ simply be given by the counting measure. Moreover, let $R\subset\mathbb{X}\times\mathbb{X}$ be a symmetric and reflexive relation.  We write $x\nsim y$ for $(x,y)\in R$ (and say that $x$ and $y$ are \emph{incompatible}) and $x\sim y$ for $(x,y)\notin R$ (and say that x and y are \emph{compatible}). Moreover, we call a subset $X\subset\mathbb X$ compatible if $x\sim y$ for all $x\neq y \in X$  and write $X\sim z$ for $z\in \mathbb{X}$ if $z\sim x$ for all $x\in X$. We set $\Gamma{(x)}:=\{y\in\mathbb X\vert\ y\nsim x\}$ for any $x\in\mathbb{X}$ and extend this notation to $\Gamma{(X)}:=\cup_{x\in X}\{y\in\mathbb X\vert\ y\nsim x\}$ for any $X\subset\mathbb{X}$. Notice that we do not require $\Gamma(x)$ to be finite sets and that $x\in\Gamma(x)$ for every $x\in\mathbb{X}$. Finally, we consider hard-core interactions corresponding to Mayer's $f$ function given by $f(x,y):= -\1_{\{x\nsim  y\}}$.

In this setting we prove a new, improved sufficient condition for absolute convergence of the activity expansions $\vect \rho$.
\begin{prop}\label{goodprop}
Let $z$ be a non-negative activity function and assume that there exists $\mu:\mathbb{X}\to[0,\infty)$ such that for all $x\in\mathbb{X}$
\begin{equation}\label{Newcond}
z(x)\left(1+\sum\limits_{k\geq 1}\sum\limits_{{\substack{Y=\{y_1,...,y_k\}\\y_i \nsim x,\ y_i\sim y_j}}}\prod\mu(y_i)\prod\limits_{w\in\Gamma(Y)}\e^{\mu(w)}\right)\leq\mu(x)\prod\limits_{w\in\Gamma(x)}\e^{\mu(w)},
\end{equation}
where the inner sum on the left-hand side runs over compatible subsets $Y=\{y_1,...,y_k\}\subset \Gamma(x)$. Then the activity expansions $\rho_n(x_1,\ldots, x_n;z)$ converge absolutely and the bounds
$$
\rho_n(x_1,\ldots, x_n;z)\leq\prod\limits_{1\leq i<j\leq n}\mathbbm{1}_{\{x_i\sim x_j\}}\prod\limits_{i=1}^n\mu(x_i)\prod\limits_{w\in\Gamma(\{x_1,\ldots ,x_n\})}\e^{\mu(w)}
$$
hold for all $n\in \N$  and all $(x_1,\ldots, x_n)\in \mathbb X^n$.
\end{prop}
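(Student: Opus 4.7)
The plan is to apply Theorem~\ref{thm:main1} with the selection rule $s\equiv 1$ (always picking the first entry) and the ansatz
\[
\xi_n(x_1,\ldots,x_n):=\prod_{1\le i<j\le n}\mathbbm{1}_{\{x_i\sim x_j\}}\prod_{i=1}^n\mu(x_i)\prod_{w\in\Gamma(\{x_1,\ldots,x_n\})}\e^{\mu(w)},
\]
which is exactly the bound asserted in the proposition. The Kirkwood-Salsburg inequality~\eqref{eq:condition1} is trivial whenever $(x_1,\ldots,x_n)$ contains an incompatible pair (both sides vanish), so I may assume the arguments to be pairwise compatible. Unpacking $\tilde K_z^s$ in this case, the $n=1$ instance of~\eqref{eq:condition1} is \emph{exactly} the hypothesis~\eqref{Newcond}; the real work lies in verifying the cases $n\ge 2$.

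For $n\ge 2$ I would first make the action of $\tilde K_z^s$ explicit. Since $1+f=\mathbbm{1}_{\sim}$, $|f|=\mathbbm{1}_{\nsim}$, $\lambda$ is the counting measure, and the indicator factor $\prod_{i<j}\mathbbm{1}_\sim$ inside $\xi_{n-1+k}$ vanishes on any ordered tuple with a repeated entry (by reflexivity of $R$), the prefactor $1/k!$ collapses the $k$-fold sum into an unordered sum over compatible subsets $Y=\{y_1,\ldots,y_k\}\subset\Gamma(x_1)$. Substituting the ansatz and setting $B:=\Gamma(\{x_2,\ldots,x_n\})$ and $A:=\Gamma(x_1)\setminus B$, the factor $\prod_{i=2}^n\mathbbm{1}_{\{x_1\sim x_i\}}$ combines with the indicators in $\xi_{n-1+k}$ into the full pairwise compatibility condition on $\{x_1,\ldots,x_n\}\cup Y$, which in particular forces $Y\subset A$. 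Using $\Gamma(\{x_2,\ldots,x_n\}\cup Y)=B\cup\Gamma(Y)$ and the splitting $\prod_{w\in B\cup\Gamma(Y)}\e^{\mu(w)}=\prod_{w\in B}\e^{\mu(w)}\prod_{w\in\Gamma(Y)\setminus B}\e^{\mu(w)}$, and then cancelling the common factor $\prod_{i=2}^n\mu(x_i)\prod_{w\in B}\e^{\mu(w)}$ from both sides, the Kirkwood-Salsburg inequality reduces to
\[
z(x_1)\Bigl(1+\sum_{k\ge 1}\sum_{\substack{Y\subset A\\ Y\text{ compatible}}}\prod_{j=1}^k\mu(y_j)\prod_{w\in\Gamma(Y)\setminus B}\e^{\mu(w)}\Bigr)\le\mu(x_1)\prod_{w\in A}\e^{\mu(w)}.
\]

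The main obstacle is that this reduced inequality is \emph{not} a direct monotone consequence of~\eqref{Newcond}: thinning the underlying polymer set from $\Gamma(x_1)$ to $A$ shrinks the left-hand side (fewer compatible $Y$'s and smaller exponential factor $\prod_{w\in\Gamma(Y)\setminus B}$ in place of $\prod_{w\in\Gamma(Y)}$), but it simultaneously shrinks the right-hand side (from $\prod_{w\in\Gamma(x_1)}$ down to $\prod_{w\in A}$), and the two changes do not line up under a straightforward comparison. Bridging this gap is precisely the content of the auxiliary Lemma~\ref{kl2} (proved in Appendix~\ref{AppB}); roughly speaking, it provides an absorption/factorization identity that redistributes the discarded exponential factor $\prod_{w\in\Gamma(x_1)\cap B}\e^{\mu(w)}$ on the right of~\eqref{Newcond} into the individual terms of the polymer sum on the left, matching them with the thinned products $\prod_{w\in\Gamma(Y)\setminus B}\e^{\mu(w)}$ appearing in the reduced inequality. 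Granted this lemma, condition~(i) of Theorem~\ref{thm:main1} is verified, and~(ii) then delivers both the absolute convergence of $\vect\rho$ and the claimed bound $\rho_n\le\xi_n$.
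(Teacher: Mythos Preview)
Your proposal is correct and follows essentially the same approach as the paper: the same ansatz $\vect\xi$, the same selection rule $s\equiv 1$, the same reduction of~\eqref{eq:condition1} to the ``thinned'' inequality for compatible $Y\subset\Gamma(x_1)\setminus\Gamma(X)$, and the same appeal to Lemma~\ref{kl2} to bridge the gap between~\eqref{Newcond} and that reduced inequality. Your identification of the core obstacle (both sides shrink under thinning, so naive monotonicity fails) is exactly the point the paper resolves via Lemma~\ref{kl2}.
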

The proof of the proposition essentially exploits the following auxiliary result:\begin{lemma}\label{kl2}
Let $\mu: \mathbb{X}\to[0,\infty)$. Then the following holds for every $x_1\in\mathbb{X}$, $n\in\mathbb{N}$ and $X=\{x_2,...,x_n\}\subset \mathbb{X}$ such that $x_1\sim x_i$ for all $i\in \{2,\ldots,n\}$:
\begin{align}\label{eqkl2}
\frac{\mu(x_1)\prod\limits_{w\in\Gamma(x_1)}\e^{\mu(w)}}{1+\sum\limits_{k\geq 1}\sum\limits_{{\substack{Y=\{y_1,...,y_k\}\\y_i \nsim x_1,\ y_i\sim y_j}}}\prod\limits_{i=1}^k\mu(y_i)\prod\limits_{w\in\Gamma(Y)}\e^{\mu(w)}}\leq \frac{\mu(x_1)\prod\limits_{w\in\Gamma(x_1)\cap\Gamma(X)^C}\e^{\mu(w)}}{1+\sum\limits_{k\geq 1}\sum\limits_{\substack{\substack{Y=\{y_1,...,y_k\}\\y_i \nsim x_1,\ y_i\sim y_j}\\y_i\sim X}}\prod\limits_{i=1}^k\mu(y_i)\prod\limits_{w\in\Gamma(Y)\cap\Gamma(X)^C}\e^{\mu(w)}},
\end{align}
where $\Gamma(W)$ is given by $\cup_{i=1}^{n}\Gamma(w_i)$ for any $n\in\mathbb{N}$ and $W=\{w_1,...,w_n\}\subset\mathbb{X}$. The inner sum in the denominator on the left-hand side runs over compatible subsets $Y=\{y_1,\ldots,y_k\}\subset\Gamma(x_1)$; the inner sum in the denominator on the right-hand side runs over all such subsets $Y$ which additionally satisfy the constraint $Y\cap \Gamma(X)=\varnothing$, i.e., $y_i\sim X$ for all $i\in\{1,\ldots,k\}.$
\end{lemma}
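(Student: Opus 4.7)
The plan is to cross-multiply, group the resulting sum into fibres indexed by a suitable subconfiguration, and reduce the claim to a single strengthened polymer-type inequality which is then proved by induction on the size of the underlying polymer set.

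Write $G_1:=\Gamma(x_1)$, $G_X:=\Gamma(X)$, $S:=G_1\cap G_X$, and $K:=\prod_{w\in S}\e^{\mu(w)}$, and let $D_1,D_2$ denote the denominators on the left- and right-hand side of \eqref{eqkl2}. The two numerators differ by exactly the factor $K$, so (the case $\mu(x_1)=0$ being trivial) the claim is equivalent to $D_1\geq K\,D_2$. Every compatible $Y\subset G_1$ decomposes uniquely as $Y=Y_1\sqcup Y_2$ with $Y_1\subset S$, $Y_2\subset G_1\setminus G_X$, $Y_1\sim Y_2$; note that $Y_2\subset G_X^C$ is automatically compatible with $X$. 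Fibering $D_1$ over $Y_2$, splitting $\prod_{w\in\Gamma(Y_1)\cup\Gamma(Y_2)}\e^{\mu(w)}=\prod_{w\in\Gamma(Y_2)}\e^{\mu(w)}\cdot\prod_{w\in\Gamma(Y_1)\setminus\Gamma(Y_2)}\e^{\mu(w)}$, and cancelling $\prod_{w\in\Gamma(Y_2)\setminus G_X}\e^{\mu(w)}$ against the matching factor in $K\,D_2$ reduces $D_1\geq K\,D_2$ to the pointwise inequalities, one for each compatible $Y_2\subset G_1\setminus G_X$:
$$
\prod_{w\in\Gamma(Y_2)\cap G_X}\e^{\mu(w)}\sum_{Y_1\text{ comp.}\subset S\setminus\Gamma(Y_2)}\prod_{y\in Y_1}\mu(y)\prod_{w\in\Gamma(Y_1)\setminus\Gamma(Y_2)}\e^{\mu(w)}\;\geq\;\prod_{w\in S}\e^{\mu(w)}.
$$
Bounding the prefactor below by $\prod_{w\in S\cap\Gamma(Y_2)}\e^{\mu(w)}$ and cancelling this against the corresponding factor of $K$ on the right, matters are in turn implied by the following strengthened statement $(\star)$: for any disjoint $T,U\subset\mathbb{X}$,
$$
\sum_{Y\text{ comp.}\subset T}\prod_{y\in Y}\mu(y)\prod_{w\in\Gamma(Y)\setminus U}\e^{\mu(w)}\;\geq\;\prod_{w\in T}\e^{\mu(w)},\qquad(\star)
$$
applied with $T=S\setminus\Gamma(Y_2)$ and $U=\Gamma(Y_2)$.

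I would prove $(\star)$ by induction on $|T|$, extending to infinite $T$ by monotone approximation (both sides are monotone non-decreasing under enlarging $T$). The base case $T=\emptyset$ is an equality. For the inductive step, fix $t\in T$ and split the sum into the contribution from $Y$ with $t\notin Y$, which by induction on $T\setminus\{t\}$ is at least $\prod_{w\in T\setminus\{t\}}\e^{\mu(w)}$, and the contribution from $Y$ with $t\in Y$. Writing $Y=\{t\}\sqcup Y'$ (reflexivity of $\nsim$ forces $Y'\subset T\setminus\Gamma(t)$), using $\Gamma(\{t\}\sqcup Y')\setminus U=(\Gamma(t)\setminus U)\sqcup(\Gamma(Y')\setminus(U\cup\Gamma(t)))$ to factor, and applying $(\star)$ inductively to the disjoint pair $(T\setminus\Gamma(t),\,U\cup\Gamma(t))$ bounds the inner sum from below by $\prod_{w\in T\setminus\Gamma(t)}\e^{\mu(w)}$. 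The inclusion $\{t\}\sqcup((T\setminus\{t\})\cap\Gamma(t))\subset\Gamma(t)\setminus U$ (from reflexivity and disjointness of $T,U$) then gives $\prod_{w\in\Gamma(t)\setminus U}\e^{\mu(w)}\geq\e^{\mu(t)}\prod_{w\in(T\setminus\{t\})\cap\Gamma(t)}\e^{\mu(w)}$, so the $t\in Y$ contribution is at least $\mu(t)\e^{\mu(t)}\prod_{w\in T\setminus\{t\}}\e^{\mu(w)}$. The two parts add up to at least $(1+\mu(t)\e^{\mu(t)})\prod_{w\in T\setminus\{t\}}\e^{\mu(w)}$, and the elementary bound $1+s\e^s\geq\e^s$ for $s\geq 0$ (equivalent to $1-\e^{-s}\leq s$) closes the induction.

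The main obstacle is identifying the right strengthened form $(\star)$: the lemma as stated does not admit a clean induction, but introducing the auxiliary exclusion set $U$, enlarged to $U\cup\Gamma(t)$ at each recursive step, is exactly what absorbs the factors pulled out of the contribution of the singled-out polymer $t$, letting the recursion close.
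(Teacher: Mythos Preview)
Your proof is correct and considerably cleaner than the paper's. Both arguments cross-multiply to reduce \eqref{eqkl2} to the equivalent form $D_1\geq K\,D_2$, decompose each compatible $Y\subset\Gamma(x_1)$ according to its intersection with $Q:=\Gamma(x_1)\cap\Gamma(X)$, and ultimately rely on the elementary bound $\e^{-s}+s\geq 1$ together with an induction on the size of a finite polymer set. The paper, however, takes a much more elaborate route: it introduces two auxiliary families $(A_U)_{U\subset Q}$ and $(\beta_U)_{U\subset Q}$ indexed by \emph{all} subsets of $Q$, proves three separate claims (an upper bound for the right-hand side, a lower bound for the left-hand side in terms of the $\beta_U A_U$, and finally $\beta_U\geq 1$ by induction on $|Q|$), and the bookkeeping in the second and third claims is substantially heavier. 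Your key insight is to isolate the strengthened inequality $(\star)$ with the auxiliary exclusion set $U$, which is exactly what is needed for the recursion to close when one strips off a single polymer $t$ and passes from $(T,U)$ to $(T\setminus\Gamma(t),\,U\cup\Gamma(t))$. This collapses the paper's three claims into one short induction; the paper's $\beta_U\geq 1$ is morally the same inequality, but encoded through a more indirect parametrisation.
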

The lemma is of rather technical nature; for the interested reader, the proof is to be found in Appendix~\ref{AppB}.
\begin{rem}\blue{The general idea behind the proof of Proposition~\ref{goodprop} is to argue as in the proofs of the classical  conditions presented in the previous section (Corollary~\ref{KP} and Corollary~\ref{FPcrit}) --- but to choose a sequence of ansatz functions $ \vect\xi$ which, heuristically speaking, encode more of the structure of the exact solution to the Kirkwood-Salsburg equations (i.e., of the activity expansions $\vect \rho$) than the ansatz functions chosen in the proof of those corollaries. The intuition thereby is that ``less multiplicative" ansatz functions $\vect \xi$ provide better convergence criteria.}
\end{rem}

\begin{proof}[Proof of Proposition \ref{goodprop}]
Assume that \eqref{Newcond} holds and define $\vect \xi=(\xi_n)_{n\in\mathbb{N}},\ \xi_n:\mathbb{X}^n\to [0, \infty),$ by setting
\begin{align}
\xi_n(x_1,...,x_n):=\prod\limits_{1\leq i<j\leq n}\mathbbm{1}_{\{x_i\sim x_j\}}\prod\limits_{i=1}^n\mu(x_i)\prod\limits_{w\in\Gamma(\{x_1,\ldots ,x_n\})}\e^{\mu(w)}
\end{align}
for some $\mu$ satisfying \eqref{Newcond}, for any $n\in\mathbb{N}$ and every $(x_1,..,x_n)\in\mathbb{X}^n$. Thereby we again use the convention $\Gamma(\{w_1,\ldots,w_n\})=\cup_{i=1}^n\Gamma(w_i)$ for $\{w_1,...,w_n\}\subset\mathbb X$. As in the preceeding proofs of the classical sufficient conditions, we show that our choice of $\vect \xi=(\xi_n)_{n\in\mathbb{N}}$ satisfies the system of Kirkwood-Salsburg inequalities \eqref{eq:condition1} from Theorem \ref{thm:main1}. To lighten the notation, we choose the same selection rule $s$ as in the proof of Lemma \ref{FPcrit} and  denote by $X$ the set $\{x_2,...,x_n\}$. Notice that the left-hand side of \eqref{eq:condition1} is equal to

\begin{align*}
&z(x_1)\prod\limits_{1\leq i<j\leq n}\mathbbm{1}_{\{x_i\sim x_j\}}\prod\limits_{i=2}^n\mu(x_i)\prod\limits_{w\in\Gamma(X)}\e^{\mu(w)}\\
&\times \left(1+\sum\limits_{k\geq 1}\sum\limits_{Y=\{y_1,...,y_k\}}\prod\limits_{j=1}^{k}\mathbbm{1}_{\{y_j\nsim x_1\}}\prod\limits_{\substack{2\leq i \leq n\\1\leq j\leq k}}\mathbbm{1}_{\{x_i\sim y_j\}}\prod\limits_{1\leq i< j\leq k}\mathbbm{1}_{\{y_i\sim y_j\}}\prod\limits_{j=1}^k\mu(y_j)\prod\limits_{w\in\Gamma(Y)\cap\Gamma(X)^C}\e^{\mu(w)}\right).
\end{align*}

By Lemma $\ref{kl2}$, the assumption that $z$ satisfies the condition \eqref{Newcond} implies that $z$ also satisfies the inequality
\begin{align*}
&z(x_1)\left(1+\sum\limits_{k\geq 1}\sum\limits_{Y=\{y_1,...,y_k\}}\prod\limits_{j=1}^{k}\mathbbm{1}_{\{y_j\nsim x_1\}}\prod\limits_{\substack{2\leq i \leq n\\1\leq j\leq k}}\mathbbm{1}_{\{x_i\sim y_j\}}\prod\limits_{1\leq i< j\leq k}\mathbbm{1}_{\{y_i\sim y_j\}}\prod\limits_{j=1}^k\mu(y_j)\prod\limits_{w\in\Gamma(Y)\cap\Gamma(X)^C}\e^{\mu(w)}\right)\\&\leq
\mu(x_1)\prod\limits_{w\in\Gamma(x_1)\cap\Gamma(X)^C}\e^{\mu(w)}
\end{align*}
and thus, for our choice of $\vect\xi$, the left-hand side of \eqref{eq:condition1} is bounded from above by
\begin{align*}
&\prod\limits_{1\leq i<j\leq n}\mathbbm{1}_{\{x_i\sim x_j\}}\prod\limits_{i=1}^n \mu(x_i)\prod\limits_{w\in\Gamma(X)}\e^{\mu(w)}\prod\limits_{w\in\Gamma(x_1)\cap\Gamma(X)^C}\e^{\mu(w)}\\&=\prod\limits_{1\leq i<j\leq n}\mathbbm{1}_{\{x_i\sim x_j\}}\prod\limits_{i=1}^n\mu(x_i)\prod\limits_{w\in\Gamma(X\cup\{x_1\})}\e^{\mu(w)}=\xi_{n}(x_1,...,x_n),
\end{align*}
which --- by Theorem \ref{thm:main1} --- yields the claim of the proposition.\end{proof}

\begin{ex}
Consider non-overlapping (hard-core interactions) cubes on $\mathbb{Z}^2$ of side-length $2$ with translationally invariant activity $z$. The sufficient condition on $z$ for the absolute convergence of $\rho(z)$ given by the Fern{\'a}ndez-Procacci criterion provides the bound
\begin{align*}
z\leq \max\limits_{\mu\geq 0}\ \ \frac{\mu}{1+9\mu+16\mu^2+8\mu^3+\mu^4}\approx 0.057271,
\end{align*}
while our condition from Proposition \ref{goodprop} provides 
\begin{align*}
z\leq \max\limits_{\mu\geq 0}\ \ \frac{\mu e^{9\mu}}{1+9e^{9\mu}\mu+(6e^{15\mu}+8e^{16\mu}+2e^{17\mu})\mu^2+8e^{21\mu}\mu^3+e^{25\mu}\mu^4}\approx 0.060833.
\end{align*}
This corresponds to an improvement of approximately 6 percent.
\end{ex}

\subsection{Hard-core systems in the continuum} \label{sec:hardcore}

Let $\mathscr K'$ be the collection of non-empty compact subsets of $\R^d$, equipped with the Hausdorff distance and Borel $\sigma$-algebra~\cite[Chapter I-4]{matheron1975book}, and $\mathbb X\subset \mathscr K'$ a non-empty measurable subset. \blue{Here we want to additionally assume that $\mathbb X$ consists of bounded convex sets that are non-empty and regular closed, i.e., that are equal to the closure of its non-empty interior. Notice that such sets are compact and have finite positive Lebesgue measure that is equal to the Lebesgue measure of their interior.} In practice $\mathbb X$ will consist of easily described subsets. For example, when dealing with closed balls $B_r(x)\subset \R^d$ we may identify $\mathbb X$ with $\R^d\times \R_+$. Consider the hard-core interactions given by the potential $v(X,Y) := \infty \1_{\{X\cap Y \neq \varnothing \}}$, Mayer's $f$ function then is 
$$
	f(X,Y) = - \1_{\{X\cap Y\neq \varnothing\}}.
$$
Clearly the function is well-defined for general subsets $X,Y\subset \R^d$ that are not necessarily in $\mathbb X$, the domains of definition of the functions $\varphi_n^\mathsf T$ and  $\psi_{n,n+k}$ extend accordingly. 

For $D\subset \R^d$ and a measure $\lambda_z$ on $\mathcal X$ defined as in \eqref{def:lambda_z}, consider the formal series
\be \label{eq:TDdef}
	T(D;z) := 1+\sum_{k=1}^\infty \frac{1}{k!}\int_{\mathbb X^k} \varphi_{1+k}^\mathsf T(D,Y_1,\ldots, Y_k) \lambda_z^k(\dd \vect Y).
\ee
As is well-known\blue{~\cite[Eq. (3.12)]{faris2010species}}
\be \label{eq:tdz}
	T(D;z) = \exp\Biggl( -  \sum_{k=1}^\infty \frac{1}{k!}\int_{\mathbb X^k} \1_{\{\exists i:\, Y_i \cap D\neq \varnothing\}} \varphi_{k}^\mathsf T(Y_1,\ldots, Y_k) \lambda_z^k(\dd \vect Y) \Biggr) 
\ee
on the level of formal power series. 

Moreover, if the domain $D$ can be written as a finite union of disjoint objects $X_i\in\mathbb{X}$, say $D=X_1\cup\ldots\cup X_n$ for $n\in\mathbb{N}$, then the identity
\begin{align*}
1+\sum_{k=1}^\infty \frac{1}{k!}\int_{\mathbb X^k} \varphi_{1+k}^\mathsf T(D,Y_1,\ldots, Y_k) \lambda_z^k(\dd \vect Y)=\sum_{k=0}^\infty \frac{1}{k!} \int_{\mathbb X^k} \psi_{n,n+k}(X_1,\ldots,X_n,Y_1,\ldots,Y_k)\lambda_z^k(\dd \vect Y)
\end{align*}
holds by Lemma~\ref{lem:hc1} below and we recognize that the series $T(D,z)$ provide expansions for the reduced correlation functions in the sense that 
$$
	\rho_n(X_1,\ldots,X_n;z) = z(X_1)\cdots z(X_n) \1_{\{X_1,\ldots,X_n\ \text{disjoint}\}}\,  T(X_1\cup\cdots \cup X_n;z). 
$$
\blue{The absolute convergence of the expansions $\rho(z)$ for the correlation functions is implied by the absolute convergence of $T(D;z)$, i.e., by the pointwise convergence 
$$
1+\sum_{k=1}^\infty \frac{1}{k!}\int_{\mathbb X^k} \vert \varphi_{1+k}^\mathsf T(D,Y_1,\ldots, Y_k)\vert \lambda_z^k(\dd \vect Y)<\infty,
$$
for all domains $D$ that are unions of finitely many  objects $X_i \in \mathbb X$.} 

Assume we are given a systematic way to chop up the objects $X\in \mathbb X$ into smaller bits and pieces, called \emph{snippets} (\red{think: analogous to representing a polymer as a collection of monomers in the discrete setup of subset polymers}). That is, choose a positive number $\varepsilon>0$ and assume that
there is a designated collection  $\mathbb E_\eps$ of bounded Borel sets in $\R^d$, each of which is contained in some open ball of radius $\varepsilon$, and a \textit{chopping map}
$$
	C:\mathbb X\to \mathcal P(\mathbb E_\eps),\quad X\mapsto C(X) 
$$
such that for every $X\in \mathbb X$, $C(X) = \{E_1,\ldots, E_m\}$ with $m\in \N$ and $E_1,\ldots, E_m$ a set partition of $X$. We additionally want to assume that the topological boundary of every snippet  is a $\lambda$-null set, i.e., $\lambda(\overline{E}\backslash E^\circ)=0$ for all $E\in\mathbb E_\eps$ (where $\overline{E}$ denotes the topological closure and $E^\circ$ the interior of $E$).

Let $\mathbb D_\eps$ be the set of bounded domains $D\subset \R^d$ that can be written as the union of finitely many disjoint snippets. The empty set $D= \varnothing$ is an element of $\mathbb D_\eps$. For two disjoint subsets $D_0,D_1\subset\R^d$ with $D_0\neq \varnothing$ and for finitely many objects $Y_1,\ldots,Y_k\in\mathbb{X}$, $k\in\mathbb N$, set 
\be \label{eq:idef}
	I(D_0; D_1; Y_1,\ldots, Y_k):=\Bigl( \prod_{i=1}^k \1_{\{D_0\cap Y_i \neq \varnothing,\, D_1\cap Y_i = \varnothing\}}\Bigr)\Bigl( \prod_{1\leq i<j\leq k} \1_{\{Y_i\cap Y_j = \varnothing\}}\Bigr).
\ee

\begin{thm} \label{thm:main2}
	 Let $z(\cdot)$ be a non-negative activity function. 
	The following two conditions are equivalent: 
	\begin{enumerate} 
		 
		\item [(i)] There exists a non-negative map $a:\mathbb D_\eps \to \R_+$ such that for all $D\in \mathbb D_\eps$, the \red{map $\mathscr K'\ni F\mapsto a(D\cup F)$ is measurable} and the following system of inequalities is satisfied: For all non-empty $D\in \mathbb D_\eps$ with $C(D)=\{E_1,\ldots,E_m\}\subset\E_\eps$ for some $m\in\mathbb{N}$, there exists an $s\in\{1,\ldots,m\}$ such that --- setting $D':=D\backslash E_s$ --- we have
\begin{align*}
			\sum_{k=1}^\infty \frac{1}{k!}\int_{\mathbb X^k} I(E_s; D'; Y_1,\ldots, Y_k) \e^{a(D'\cup Y_1\cup \cdots \cup Y_k) - a(D')} \lambda_z^k(\dd \vect Y) 
			\leq \e^{a(E_s\cup D') - a(D')}-1. 
		\end{align*}
	\item [(ii)] $T(D;z)$ is absolutely convergent for all $D\in \mathbb D_\eps$.
	
	\end{enumerate} 
	Moreover, if one of the equivalent conditions (hence, both) holds true, then, for all $D\in \mathbb D_\eps$, we have
	\be \label{eq:abound}
		\bigl| \log T(D;z)\bigr|\leq \sum_{k=1}^\infty \frac{1}{k!}\int_{\mathbb X^k} \1_{\{\exists i:\, Y_i \cap D\neq \varnothing\}} \bigl| \varphi_{k}^\mathsf T(Y_1,\ldots, Y_k) \bigr| \lambda_z^k(\dd \vect Y) \leq a(D). 
	\ee
\end{thm}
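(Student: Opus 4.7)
The plan is to adapt the Kirkwood-Salsburg approach of Theorem~\ref{thm:main1} to the finer, snippet-level decomposition provided by the chopping map $C$. Intuitively, removing one snippet $E_s \in C(D)$ at a time, rather than a whole object, produces a Kirkwood-Salsburg recursion of finer granularity, and the system of inequalities in (i) is the abstract condition that the ansatz $\e^{a(D)}$ has to satisfy so that such a snippet-by-snippet peeling yields an effective upper bound. Throughout, I work with the ``absolute companion''
\begin{align*}
\bar S(D;z) := \sum_{k=1}^\infty \frac{1}{k!}\int_{\mathbb X^k} \1_{\{\exists i:\, Y_i\cap D\neq \varnothing\}}\, \bigl|\varphi_k^\mathsf T(Y_1,\ldots,Y_k)\bigr|\, \lambda_z^k(\dd \vect Y)
\end{align*}
of $-\log T(D;z)$ from \eqref{eq:tdz}; the first inequality in \eqref{eq:abound}, $|\log T(D;z)|\leq \bar S(D;z)$, is immediate from \eqref{eq:tdz} and the triangle inequality, so the remainder of the argument is devoted to the bound $\bar S(D;z)\leq a(D)$ and to its converse characterization.

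For the direction (i) $\Rightarrow$ (ii) I fix $a$ as in the hypothesis and, using that the snippet $E_s$ supplied by (i) yields the disjoint decomposition
\begin{align*}
\1_{\{\exists i:\, Y_i\cap D\neq \varnothing\}} = \1_{\{\exists i:\, Y_i\cap D'\neq \varnothing\}} + \1_{\{\forall i:\, Y_i\cap D' = \varnothing,\; \exists i:\, Y_i\cap E_s\neq \varnothing\}}
\end{align*}
with $D' := D\setminus E_s$, rewrite $\bar S(D;z) = \bar S(D';z) + \Delta(D',E_s)$ for a non-negative increment $\Delta(D',E_s)$. The crux is to prove $\Delta(D',E_s)\leq a(D) - a(D')$ and iterate; this should follow from a forest-graph identity (an instance of the combinatorial machinery developed in Section~3) that reorganizes the signed series defining $\e^{\Delta(D',E_s)} - 1$ as
\begin{align*}
\sum_{k=1}^\infty \frac{1}{k!}\int_{\mathbb X^k} I(E_s;D';Y_1,\ldots,Y_k)\, \e^{\bar S(D'\cup Y_1\cup\cdots\cup Y_k) - \bar S(D')}\, \lambda_z^k(\dd \vect Y),
\end{align*}
i.e.\ as a manifestly non-negative hard-core expansion in the integrand $I(E_s;D';\cdot)$. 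Once this is available, replacing $\bar S$ by $a$ on the right-hand side (a substitution justified via either a simultaneous induction on a well-ordered hierarchy of sets, truncation in $k$, or a fixed-point/monotonicity argument) together with (i) delivers $\e^{\Delta(D',E_s)} - 1 \leq \e^{a(D)-a(D')}-1$, and hence $\Delta(D',E_s) \leq a(D)-a(D')$.

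For the converse (ii) $\Rightarrow$ (i) I make the canonical choice $a(D) := \bar S(D;z)$: measurability of $F \mapsto a(D\cup F)$ is inherited from the integrand, and finiteness follows from (ii) via the disjoint-union identity preceding Lemma~\ref{lem:hc1}. With this choice $a(D) - a(D') = \Delta(D',E_s)$, and the equality form of the forest-graph identity above is precisely the inequality in (i) for any selection rule. The remaining ingredients --- the triangle inequality for \eqref{eq:abound}, the elementary bound $\e^x\geq 1+x$, and the inductive bookkeeping --- are routine, so the entire proof concentrates its technical difficulty in the forest-graph identity and in the monotonicity/inductive scheme required to apply it with a general $a$.
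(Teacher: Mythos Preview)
Your outline follows the paper's strategy closely: for (ii)~$\Rightarrow$~(i) you take $a:=\bar S=\log\tilde T(\cdot;z)$, and for (i)~$\Rightarrow$~(ii) you aim to dominate $\tilde T$ by $\e^a$ through the snippet-level Kirkwood--Salsburg recursion. The identity you call a ``forest-graph identity'' is in fact the fixed-point equation of Proposition~\ref{prop:cavityint} (derived from the recurrence of Lemma~\ref{lem:prec}); in the paper the name ``forest-graph equality'' is reserved for Proposition~\ref{forep}, which enters only indirectly via the alternating-sign property that justifies $\tilde T=\e^{\bar S}$.

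There is, however, a genuine gap in your treatment of (i)~$\Rightarrow$~(ii). Your identity for $\e^{\Delta(D',E_s)}-1$ already presupposes that the series $\bar S(D'\cup Y_1\cup\cdots\cup Y_k)$ are all finite, which is the conclusion to be established; and the substitution ``replace $\bar S$ by $a$ on the right-hand side'' requires the bound $\bar S\le a$ on domains $D'\cup Y_1\cup\cdots\cup Y_k$ that generically contain \emph{more} snippets than $D$. Hence there is no well-founded hierarchy of sets on which to induct, and a plain truncation in the summation index $k$ does not close the loop either, because one application of the recursion trades a single snippet $E_s$ for arbitrarily many new snippets coming from the $Y_i$'s. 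The paper breaks this circularity by introducing partial sums $\tilde T_N(D;z)$ carrying the constraint that the \emph{total snippet count} $|C(D)|+\sum_i|C(Y_i)|$ is at most $N$, and proving the exact recursion $\tilde T_{N+1}=e+\tilde\kappa_z^s\tilde T_N$ (Proposition~\ref{prop:cavityint}). Monotonicity of the linear operator $\tilde\kappa_z^s$ then gives $\tilde T_N\le\e^a\Rightarrow\tilde T_{N+1}\le\e^a$ by induction on $N$, and $N\to\infty$ yields $\tilde T\le\e^a$. Your ``fixed-point/monotonicity'' option is the right instinct, but making it rigorous amounts to exactly this snippet-count truncation; the other two options you list do not work here.
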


\subsection{Subset polymers}\label{sec:sub_poly}
Let $\mathbb X$ consist of the finite non-empty subsets of $\Z^d$ (or any other countable set), and let $\mathcal X = \mathcal P(\mathbb X)$ be the $\sigma$-algebra containing all subsets of $\mathbb X$. The reference measure $\lambda$ is simply the counting measure. 
The interaction is a pure hard-core interaction as in Section~\ref{sec:hardcore}. Notice that this setup is a special case of the abstract polymer setup introduced in Section~\ref{sec:abstract-polymers}.
For a finite set $D\subset \Z^d$, define $T(D;z)$ as in~\eqref{eq:TDdef}. 
In statistical physics $T(D;z)$ corresponds to the probability that no polymer intersects $D$. If $D$ is a polymer or a union of disjoint polymers, it corresponds to a \emph{reduced correlation function} in the sense of~\cite{gruber-kunz1971}.

Notice how in the case of subset polymers every polymer always can be ``chopped" in a canonical way --- into a disjoint collection of monomers. Those play the role of snippets from the previous section --- that simplifies the formulation of a criterion for absolute convergence of the activity expansions $\vect \rho$ (compare next result with Theorem \ref{thm:main2}).

\begin{thm}
	\label{thm:bfp}
	Let $(z(X))_{X\in \mathbb X}$ be a non-negative activity. 
	The following two conditions are equivalent:
	\begin{enumerate}
		\item [(i)] There exists a function $a(\cdot)$ from the finite subsets of $\Z^d$ to $[0,\infty)$ such that $a(\varnothing )=0$ and the following system of inequalities is satisfied: For all finite, non-empty subsets $D\subset\mathbb{Z}^d$ there exists an $x\in D$ such that --- setting $D':=D\backslash\{x\}$ --- we have
		\be \label{eq:discrete-nc}
			\sum_{\substack{Y\in \mathbb X:\\ Y\ni x,\, Y\cap D' =\varnothing}} z(Y) \e^{a(D'\cup Y) - a(D')} \leq \e^{a(D'\cup \{x\}) - a(D')}-1.
		\ee
		
		\item [(ii)] $T(D;z)$ is absolutely convergent for all finite subsets $D\subset \Z^d$. 
		
	\end{enumerate} 
	Moreover, if one of the equivalent conditions (hence, both) holds true, then, for all finite subsets $D\subset \Z^d$, we have
	\begin{align}\label{eq:abound_disc}
		\bigl| \log T(D;z)\bigr|\leq \sum_{k=1}^\infty \frac{1}{k!}\sum_{(Y_1,\ldots, Y_k) \in \mathbb X^k}  \1_{\{\exists i:\, Y_i \cap D\neq \varnothing\}} \bigl| \varphi_{k}^\mathsf T(Y_1,\ldots, Y_k) \bigr| z(Y_1)\cdots z(Y_k) \leq a(D). 
	\end{align}
\end{thm}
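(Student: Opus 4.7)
My plan is to establish Theorem~\ref{thm:bfp} by adapting the strategy of Theorem~\ref{thm:main1} to a monomer-selection rule. The key new ingredient is a \emph{monomer-peeling} Kirkwood-Salsburg identity for the absolute-value counterpart of $T$. For every finite $D\subset\Z^d$, set
\[
\tilde T(D;z) := 1 + \sum_{k\geq 1}\frac{1}{k!}\sum_{(Y_1,\ldots,Y_k)\in\mathbb X^k} \bigl|\varphi_{1+k}^\mathsf T(D,Y_1,\ldots,Y_k)\bigr|\, z(Y_1)\cdots z(Y_k),
\]
with $\tilde T(\varnothing;z) := 1$. The central combinatorial claim, to be extracted from the forest-graph equality and related identities of Section~3, is the exact identity
\[
\tilde T(D;z) \;=\; \tilde T(D';z) + \sum_{\substack{Y\in\mathbb X:\\ Y\ni x,\ Y\cap D'=\varnothing}} z(Y)\,\tilde T(D'\cup Y;z),
\]
valid for every non-empty finite $D$, every monomer $x\in D$ and $D':=D\setminus\{x\}$. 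This is the sign-flipped analog of the classical recursion $T(D;z)=T(D';z)-\sum_{Y\ni x,\,Y\cap D'=\varnothing}z(Y)T(D'\cup Y;z)$ (itself a consequence of the polymer partition-function identity $Z_{D'^c}=Z_{D^c}+\sum_Y z(Y)Z_{(D'\cup Y)^c}$); the $+$ sign arises from the hard-core sign-definiteness of Ursell functions, which gives $\tilde T(D;z)=T(D;-z)$.

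For (i)$\Rightarrow$(ii), assume $a(\cdot)$ satisfies~(i); rewrite (i) in the equivalent form $\e^{a(D')}+\sum_{Y\ni x,\,Y\cap D'=\varnothing}z(Y)\,\e^{a(D'\cup Y)}\leq\e^{a(D)}$. Use the identity to iterate: define $F_0\equiv 1$ and, for $D\neq\varnothing$, $F_{N+1}(D;z):=F_N(D';z)+\sum_{Y\ni x,\,Y\cap D'=\varnothing}z(Y)F_N(D'\cup Y;z)$ with $x = x_D \in D$ picked via~(i), and $F_{N+1}(\varnothing;z):=1$. A standard monotone-convergence argument shows $F_N\nearrow\tilde T$ pointwise, and a short induction on $N$ (simultaneous in $D$) yields $F_N(D;z)\leq\e^{a(D)}$ for every $N$ and $D$: the base case is trivial, and the step substitutes the inductive hypothesis into the defining recursion and applies the rearranged~\eqref{eq:discrete-nc}. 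Passing to the limit gives $\tilde T(D;z)\leq\e^{a(D)}$, whence absolute convergence of $T(D;z)$ follows. The bound~\eqref{eq:abound_disc} is then obtained by combining~\eqref{eq:tdz} with the sign-definiteness identity
\[
\sum_{k\geq 1}\frac{1}{k!}\sum_{(Y_1,\ldots,Y_k)\in\mathbb X^k}\1_{\{\exists i: Y_i\cap D\neq\varnothing\}}\,\bigl|\varphi_k^\mathsf T(Y_1,\ldots,Y_k)\bigr|\,z(Y_1)\cdots z(Y_k)=\log\tilde T(D;z)\leq a(D).
\]

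For (ii)$\Rightarrow$(i), absolute convergence gives $\tilde T(D;z)\in[1,\infty)$ for every finite $D$, so $a(D):=\log\tilde T(D;z)\in[0,\infty)$ is well-defined and $a(\varnothing)=0$. Dividing the monomer-peeling identity by $\tilde T(D';z)$ yields, for \emph{any} $x\in D$,
\[
\e^{a(D)-a(D')}=1+\sum_{\substack{Y\in\mathbb X:\\ Y\ni x,\,Y\cap D'=\varnothing}}z(Y)\,\e^{a(D'\cup Y)-a(D')},
\]
which is~\eqref{eq:discrete-nc} with equality. The principal obstacle is the monomer-peeling identity itself --- equivalently, the sign-definiteness of the hard-core Ursell functions required to identify $\tilde T(D;z)$ with $T(D;-z)$. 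The signed recursion for $T$ is elementary, but propagating the sign-flip to an arbitrary root set $D$ requires the graph-theoretic bookkeeping of Section~3. Once that identity is in place, both directions of the equivalence reduce to the routine iterative and algebraic arguments above.
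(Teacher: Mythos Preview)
Your proposal is correct and follows the same approach as the paper: both rest on the monomer-peeling identity for $\tilde T$ (the paper's Proposition~\ref{prop:delcont}), establish (i)$\Rightarrow$(ii) by an induction bounding successive approximants by $\e^{a(D)}$, and obtain (ii)$\Rightarrow$(i) by taking $a(D):=\log\tilde T(D;z)$. The one difference is cosmetic: the paper works with explicit series truncations $\tilde T_N$ (for which $\tilde T_N\to\tilde T$ is automatic) and proves they obey the recursion, whereas you define iterates $F_N$ of the recursion and must then argue $F_N\nearrow\tilde T$. That last step is not quite ``standard monotone convergence'': monotonicity and $F_0\leq\tilde T$ give only $\lim F_N\leq\tilde T$, while the reverse inequality requires knowing that every term of the series for $\tilde T(D)$ is eventually captured by some $F_N(D)$ --- which is exactly what the recursion for the truncations $\tilde T_N$ supplies (it yields $\tilde T_{N+1}\leq F_N$ by induction, since $\tilde T_1\leq 1=F_0$). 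So your argument is fine, but that step implicitly leans on the same partial-sum identity the paper makes explicit.
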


\noindent The theorem is similar to Claim~1 in~\cite[Section 4.2]{bissacot-fernandez-procacci2010}. As noted in~\cite{bissacot-fernandez-procacci2010}, Theorem~\ref{thm:bfp} allows for an easy recovery of the \emph{extended Gruber-Kunz criterion}. The criterion is named after Gruber and Kunz~\cite{gruber-kunz1971}, who proved a similar condition but with a strict inequality. See~\cite{fernandez-procacci2007} for a comparison of the Gruber-Kunz criterion to other classical conditions.

\begin{kor} \label{cor:GK}
Let $(z(X))_{X\in \mathbb X}$ be a non-negative activity. Suppose there exists some $\alpha \geq 0$ such that for all $x\in \Z^d$, 
\be\label{gens}
	\sum_{Y\ni x}z(Y)\,\e^{\alpha\mid Y|}\leq \e^{\alpha}-1.
\ee
Then $T(D;z)$ is absolutely convergent, for all finite subsets $D\subset \Z^d$. 
\end{kor}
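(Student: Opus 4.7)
The plan is to apply Theorem~\ref{thm:bfp} directly with a simple linear ansatz. Specifically, I will define
\[
    a(D) := \alpha \,|D|
\]
for every finite subset $D\subset \Z^d$ (and $a(\varnothing)=0$), and verify that the Kirkwood--Salsburg-type inequality~\eqref{eq:discrete-nc} is satisfied for \emph{any} selection $x\in D$.

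First, I would unpack the two sides of~\eqref{eq:discrete-nc} under this ansatz. Writing $D' = D\setminus\{x\}$, note that the sum on the left-hand side only involves polymers $Y\ni x$ with $Y\cap D'=\varnothing$, so in particular $Y\cap D' = \varnothing$ and hence $|D'\cup Y|=|D'|+|Y|$. This yields $a(D'\cup Y)-a(D')=\alpha|Y|$. Similarly, since $x\notin D'$, we have $a(D'\cup\{x\})-a(D')=\alpha$. So the required inequality becomes
\[
    \sum_{\substack{Y\ni x\\ Y\cap D'=\varnothing}} z(Y)\,\e^{\alpha |Y|}\;\le\; \e^{\alpha}-1.
\]

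Next, I would observe that the left-hand side is bounded above by the unrestricted sum $\sum_{Y\ni x}z(Y)\e^{\alpha |Y|}$, which by the hypothesis~\eqref{gens} is at most $\e^{\alpha}-1$. Therefore condition (i) of Theorem~\ref{thm:bfp} holds for any selection rule that picks some $x\in D$ (the choice is immaterial). Applying Theorem~\ref{thm:bfp} then yields the absolute convergence of $T(D;z)$ for every finite $D\subset \Z^d$, along with the bound $|\log T(D;z)|\le \alpha|D|$ as a byproduct.

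There is essentially no obstacle here: the whole point of having established the equivalence in Theorem~\ref{thm:bfp} is that classical sufficient conditions like Gruber--Kunz reduce to spotting a compatible ansatz for $a(\cdot)$, and the translation-invariant linear ansatz $a(D)=\alpha|D|$ is manifestly tailored to an assumption of the form~\eqref{gens}. The only thing worth double-checking is that restricting the sum to $Y\cap D'=\varnothing$ does not spoil the bound, which is obvious by non-negativity of the summands.
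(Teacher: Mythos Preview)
Your proof is correct and essentially identical to the paper's: both take the additive ansatz $a(D)=\alpha|D|$, use disjointness to reduce the exponents to $\alpha|Y|$ and $\alpha$, and then bound the restricted sum by the unrestricted one via non-negativity before invoking~\eqref{gens}. The only cosmetic difference is that you note explicitly that the choice of $x\in D$ is immaterial and record the bound $|\log T(D;z)|\le\alpha|D|$ as a byproduct.
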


\begin{proof}
Set $a(D):=\alpha|D|$, where $\alpha>0$ satisfies the inequality from \eqref{gens}. Because of the additivity of $a(\cdot)$, we have $a(D\cup Y) = a(D) + a(Y)$ for all finite, disjoint subsets $D,Y\subset \Z^d$. Therefore condition~\eqref{eq:discrete-nc} becomes 
	$$
		\sum_{\substack{Y\ni x:\\ Y\cap D =\varnothing}}z(Y)\, \e^{a(Y)} \leq \e^{a(\{x\})}-1,
	$$
which depends on $D$ only through the constraint $Y\cap D\neq\varnothing$ on the left-hand side. By the non-negativity of the activity $z$, it is clearly sufficient that
	$$
		\sum_{Y\ni x}z(Y)\, \e^{a(Y)} \leq \e^{a(\{x\})}-1,
	$$
	which holds true for all $x\in\mathbb{Z}^d$ because of~\eqref{gens}.
\end{proof}

Another  immediate consequence of Theorem~\ref{thm:bfp} is that convergence of cluster expansions implies exponential decay of the activities in the object size. Precisely, set 
$$
	V(D):= \sum_{\substack {Y\in \mathbb X:\\Y \cap D \neq \varnothing}} z(Y).
$$
Notice that if the activity is translationally invariant and not identically zero, one can choose an arbitrary polymer $X\in\mathbb{X}$ with positive activity, say $z_0> 0$, and obtain the bound
\begin{align}\label{poly_rateofdecay}
	V(D) \geq z_0 |D|.
\end{align}

\begin{thm} \label{thm:subset-necessary} 
	If $z(\cdot)$ is a non-negative activity and $T(D;z)$ is absolutely convergent for all finite subsets $D\subset \Z^d$, then necessarily 
	$$
		\sum_{Y \ni x} z(Y) \e^{V(Y)}  <\infty
	$$
	for all $x\in \mathbb{Z}^d$. 
\end{thm}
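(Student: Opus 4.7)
The plan is to invoke the (ii)$\Rightarrow$(i) direction of Theorem~\ref{thm:bfp} and then exploit the resulting recursion at the trivial level $D=\{x\}$. Since $T(\cdot;z)$ converges absolutely, that theorem supplies a function $a$ on the finite subsets of $\Z^d$ with $a(\varnothing)=0$ satisfying the Kirkwood--Salsburg type inequality in (i), as well as the pointwise bound~\eqref{eq:abound_disc} relating $a(D)$ to the absolute Ursell sum over polymer configurations that touch $D$.

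First I would apply condition (i) to the singleton $D=\{x\}$. The ``selection'' is forced: the only available element is $x$ itself, so $D' = \varnothing$, $a(D') = 0$, and the constraint $Y \cap D'=\varnothing$ is vacuous. The inequality therefore collapses to
$$
   \sum_{Y \ni x} z(Y)\, \e^{a(Y)} \;\leq\; \e^{a(\{x\})} - 1 \;<\; \infty.
$$
This is the key inequality driving the result; the only remaining task is to show that $e^{a(Y)}$ dominates $e^{V(Y)}$ termwise.

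Next I would compare $a(Y)$ to $V(Y)$ using~\eqref{eq:abound_disc}. Extracting the $k=1$ term from the series there, note that $\varphi_1^{\mathsf T}(Y_1)\equiv 1$, so the $k=1$ contribution is exactly
$$
   \sum_{Y_1 \in \mathbb X} \1_{\{Y_1 \cap Y \neq \varnothing\}}\, z(Y_1) \;=\; V(Y).
$$
All higher-$k$ summands are non-negative (absolute values of Ursell functions against non-negative activities), so the full series bound in~\eqref{eq:abound_disc} is at least $V(Y)$, giving $a(Y) \geq V(Y)$ for every finite $Y$.

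Combining the two observations yields the desired conclusion at once:
$$
   \sum_{Y \ni x} z(Y)\, \e^{V(Y)} \;\leq\; \sum_{Y \ni x} z(Y)\, \e^{a(Y)} \;\leq\; \e^{a(\{x\})} - 1 \;<\; \infty.
$$
There is no genuine obstacle here once Theorem~\ref{thm:bfp} is available: the only subtlety is recognising that the necessary direction of that theorem (which is the non-trivial half, being the content of the converse statement for hard-core systems) is exactly what is needed, and that the singleton case of the recursion decouples cleanly because $D'$ becomes empty.
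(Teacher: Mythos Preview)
Your proof is correct and follows essentially the same approach as the paper: invoke the (ii)$\Rightarrow$(i) direction of Theorem~\ref{thm:bfp}, evaluate condition~(i) at the singleton $D=\{x\}$ (so $D'=\varnothing$), and then use~\eqref{eq:abound_disc} to deduce $a(Y)\geq V(Y)$. Your write-up is in fact slightly more explicit than the paper's in justifying the last inequality by isolating the $k=1$ term of the Ursell sum.
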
 

\begin{proof}
	By condition (i) in Theorem~\ref{thm:bfp}, evaluated at $D'=\varnothing$, there exists a non-negative function $a(\cdot)$ such that 
	$$
		\sum_{Y \ni x} z(Y) \e^{a(Y)} \leq \e^{a(\{x\})} - 1< \infty.
	$$
	For any polymer $Y\in\mathbb{X}$, the value $a(Y)$ is necessarily larger than $V(Y)$  by \eqref{eq:abound_disc} and the claim follows. 
\end{proof} 

\noindent For translationally invariant systems, Theorem~\ref{thm:subset-necessary} says that if the activity expansions are absolutely convergent, then necessarily the activities are exponentially small in the size of the object --- by \eqref{poly_rateofdecay} we can observe that $z(X)  = O(\exp( - \blue{z_0} |X|))$ when $|X|\to \infty$. 
Let us emphasize that the necessary exponential decay is an intrinsic limitation of the activity expansion, which cannot be eliminated by tinkering with different sufficient convergence conditions. Rigorous results for one-dimensional and hierarchical models~\cite{jansen2015tonks, jansen2019hierarchical} suggest that the exponential decay is not needed for the convergence of the multi-species virial expansion, however for general systems this is so far an unproven conjecture.

\section{Combinatorial lemmas. Proof of Theorems~\ref{thm:main1}, \ref{thm:main2}, and~\ref{thm:bfp}}

\subsection{Forest partition schemes. Alternating sign property}

To obtain a better understanding of the series $\rho_n$ given by the generating functions of multi-rooted graphs, we now consider a particular way to construct the latter --- by taking a designated spanning forest and successively adding edges to it. This perspective onto multi-rooted graphs leads to a forest-graph equality analogous to the familiar tree-graph identity for connected graphs~\blue{\cite[Proposition 5]{fernandez-procacci2007}} and allows for a direct proof of an alternating sign property for the coefficients $\psi_{n,n+k}$ of $\rho_n$ in the case of repulsive interactions (i.e., for non-negative potentials). 

The forest-graph equality builds on the notion of  \textit{forest partition schemes} --- maps that assign spanning forests to multi-rooted graphs in $\mathcal D_{n,n+k}$ and thereby in a specific manner provide partitions of $\mathcal D_{n,n+k}$. 

In the following, we let $\mathcal F_{n,n+k}$ denote the set of forest graphs on the vertex set $[n+k]$ consisting of $n$ rooted trees, where the vertices $\{1,\ldots,n\}$ are the roots of the trees (recall that a forest is an acyclic graph and a tree is a connected acyclic graph).

\begin{definition}[Forest partition scheme] \label{def:forestpartition} 
	A \emph{forest partition scheme} is a family of maps $\pi_{n,k}:\ \mathcal D_{n,n+k}\to \mathcal F_{n,n+k}$ such that for all $n\in \N$, $k\in \N_0$, and all $F\in \mathcal F_{n,n+k}$, there exists a graph $R_{n,k}(F) \in \mathcal D_{n,n+k}$ with 
	$$
		\pi_{n,k}^{-1}\bigl( \{F\}\bigr) = \{G\in \mathcal D_{n,n+k} \mid E(F)\subset E(G)\subset E\bigl(R_{n,k}(F)\bigr)\}  =: [F, R_{n,k}(F)]. 
	$$ 
\end{definition} 

To lighten the notation, we introduced partition schemes as families of maps on uncoloured structures. Notice, however, that partition schemes may be defined on coloured structures and may be allowed to depend on the colouring of the vertex set. Therefore, one could introduce families of maps $\pi_{k,n}({\vect x_{[n]}})$, indexed additionally by colourings $\vect x_{[n]}\in\mathbb X^n$ of $[n]=\{1,...,n\}$. Same graphs on the vertex set $[n]$ with different colourings $\vect x_{[n]}$ of the vertices can be mapped onto different forests under such partition schemes.

The existence of forest partition schemes is ensured by the existence of a large class of tree partition schemes, e.g, the Penrose tree partition scheme (\blue{see~\cite{fernandez-procacci2007, ueltschi2017}; for coulouring-dependent schemes see also~\cite{temmel2014,procacci-yuhjtman2017}}).

\begin{ex}	
 A particular forest partition scheme can be defined as follows: For a given multi-rooted graph, construct a connected graph from it by adding a ghost-vertex and connecting it to every root directly by an edge. Then apply the Penrose partition scheme to the resulting connected graph to obtain a spanning tree of this connected graph. Finally, by removing the ghost vertex as well as every edge incident to it, one gets a spanning forest of the initial multi-rooted graph. For the map given by this construction, the characterizing properties of a forest partition scheme follow from the corresponding properties of the Penrose tree partition scheme. 
\end{ex}

Naturally, the choice of the Penrose tree partition scheme in the example above is somewhat arbitrary; any tree partition scheme which does not ``delete" any edge incident to the ghost vertex in the above construction yields a forest partition scheme via the same procedure.

\begin{prop}[Forest-graph equality]\label{forep}
	Let $(\pi_{n,k})_{n\in \N,\, k\in \N_0}$ be a forest partition scheme and \blue{let $(R_{n,k})_{n\in \N,\, k\in \N_0}$ provide the corresponding family of multi-rooted graphs as in Definition~\ref{def:forestpartition}}. 
	Then 
	$$
				\psi_{n,n+k}(x_1,\ldots,x_{n+k}) = \sum_{F\in \mathcal F_{n,n+k}}  \prod_{\{i,j\}\in E(F)} f(x_i,x_j) \prod_{\{i,j\}\in E(R_{n,k}(F))\setminus E(F)} \bigl( 1+ f(x_i,x_j)\bigr)
	$$
	for all $n\in \N$, $k\in \N_0$, and $(x_1,\ldots, x_{n+k})\in \mathbb X^{n+k}$. 
\end{prop}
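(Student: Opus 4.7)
The plan is to exploit the definition of a forest partition scheme in order to regroup the sum defining $\psi_{n,n+k}$ according to the forest $\pi_{n,k}(G)$ assigned to each multi-rooted graph. The starting point is the identity
\begin{align*}
\psi_{n,n+k}(x_1,\ldots,x_{n+k})
= \sum_{F \in \mathcal F_{n,n+k}} \sum_{G \in \pi_{n,k}^{-1}(\{F\})} \prod_{\{i,j\}\in E(G)} f(x_i,x_j),
\end{align*}
which holds because the preimages $\pi_{n,k}^{-1}(\{F\})$, $F\in\mathcal F_{n,n+k}$, form a partition of $\mathcal D_{n,n+k}$.

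By Definition~\ref{def:forestpartition} the inner sum is indexed by those $G$ whose edge set $E(G)$ satisfies $E(F)\subset E(G)\subset E(R_{n,k}(F))$. The key step is then to factor out the edges common to every such $G$, namely those in $E(F)$, and to parametrize the remaining freedom by an arbitrary subset $S$ of $E(R_{n,k}(F))\setminus E(F)$. Concretely, I would write
\begin{align*}
\sum_{G \in \pi_{n,k}^{-1}(\{F\})} \prod_{\{i,j\}\in E(G)} f(x_i,x_j)
= \prod_{\{i,j\}\in E(F)} f(x_i,x_j) \sum_{S \subset E(R_{n,k}(F))\setminus E(F)} \prod_{\{i,j\}\in S} f(x_i,x_j),
\end{align*}
and then apply the elementary product-sum identity
\begin{align*}
\sum_{S \subset E} \prod_{\{i,j\}\in S} f(x_i,x_j) = \prod_{\{i,j\}\in E} \bigl(1 + f(x_i,x_j)\bigr)
\end{align*}
to the edge set $E = E(R_{n,k}(F))\setminus E(F)$. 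Summing over $F\in \mathcal F_{n,n+k}$ yields the claimed equality.

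I do not anticipate a serious obstacle here: the argument is essentially bookkeeping once the partition scheme is in place. The only conceptual point to verify carefully is that every $G$ with $E(F)\subset E(G)\subset E(R_{n,k}(F))$ indeed lies in $\mathcal D_{n,n+k}$ (so that the parametrization by $S$ is legitimate), but this is immediate because $G\supset F$ and $F$ is already a root-connected spanning forest, hence $G$ is automatically root-connected.
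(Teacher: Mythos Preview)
Your proof is correct and follows exactly the same approach as the paper: partition $\mathcal D_{n,n+k}$ by the fibers of $\pi_{n,k}$, then sum over the interval $[F,R_{n,k}(F)]$ using the binomial product identity. Your exposition is in fact slightly more detailed than the paper's, which compresses the factoring and the product-sum step into a single displayed equality.
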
 

\begin{proof} 
	The proof is similar to the standard proof of the tree-graph equality~\blue{\cite[Proposition 5]{fernandez-procacci2007}}. We have 
	\begin{align*} 
		\psi_{n,n+k}(x_1,\ldots,x_{n+k})  &= \sum_{G\in \mathcal D_{n,n+k}} w\bigl(G,(x_1,\ldots,x_{n+k})\bigr) \\
			& = \sum_{F\in \mathcal F_{n,n+k}} \sum_{\substack{G\in \mathcal D_{n,n+k}:\\ \pi_{n,k}(G) = F}} w\bigl(G,(x_1,\ldots,x_{n+k})\bigr) \\
			& = \sum_{F\in \mathcal F_{n,n+k}}  \prod_{\{i,j\}\in E(F)} f(x_i,x_j) \prod_{\{i,j\}\in E(R_{n,k}(F))\setminus E(F)} \bigl( 1+ f(x_i,x_j)\bigr).  \qedhere
	\end{align*} 	
\end{proof} 

\noindent In the case of repulsive interactions, the forest-graph equality allows for a direct proof of the alternating sign property for the graph weights $\psi_{n,n+k}$. For $n=1$, it reduces to the well-known alternating sign property~\blue{\cite[Eq. (2.8)]{fernandez-procacci2007}} 
 \be \label{eq:altsign} 
 	\varphi_n^\mathsf T(x_1,\ldots, x_n) = (-1)^n \bigl| \varphi_n^\mathsf T(x_1,\ldots, x_n)\bigr|
 \ee
of the Ursell functions.

\begin{kor}\label{prop:alt-sign}
	For non-negative potentials, we have
	$$
		\psi_{n,n+k}(x_1,\ldots,x_{n+k}) = (-1)^{k} \bigl|\psi_{n,n+k}(x_1,\ldots,x_{n+k})\bigr|
	$$
	for all $n\in \N$, all $k\in \N_0$, and all $(x_1,\ldots,x_{n+k})\in \mathbb X^{n+k}$. 
\end{kor}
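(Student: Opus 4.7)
The plan is to apply the forest-graph equality from Proposition~\ref{forep} and then track signs using the non-negativity of the potential together with a simple edge-counting argument.

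First I would observe that for a non-negative pair potential $v$, Mayer's $f$ function satisfies $f(x,y) = \mathrm{e}^{-v(x,y)} - 1 \in [-1, 0]$ for all $x,y \in \mathbb X$. Consequently, $-f(x,y) \geq 0$ and $1 + f(x,y) \geq 0$ pointwise, so each factor in the forest-graph identity carries a definite sign.

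Next I would fix any forest partition scheme $(\pi_{n,k})$ with associated $(R_{n,k})$ and invoke the equality
$$\psi_{n,n+k}(x_1,\ldots,x_{n+k}) = \sum_{F \in \mathcal F_{n,n+k}} \prod_{\{i,j\} \in E(F)} f(x_i,x_j) \prod_{\{i,j\} \in E(R_{n,k}(F)) \setminus E(F)} \bigl(1 + f(x_i,x_j)\bigr).$$
The key combinatorial observation is that every $F \in \mathcal F_{n,n+k}$ consists of $n$ rooted trees covering $n+k$ vertices in total, hence has exactly $|E(F)| = (n+k) - n = k$ edges. Therefore
$$\prod_{\{i,j\} \in E(F)} f(x_i,x_j) = (-1)^k \prod_{\{i,j\} \in E(F)} \bigl| f(x_i,x_j)\bigr|,$$
and the second product over $E(R_{n,k}(F)) \setminus E(F)$ is a product of non-negative terms. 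Each summand on the right-hand side is therefore either zero or has sign $(-1)^k$.

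Since all terms in the sum have the same sign (or vanish), we can factor $(-1)^k$ out of the entire sum and take absolute values term by term. This immediately yields
$$\psi_{n,n+k}(x_1,\ldots,x_{n+k}) = (-1)^k \bigl|\psi_{n,n+k}(x_1,\ldots,x_{n+k})\bigr|,$$
as desired. I do not anticipate any obstacle here: once the forest-graph equality is in hand, the argument reduces to the trivial edge-counting observation that every spanning forest with $n$ fixed root components on $n+k$ vertices has exactly $k$ edges, which fixes the common sign across all terms.
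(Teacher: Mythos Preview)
Your proof is correct and follows essentially the same approach as the paper: apply the forest-graph equality, observe that every forest in $\mathcal F_{n,n+k}$ has exactly $k$ edges, and use $f\leq 0$, $1+f\geq 0$ to extract the common sign $(-1)^k$.
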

\begin{proof}
		Each forest $F\in \mathcal F_{n,n+k}$ has exactly $k$ edges. Indeed, the forest $F$ consists of trees $T_1,\ldots, T_n$. Let $m_i$ be the number of vertices of the tree $T_i$; thus $m_1+\cdots + m_n =n+k$.  Each tree $T_i$ has exactly $m_i -1$ edges, therefore the number of edges of the forest is given by $\sum_{i=1}^n (m_i -1) = k$. Since $f\leq 0$ and $1+f\geq 0$ for non-negative potentials, it follows that 
	$$
		\psi_{n,n+k}(x_1,\ldots,x_{n+k}) = (-1)^{k}  \sum_{F\in \mathcal F_{n,n+k}}  \prod_{\{i,j\}\in E(F)} |f(x_i,x_j)| \prod_{\{i,j\}\in E(R_{n,k}(F))\setminus E(F)} \bigl( 1+ f(x_i,x_j)\bigr),
	$$
	hence $(-1)^{k}\psi_{n,n+k}(x_1,\ldots, x_{n+k})\geq 0$. 
\end{proof} 

We will use the alternating sign property to establish that --- in the case of non-negative potentials --- condition (i) in Theorem~\ref{thm:main1} is not only sufficient but also necessary for absolute convergence of $ \rho$.

We conclude this section with a lemma that is not needed for the proof of Theorem~\ref{thm:main1} but enters the analysis  of hard-core models, see the proof of Lemma~\ref{lem:hc1} below. 

\begin{lemma}  \label{lem:keylem1}
	For all $n\in \N$,  all $k\in \N_0$, and  all $(x_1,\ldots, x_{n+k})\in \mathbb X^{n+k}$, we have 
	\begin{multline} \label{eq:psipart}
		\psi_{n,n+k}(x_1,\ldots,x_{n+k}) = \prod_{1\leq i < j \leq n}\bigl( 1+ f(x_i,x_j)\bigr) \\
			\times  \sum_{\{V_1,\ldots, V_r\}} \prod_{\ell=1}^r \Biggl( \prod_{\substack{1 \leq i \leq n,\\ j\in V_\ell}} \bigl(1+f (x_i,x_j)\bigr) -1 \Biggr) \varphi_{|V_\ell|}^\mathsf T\bigl((x_j)_{j\in V_\ell}\bigr),
	\end{multline} 
	where the sum runs over all set partitions $\{V_1,\ldots, V_r\}$ of non-root vertices $\{n+1,\ldots, n+k\}$. 
\end{lemma}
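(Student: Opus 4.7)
The plan is to partition $\mathcal D_{n,n+k}$ according to the connected-component structure on the non-root vertices. For any $G \in \mathcal D_{n,n+k}$, classify the edges of $G$ into three disjoint types: (a) root--root edges with both endpoints in $[n]$; (b) non-root--non-root edges with both endpoints in $\{n+1,\ldots,n+k\}$; and (c) cross edges joining a vertex in $[n]$ to a vertex in $\{n+1,\ldots,n+k\}$. Let $\{V_1,\ldots,V_r\}$ be the set partition of $\{n+1,\ldots,n+k\}$ induced by the connected components of the subgraph formed by the type (b) edges. Each $G \in \mathcal D_{n,n+k}$ determines a unique such partition, and this gives a partition of the index set of the sum defining $\psi_{n,n+k}$.

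First I would observe that, once the partition $\{V_1,\ldots,V_r\}$ is fixed, the root-connectedness condition defining $\mathcal D_{n,n+k}$ is equivalent to the following: for every $\ell \in \{1,\ldots,r\}$, at least one type (c) edge is incident to $V_\ell$. Indeed, within $V_\ell$ every vertex is reachable from every other by type (b) edges, and no type (b) edge leaves $V_\ell$ (that would contradict $V_\ell$ being a component), so a vertex $j \in V_\ell$ is joined by a path in $G$ to some root precisely when at least one cross edge touches $V_\ell$.

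Next I would evaluate the sum by first fixing the partition and then summing over the three edge types, which decouples because the three candidate edge sets are disjoint. The type (a) edges are unconstrained, so summing over all subsets of root--root pairs contributes the factor $\prod_{1\le i<j\le n}(1+f(x_i,x_j))$. For each block $V_\ell$, the restriction of type (b) edges to $V_\ell$ must be a connected graph on $V_\ell$ by the very definition of a connected component, so summing over such subgraphs yields $\varphi^{\mathsf T}_{|V_\ell|}\bigl((x_j)_{j\in V_\ell}\bigr)$. For each $V_\ell$, the type (c) edges between $V_\ell$ and $[n]$ form an arbitrary \emph{non-empty} subset of the bipartite edge set, contributing $\prod_{1\le i\le n,\,j\in V_\ell}(1+f(x_i,x_j))-1$ (all subsets minus the empty one). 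Multiplying these factors together and summing over all set partitions of $\{n+1,\ldots,n+k\}$ yields the claimed identity; the case $k=0$ corresponds to the empty partition and reduces to $\psi_{n,n}=\prod_{1\le i<j\le n}(1+f(x_i,x_j))$, consistent with $\mathcal D_{n,n}=\mathcal G_n$.

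The main obstacle is purely bookkeeping: verifying that the three edge-type splits genuinely decouple (immediate since they range over disjoint pair sets) and that the map $G \mapsto \{V_1,\ldots,V_r\}$ is a true partition of $\mathcal D_{n,n+k}$, so that no multi-rooted graph is over- or under-counted. Beyond these checks, no further combinatorial input is required and the identity follows directly.
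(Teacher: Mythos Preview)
Your proof is correct and follows essentially the same approach as the paper: both decompose each $G\in\mathcal D_{n,n+k}$ according to the connected components of the non-root subgraph, split the edge set into root--root, within-block, and block-to-root pieces, and sum independently to produce the three displayed factors. The only difference is expository---you start from $G$ and read off the partition, whereas the paper phrases it as a construction of $G$ from the data $(G_0,\{V_\ell\},\{G_\ell\},\{E_\ell\})$---but the underlying bijection is identical.
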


\begin{rem}
	The lemma allows for an alternative proof of the alternating sign property from Corollary~\ref{prop:alt-sign}, starting from the well-known alternating sign property of the Ursell function instead of the forest-graph equality. Indeed, the sign of every summand in the right-hand side of~\eqref{eq:psipart} is 
	$$
		(-1)^{r+ \sum_{i=1}^r (|V_i| -1)} = (-1)^{k}.
	$$
\end{rem}

\begin{proof}[Proof of Lemma~\ref{lem:keylem1}]
	For $n=1$, the lemma reduces to a well-known equality for the Ursell functions, see e.g. \cite[Eq. (5.13)]{friedli-velenik2018book}. For $n\geq 2$, the proof is similar, we provide the details for the reader's convenience. 
Every multi-rooted graph $G\in \mathcal D_{n,n+k}$ can be constructed in the following way. On the root set $\{1,\ldots, n\}$ pick an arbitrary graph $G_0$. On the complement of the root set do the following construction: Partition the set of the non-root vertices into $r$ sets $V_1,...,V_r$, $r\leq k$. For every block $V_\ell$, pick a connected graph $G_\ell$ with vertex set $V_\ell$, and in addition a non-empty set of edges 
	$
		E_\ell \subset \bigl\{ \{i,j\}\mid i \in \{1,\ldots, n\}, \, j\in V_\ell\}. 
	$
	Then the graph $G$ with vertices $1,\ldots, n+k$ and edge set given by the union of $E_1,\ldots, E_\ell$ and of  the edge sets of $G_0,G_1,\ldots, G_r$ is in $\mathcal D_{n,n+k}$, its graph weight is 
	$$
		w(G;x_1,\ldots, x_{n+k}) = w(G_0;x_1,\ldots,x_n) \prod_{\ell=1}^r \Biggl (\prod_{\{i,j\}\in E_\ell} f(x_i,x_j)\Biggr) w\bigl(G_\ell;(x_j)_{j\in V_\ell}\bigr).
	$$
	Summation over $G_0$ yields the factor $\prod_{1\leq i < j \leq n} (1+f(x_i,x_j))$.
	Summation over the connected graphs $G_\ell$ yields $\varphi_{|V_\ell|}^\mathsf T(\vect x_{V_\ell})$. 	Finally, summation over the edge sets $E_\ell$ yields the factor $\prod_{1\leq i \leq n,j \in V_\ell}(1+f(x_i,x_j)) - 1$. 
\end{proof}

\subsection{Kirkwood-Salsburg equations. Proof of Theorem~\ref{thm:main1}}\label{subsect:3.2.}

\blue{To prove our main result, Theorem~\ref{thm:main1}, we will show that the activity expansions $\vect \rho$ satisfy the Kirkwood-Salsburg inequalities and, moreover, that equality holds for non-negative interactions. To do so, we will need to establish  a recursive formula for the coefficients $\psi_{n,n+k}$ of $\rho_n$ given in terms of multi-rooted graphs.} 

\begin{lemma} \label{lem:recursion}
	Let $n\in \N$,  $k\in \N_0$ and $(x_1,\ldots, x_n) \in \mathbb X^n$. Abbreviate $s= s(\vect x)$. For $L\subset [k]$, let $\ell$ denote the cardinality of $L$. Then for all $(y_1,\ldots,y_k)\in \mathbb X^k$, 
	\begin{multline*}
			\psi_{n,n+k}(x_1,\ldots,x_n,y_1,\ldots, y_k) =\prod_{\substack{1\leq i \leq n:\\ i\neq s}} \bigl(1+f(x_{s}, x_i)\bigr) \sum_{L\subset [k]} \Bigl(\prod_{i\in L} f(x_ {s}, y_i)\Bigr) \\
				\times\psi_{n-1+\ell, n-1+k}\bigl(x'_2,\ldots, x'_{n}, (y_i)_{i\in L}, (y_j)_{j\in [k]\setminus L} \bigr).
	\end{multline*}
	Furthermore, if $n\geq 2$, 
	$$
		\psi_{n,n}(x_1,\ldots,x_n) =\prod_{\substack{1\leq i \leq n:\\ i\neq s}} \bigl(1+f(x_{s}, x_i)\bigr)\,   \psi_{n-1, n-1}\bigl(x'_2,\ldots, x'_{n} \bigr).
	$$
\end{lemma}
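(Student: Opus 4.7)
The plan is to partition $\mathcal D_{n,n+k}$ according to the local structure at the selected vertex $x_s$. Each graph $G\in \mathcal D_{n,n+k}$ is encoded by a triple $(A,L,G')$ where $A\subset\{1,\ldots,n\}\setminus\{s\}$ is the set of other roots adjacent to $x_s$ in $G$, $L\subset[k]$ is the set of non-root vertices adjacent to $x_s$, and $G'$ is obtained from $G$ by deleting the vertex $x_s$ together with all edges incident to it. Writing $\ell=|L|$, the edges of $G$ split into those incident to $x_s$ and those not, so
\[
w(G;x_1,\ldots,y_k) = \Bigl(\prod_{i\in A} f(x_s,x_i)\Bigr) \Bigl(\prod_{i\in L} f(x_s,y_i)\Bigr) w\bigl(G';\ldots\bigr).
\]

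The heart of the argument is to verify that this assignment is a bijection between $\mathcal D_{n,n+k}$ and the set of triples $(A,L,G')$ with $A\subset\{1,\ldots,n\}\setminus\{s\}$, $L\subset[k]$, and $G'$ a graph on the vertex set $\{1,\ldots,n+k\}\setminus\{s\}$ lying in $\mathcal D_{n-1+\ell,n-1+k}$ for the enlarged root set $(\{1,\ldots,n\}\setminus\{s\})\cup\{n+i : i\in L\}$. The nontrivial direction is the root-connectedness of $G'$: for any remaining non-root $y_j$ (so $j\notin L$), pick a path in $G$ from $y_j$ to some original root. Either the path avoids $x_s$ and thus survives in $G'$, or it first hits $x_s$ through a neighbour of $x_s$, which must be a non-root in $L$ by the definition of $L$; in that case the partial path up to that vertex lives entirely in $G'$ and terminates at a \emph{new} root. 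Conversely, any such triple reassembles into a graph in $\mathcal D_{n,n+k}$, since the non-roots of $G'$ remain path-connected to $\{1,\ldots,n\}$ via $G'$ and, if necessary, through $x_s$.

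Summing the factorised weight across the bijection then yields the recursion. The $A$-sum produces $\sum_A \prod_{i\in A} f(x_s,x_i)=\prod_{i\neq s}(1+f(x_s,x_i))$; the $L$-sum stays explicit with factor $\prod_{i\in L} f(x_s,y_i)$; and the remaining $G'$-sum equals, by definition, $\psi_{n-1+\ell,n-1+k}$ evaluated with the enlarged root set listed first. Appealing to the separate symmetry of $\psi_{n-1+\ell,n-1+k}$ in its root and non-root arguments, we may list them as $x'_2,\ldots,x'_n,(y_i)_{i\in L},(y_j)_{j\in [k]\setminus L}$, matching the lemma. The formula for $\psi_{n,n}$ is the case $k=0$: only $L=\varnothing$ contributes and $\ell=0$, collapsing the sum to the single term $\prod_{i\neq s}(1+f(x_s,x_i))\,\psi_{n-1,n-1}(x'_2,\ldots,x'_n)$. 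The only real subtlety is the root-connectedness bookkeeping above, where the promotion of $(y_i)_{i\in L}$ to roots precisely compensates for the destruction of paths through $x_s$; everything else is straightforward combinatorial decomposition.
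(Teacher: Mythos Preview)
Your proof is correct and follows exactly the combinatorial decomposition by the neighbourhood of the selected root that the paper alludes to; the paper itself does not prove the lemma but defers to \cite[Lemma~4.1]{jansen2019clustergibbs}, remarking only that ``the index set $L$ corresponds to the non-root vertices adjacent to the selected vertex $s$''. One minor imprecision: when you assert that the predecessor of $x_s$ on the path ``must be a non-root in $L$ by the definition of $L$'', this is only forced if the path is chosen so that its sole root vertex is its endpoint (e.g., a shortest path to the root set); otherwise the predecessor could be a root in $A$, but then it is already a root of $G'$ and the conclusion holds all the same.
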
 

\noindent The lemma is proven in~\cite[Lemma 4.1]{jansen2019clustergibbs} and holds true as well for pair potentials that may take negative values.  The index set $L$ corresponds to the non-root  vertices adjacent to the selected vertex $s$. \blue{Similar recurrent relation are well-known from the literature and have been employed in the context of both activity and density (virial) expansions for a long time (see, e.g., \cite[Eq. 5]{minlos-poghosyan1977}). }

We now want to translate the recurrence relation for coefficients $\psi_{n,n+k}$ from Lemma \ref{lem:recursion} into integral equations for partial sums and series.
\blue{For a non-negative activity function}, we set
$$
	\tilde \rho_n(x_1,\ldots, x_n;z):= z(x_1)\cdots z(x_n) \sum_{k=0}^{\infty} \frac{1}{k!}\int_{\mathbb X^k} \bigl|\psi_{n,n+k}(x_1,\ldots,x_n,\vect y)\bigr| \lambda_z^k(\dd \vect y).
$$
Notice that $\rho_n(x_1,\ldots,x_n;z)$ is absolutely convergent if and only if $\tilde \rho_n(x_1,\ldots,x_n;z)<\infty$, and, in the case of non-negative potentials, 
$$
	\tilde \rho_n(x_1,\ldots, x_n;z) = (-1)^n \rho_n(x_1,\ldots, x_n;-z)$$
holds due to the alternating-sign property from Corollary~\ref{prop:alt-sign}.

Let  $\vect{\tilde S}_N(z) = (\tilde S_{N,n}(\cdot;z))_{n\in \N}$ be the vector of partial sums given by 
$$
	\tilde S_{N,n}(x_1,\ldots,x_n;z):=z(x_1)\cdots z(x_n) \sum_{k=0}^{N-n} \frac{1}{k!}\int_{\mathbb X^k} \bigl| \psi_{n,n+k}(x_1,\ldots,x_n,\vect y)\bigr| \lambda_z^k(\dd \vect y)
$$
if $N \geq n$, and $0$ otherwise.  The summand for $k=0$  is to be read as $|\psi_{n,n}(x_1,\ldots,x_n)|$.

\begin{prop} \label{tKSmult}
	\red{For general pair-interactions}, we have 
	\begin{align}\label{tKSmult_ineq}
	\vect{\tilde  \rho}(z) \leq \vect e_z+ \tilde K_z^s \vect{\tilde \rho}(z)
	\end{align}
	and 
	$$
		\vect{\tilde S}_1(z) = \vect e_z,\quad \vect{\tilde S}_{N+1} \leq \vect e_z+\tilde K_z^s \vect{\tilde S}_{N}(z)\quad (N\geq 1). 
	$$
	Moreover, for non-negative potentials, we get the equalities 
	\begin{align}\label{tKSmult_eq}
		\vect{\tilde  \rho}(z) = \vect e_z+ \tilde K_z^s \vect{\tilde \rho}(z)
	\end{align}
	and 
	$$
		\quad \vect{\tilde S}_{N+1} = \vect e_z+\tilde K_z^s \vect{\tilde S}_{N}(z)\quad (N\geq 1). 
	$$

\end{prop}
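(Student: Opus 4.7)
The plan is to derive the claimed (in)equalities directly from the recursive formula for the coefficients $\psi_{n,n+k}$ supplied by Lemma~\ref{lem:recursion}. I first take absolute values in that recursion. Since $1+f(x_s,x_i) = \e^{-v(x_s,x_i)} \geq 0$ always, the prefactor $\prod_{i\neq s}(1+f(x_s,x_i))$ is non-negative and pulls out unchanged, and the triangle inequality applied to the sum over $L\subset[k]$ gives
\begin{align*}
\bigl|\psi_{n,n+k}(\vect x,\vect y)\bigr| \leq \prod_{i\neq s}(1+f(x_s,x_i))\sum_{L\subset[k]}\prod_{i\in L}|f(x_s,y_i)|\,\bigl|\psi_{n-1+\ell,n-1+k}(\vect x',(y_i)_{i\in L},(y_j)_{j\notin L})\bigr|.
\end{align*}
For non-negative potentials I would observe that $\prod_{i\in L}f(x_s,y_i)$ has sign $(-1)^\ell$ and, by the alternating-sign property in Corollary~\ref{prop:alt-sign}, $\psi_{n-1+\ell,n-1+k}$ has sign $(-1)^{k-\ell}$, so that every summand in the inner sum carries the same sign $(-1)^k$. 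The triangle inequality then holds with equality, which is how the equalities in the second half of the proposition arise.

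Next I would substitute this bound into the defining series of $\tilde\rho_n$ (resp.\ into the partial sum $\tilde S_{N,n}$), multiply by $z(x_1)\cdots z(x_n) = z(x_s)\prod_i z(x'_i)$, and use Tonelli (everything is non-negative) to exchange the $k$-summation with the $L$-summation and the integration against $\lambda_z^k$. By the symmetry of $\psi$ in the $y$-arguments, a standard combinatorial identity $\sum_{k\geq 0}\frac{1}{k!}\sum_{L\subset[k]} = \sum_{\ell,m\geq 0}\frac{1}{\ell!m!}$ (with $|L|=\ell$, $m=k-\ell$) reorganises the result into
\begin{align*}
\tilde\rho_n(\vect x;z) \leq z(x_s)\prod_{i\neq s}(1+f(x_s,x_i))\sum_{\ell=0}^{\infty}\frac{1}{\ell!}\int_{\mathbb X^\ell}\prod_{i=1}^{\ell}|f(x_s,w_i)|\,A_{n-1+\ell}(\vect x',\vect w;z)\,\lambda^\ell(\dd\vect w),
\end{align*}
where $A_{n'}(\vect u;z):=\sum_{m\geq 0}\frac{1}{m!}\int|\psi_{n',n'+m}(\vect u,\vect v)|\lambda_z^m(\dd\vect v)\cdot\prod_i z(u_i)$, and the $z(w_j)$-factors absorb into $\lambda_z^\ell$.

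For $n\geq 2$ the $\ell=0$ summand equals $\tilde\rho_{n-1}(\vect x';z)$ and the $\ell\geq 1$ summands reassemble exactly into the integral in $(\tilde K_z^s\vect{\tilde\rho})_n(\vect x)$. The case $n=1$ is where I expect the main (minor) technical subtlety: there $\vect x'$ is empty and $\tilde\rho_{n-1+\ell}$ is undefined for $\ell=0$, so I would treat that contribution by hand. Since $\mathcal D_{0,0}=\{\varnothing\}$ and $\mathcal D_{0,m}=\varnothing$ for $m\geq 1$ (there is no root for the non-roots to connect to), one has $\psi_{0,0}=1$ and $\psi_{0,m}=0$ for $m\geq 1$, so the $\ell=0$ contribution at $n=1$ is precisely $z(x_1)$, the $n=1$ component of $\vect e_z$. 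This yields the Kirkwood--Salsburg inequality $\vect{\tilde\rho}(z)\leq \vect e_z+\tilde K_z^s\vect{\tilde\rho}(z)$, with equality under the non-negativity hypothesis by the sign-alignment observation above. The statement for partial sums is obtained by the same manipulation, with the bookkeeping observation that truncating $k$ at $N-n$ translates under $k=\ell+m$ into $m\leq(N-1)-(n-1+\ell)$, i.e.\ exactly the truncation index of $\tilde S_{N-1,n-1+\ell}$; the $\tilde K_z^s$-sum over $\ell\geq 1$ and the base $\ell=0$ contribution then produce $\vect e_z+\tilde K_z^s\vect{\tilde S}_N(z)$, and $\vect{\tilde S}_1(z)=\vect e_z$ follows from $|\psi_{1,1}|=1$ and $\tilde S_{1,n}=0$ for $n\geq 2$.
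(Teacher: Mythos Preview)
Your proof is correct and follows essentially the same route as the paper: apply the triangle inequality to the recursion of Lemma~\ref{lem:recursion}, use symmetry of $\psi$ to collapse $\sum_{L\subset[k]}$ to a sum over $\ell=|L|$, reindex via $(\ell,m)=(\ell,k-\ell)$, and invoke Corollary~\ref{prop:alt-sign} to upgrade to equality for non-negative potentials; your explicit handling of $n=1$ via the convention $\psi_{0,0}=1$, $\psi_{0,m}=0$ is exactly what the paper leaves ``to the reader''. One small slip in the partial-sum bookkeeping: you write ``truncating $k$ at $N-n$ \dots\ $\tilde S_{N-1,n-1+\ell}$'' but then conclude ``produce $\vect e_z+\tilde K_z^s\vect{\tilde S}_N(z)$''; to match the displayed claim $\vect{\tilde S}_{N+1}\leq\vect e_z+\tilde K_z^s\vect{\tilde S}_N$ you should start from $\tilde S_{N+1,n}$ (truncation $k\leq N+1-n$) and land on $\tilde S_{N,n-1+\ell}$, as the paper does.
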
 



\begin{proof} 
	The equality $\vect{\tilde S}_1(z) = \vect e_z$ follows from the definition of $\vect {\tilde S}_1(z)$ and $\psi_{1,1}(x_1) =1$. For the recurrence relation, 
	we employ arguments from~\cite[Section~4]{jansen2019clustergibbs} and combine them with the alternating sign property from Corollary \ref{prop:alt-sign} to argue equality in the case of non-negative potentials. 
	
	Consider first $\tilde S_{N+1,n}(z)$ with $2\leq n\leq N+1$. Define
	$$
		\mathscr{R}_{n,\ell} (x_1,\ldots,x_n;y_1,\ldots, y_\ell) := z(x_s) \prod\limits_{i=2}^n(1+f(x_s,x'_i)) \prod_{i=1}^\ell \bigl| f(x_s, y_i)\bigr|.
	$$
	Fix $n\geq 2$. 
	Lemma~\ref{lem:recursion} and the triangle inequality yield 
	\begin{multline}\label{iifps}
		\bigl|\psi_{n,n+k}(x_1,\ldots,x_n,y_1,\ldots, y_k) \bigr|\\
			\leq \sum_{L\subset [k]} \mathscr{R}_{n,\ell} (x_1,\ldots,x_n; \vect y_L)\, \bigl|\psi_{n-1+\ell, n-1+k}\bigl(x'_2,\ldots, x'_{n}, \vect y_L, \vect y_{[k]\setminus L}\bigr) \bigr|,
	\end{multline}
	where $\ell = \#L$. When we integrate over $y_1,\ldots,y_k$, all sets $L$ with the same cardinality contribute the same, therefore 
	\begin{align*}
		& \frac{1}{k!}\int_{\mathbb X^k} \bigl|\psi_{n,n+k}(x_1,\ldots,x_n,y_1,\ldots, y_k) \bigr| \, \lambda_z^k(\dd \vect y) \\
		&\quad \leq\sum_{\ell=0}^k \frac{1}{\ell! (k-\ell)!} \int_{\mathbb X^k} \mathscr{R}_{n,\ell} (x_1,\ldots,x_n; y_1,\ldots,y_\ell)\, \bigl|\psi_{n-1+\ell, n-1+k}\bigl(x'_2,\ldots, x'_{n}, \vect y\bigr) \bigr|\, \lambda_z^k(\dd \vect y). 
	\end{align*} 
	Summing over $k=0,\ldots,N+1-n$  we obtain a double sum over $k$ and $\ell$. A change in summation indices from $(\ell,k)$ to $(\ell,m)=(\ell,k-\ell)$ yields 
	\begin{align*}
		\tilde S_{N+1,n}(x_1,\ldots,x_n;z) & \leq 
			\sum_{\ell=0}^{N+1-n} \frac{1}{\ell!} \int_{\mathbb X^\ell} \mathscr{R}_{n,\ell} (x_1,\ldots,x_n; y_1,\ldots,y_\ell)\Bigl\{\cdots \Bigr\} \lambda_z^\ell\bigl(\dd( y_1,\ldots y_\ell)\bigr),\\
			\{\cdots \} & =  \sum_{m=0}^{N+1-n-\ell} \frac{1}{m!} \int_{\mathbb X^m} \, \bigl|\psi_{n-1+\ell, n-1+\ell+m}\bigl(x'_2,\ldots, x'_{n}, \vect y\bigr) \bigr|\, \lambda_z^m\bigl(\dd (y_{\ell+1},\ldots y_{\ell+m})\bigr).
	\end{align*}
	The term in curly braces is nothing else but $\tilde S_{N,n-1+\ell}(x'_2,\ldots,x'_n,y_1,\ldots,y_\ell)$. For $\ell \geq N+1-n$, the function $\tilde S_{N,n-1+\ell}(\cdot;z)$ is identically zero. It follows that 
	$$
		\tilde S_{N+1,n}(x_1,\ldots,x_n;z) 
			\leq \sum_{\ell=0}^{\infty} \frac{1}{\ell!} \int_{\mathbb X^\ell} \mathscr{R}_{n,\ell} (x_1,\ldots,x_n; \vect y) \tilde S_{N,n-1+\ell}(x'_2,\ldots,x'_n,\vect y)
			 \lambda_z^\ell\bigl(\dd\vect y\bigr).
	$$
	This proves the inequality $\tilde S_{N+1,n}(\cdot;z)\leq(\tilde K_z^s S_{N}(z)\bigr)_n(\cdot)$. The cases $n\geq N+2$ and $n=1$ are treated in a similar fashion, we leave the details to the reader. For the equality  $\tilde S_{N+1,n}(\cdot;z)=(\tilde K_z^s S_{N}(z)\bigr)_n(\cdot)$ in the case of non-negative potentials, notice that for such potentials \eqref{iifps} holds with an equality --- due to the alternating-sign property from Corollary \ref{prop:alt-sign}.
    
	Finally, by passing to the limit $N\to \infty$ in the recurrence relation for $\vect{\tilde S}_N(z)$, the inequality \eqref{tKSmult_ineq} for $\vect{\tilde \rho}(z)$ follows (and in the case of non-negative potentials the fixed point equation \eqref{tKSmult_eq} is obtained). Notice that all exchanges of limits, sums and integrals are permitted by monotone convergence and because all terms involved are non-negative.
\end{proof} 

\begin{proof} [Proof of Theorem~\ref{thm:main1}] \label{proof:main1} For the implication (i) $\Rightarrow$ (ii), suppose there exists a sequence $\vect \xi = (\xi_n)_{n\in \N}$ of measurable non-negative functions $\xi_n:\mathbb X^n\to \R_+$ such that $ \vect e_z + \tilde K_z^s \vect \xi\leq \vect \xi$. We prove by induction over $N$ that $\vect{\tilde S}_N(z) \leq \vect \xi$ for all $N\in \N$. For $N=1$, we have 
	$$
		\vect{\tilde S}_1 = \vect e_z \leq \vect e_z + \tilde K_z^s \vect \xi\leq \vect \xi.
	$$
	If $\vect{\tilde S}_N\leq \vect \xi$ for some $N\in\N$, then 
	$$
		\vect{\tilde S}_{N+1} \leq \vect e_z+\tilde K_z^s \vect{\tilde S}_{N}(z) \leq \vect e_z + \tilde K_z^s \vect \xi\leq \vect \xi,
	$$
where the first inequality holds by Proposition~\ref{tKSmult} and the second one due to the inductive hypothesis and the monotonicity of $\tilde K_z^s$ on non-negative functions.
	 
	This completes the induction and proves $\vect{\tilde S}_N\leq \vect \xi$ for all $N$. Passing to the limit $N\to \infty$, we find $\vect{\tilde \rho} \leq \vect \xi$. This proves the absolute convergence  (ii) as well as the bound~\eqref{rhokbound}. \\
	
	Left to show is the implication \noindent (ii) $\Rightarrow$ (i)  under the additional assumption that the potential is non-negative. Suppose that $\rho_n(x_1,\ldots,x_n;z)$ is absolutely convergent, for all $n\in \N$ and $(x_1,\ldots,x_n)\in \mathbb X^n$. Then $\tilde \rho_n(x_1,\ldots,x_n;z)$ is finite everywhere and we may set 
	$$
		\xi_n(x_1,\ldots,x_n):= \tilde \rho_n(x_1,\ldots,x_n;z).
	$$
	Proposition~\ref{tKSmult} yields $\vect \xi = \vect e_z + \tilde K_z^s \vect \xi$ hence a fortiori $\vect \xi \geq  \vect e_z + \tilde K_z^s \vect \xi$, as a pointwise inequality for all vector entries. This proves (i).\end{proof} 

\subsection{Integral equations for hard-core models. Proof of Theorem~\ref{thm:main2}}\label{sec3.3}

In this section we specialize to hard-core systems in the continuum as in Section~\ref{sec:hardcore} and use capital letters for objects $X\in \mathbb X$. Let $\mathcal D^\mathrm{red}_{n,n+k}\subset \mathcal D_{n,n+k}$ be the collection of graphs $G\in \mathcal D_{n,n+k}$ that have no edges linking any two root vertices  $i,j\in \{1,\ldots,n\}$. Define $\psi_{n,n+k}^\mathrm{red}$ in a similar way as $\psi_{n,n+k}$ but with summation over graphs in $ \mathcal D^\mathrm{red}_{n,n+k}$. It is not difficult to check that 
$$
	\psi_{n,n+k}(X_1,\ldots,X_{n+k}) =\prod_{1\leq i < j \leq n} \bigl(1+ f(X_i,X_j)\bigr)\psi_{n,n+k}^\mathrm{red}(X_1,\ldots,X_{n+k}). 
$$
The reduced \blue{functions} $\psi_{n,n+k}^\mathrm{red}$ satisfy recurrence relations similar to Lemma~\ref{lem:recursion}. 
Define 
\be \label{eq:gdef}
	g(X_1;X_2,\ldots, X_n;Y_1,\ldots, Y_k):= 
\prod_{j=1}^{k}f(X_1,Y_j) \prod_{1\leq i< j\leq k}\bigl(1+f(Y_i,Y_j)\bigr) \prod_{\substack{2\leq i\leq n\\1\leq j\leq k}} \bigl(1+f(X_i,Y_j)\bigr).
\ee
Remember the indicator $I$ from~\eqref{eq:idef} and notice 
\be \label{eq:girel}
	g(X_1;X_2,\ldots, X_n;Y_1,\ldots, Y_k)=(-1)^k I(X_1; X_2\cup \cdots \cup X_n; Y_1,\ldots, Y_k).
\ee

\begin{lemma} \label{lem:redrec} For all $k,n\in \N$, we have
	\begin{multline*}
		\psi_{n,n+k}^\mathrm{red}(X_1,\ldots, X_n,Y_1,\ldots, Y_k) \\
			 = \sum_{L\subset [k]}g\bigl(X_s;X'_2,\ldots, X'_n;(Y_j)_{j\in L}\bigr)
			 \psi_{n-1+\ell,n-1+k}^\mathrm{red}\bigl(X'_2,\ldots, X'_n, (Y_j)_{j\in L}, (Y_j)_{j\in [k]\setminus L}\bigr).
	\end{multline*}
\end{lemma}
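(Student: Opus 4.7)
The plan is to reduce Lemma~\ref{lem:redrec} to Lemma~\ref{lem:recursion} via the elementary factorization
\begin{equation*}
	\psi_{m,m+k'}(Z_1,\ldots,Z_{m+k'}) = \prod_{1\leq i<j\leq m} \bigl(1+f(Z_i,Z_j)\bigr)\, \psi_{m,m+k'}^{\mathrm{red}}(Z_1,\ldots,Z_{m+k'}),
\end{equation*}
valid for any $m\in\N$ and $k'\in\N_0$. This identity holds pointwise because every graph in $\mathcal D_{m,m+k'}$ decomposes uniquely into its restriction obtained by deleting all root-root edges (which lies in $\mathcal D_{m,m+k'}^{\mathrm{red}}$, as root-connectedness is preserved under the removal of root-root edges) together with an arbitrary subset of root-root edges; each optional edge contributes a factor $(1+f)$ to the weighted sum.

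First I would apply Lemma~\ref{lem:recursion} to expand $\psi_{n,n+k}(X_1,\ldots,X_n,Y_1,\ldots,Y_k)$ as a sum over $L\subset[k]$. Next, I would substitute the factorization on the left-hand side to replace $\psi_{n,n+k}$ by $\prod_{1\leq i<j\leq n}(1+f(X_i,X_j))\,\psi^{\mathrm{red}}_{n,n+k}$, and similarly on the right-hand side to rewrite each $\psi_{n-1+\ell,n-1+k}(X'_2,\ldots,X'_n,(Y_j)_{j\in L},(Y_j)_{j\in[k]\setminus L})$ as a product of $(1+f)$ factors --- over the new root-root pairs $(X'_i,X'_j)$ for $2\leq i<j\leq n$, $(X'_i,Y_j)$ for $2\leq i\leq n$, $j\in L$, and $(Y_i,Y_j)$ for $i<j$ in $L$ --- times $\psi^{\mathrm{red}}_{n-1+\ell,n-1+k}(\ldots)$. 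Using
\begin{equation*}
	\prod_{1\leq i<j\leq n}\bigl(1+f(X_i,X_j)\bigr) = \prod_{i\neq s}\bigl(1+f(X_s,X_i)\bigr)\prod_{2\leq i<j\leq n}\bigl(1+f(X'_i,X'_j)\bigr),
\end{equation*}
the common prefactor $\prod_{i\neq s}(1+f(X_s,X_i))\prod_{2\leq i<j\leq n}(1+f(X'_i,X'_j))$ cancels on both sides. The surviving $(1+f)$ factors from the expansion of $\psi_{n-1+\ell,n-1+k}$ combine with $\prod_{i\in L}f(X_s,Y_i)$ from Lemma~\ref{lem:recursion} into precisely $g(X_s;X'_2,\ldots,X'_n;(Y_j)_{j\in L})$ as defined in~\eqref{eq:gdef}, yielding the desired recurrence.

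The main obstacle is that the cancellation step is not pointwise legitimate when some factor $(1+f(X_i,X_j))$ vanishes --- which occurs, for instance, in the hard-core setting to which the lemma is applied. I would resolve this by viewing both sides of the claimed identity as polynomial expressions in the formal indeterminates $\{f(Z_i,Z_j)\}$; the identity then amounts to a polynomial identity over $\R$, and since $\prod_{1\leq i<j\leq n}(1+f(X_i,X_j))$ is a non-zero element of the polynomial ring (an integral domain), cancellation is justified at the formal level. The resulting polynomial identity specializes to the desired numerical equality for any fixed choice of arguments $(X_1,\ldots,X_n,Y_1,\ldots,Y_k)$.
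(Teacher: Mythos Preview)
Your argument is correct, and the formal-polynomial device for handling the vanishing of $(1+f)$ factors is clean and fully rigorous: both sides of Lemma~\ref{lem:recursion} and of the factorization $\psi = \prod(1+f)\,\psi^{\mathrm{red}}$ are polynomial identities in the indeterminates $\{f_{ij}\}$, so cancellation of the common monic polynomial $\prod_{1\leq i<j\leq n}(1+f(X_i,X_j))$ is legitimate in $\R[\{f_{ij}\}]$ and the resulting identity specializes to any numerical values.

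The paper's (omitted) proof proceeds differently, by a direct combinatorial decomposition of $\mathcal D_{n,n+k}^{\mathrm{red}}$ parallel to that underlying Lemma~\ref{lem:recursion}: remove the root $s$, let $L$ be its set of neighbours (all non-roots, since $G$ is reduced), and observe that the remaining graph lies in $\mathcal D_{n-1+\ell,n-1+k}$ with no edges between the \emph{old} roots $X'_2,\ldots,X'_n$; splitting off the now-optional ``new root-root'' edges --- those joining $X'_i$ to $Y_j$ with $j\in L$, and those joining $Y_i$ to $Y_j$ with $i,j\in L$ --- produces the factors $\prod(1+f)$ in $g$ and leaves precisely $\psi_{n-1+\ell,n-1+k}^{\mathrm{red}}$. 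This route never divides and so avoids the polynomial-ring manoeuvre, at the cost of repeating the graph-theoretic bookkeeping. Your reduction to Lemma~\ref{lem:recursion} buys economy by reusing that lemma wholesale; the paper's intended approach buys a self-contained argument that works pointwise without any formal algebra.
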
 

\noindent The proof is based on combinatorial considerations similar to the proof of Lemma~4.1 in~\cite{jansen2019clustergibbs}, we leave the details to the reader. The lemma holds true for arbitrary subsets of $\R^d$, the $X_i$'s and $Y_j$'s need not be in $\mathbb X$. \blue{That applies to our next result as well.}

\begin{lemma}\label{lem:hc1} For all $k,n\in \mathbb{N}$, we have
$$	
	\psi_{n,n+k}^\mathrm{red}(X_1,\ldots,X_n,Y_1,\ldots,Y_{k}) = \varphi_{1+k}^\mathsf T(X_1\cup\cdots \cup X_n,Y_1,\ldots, Y_k).
$$
\end{lemma}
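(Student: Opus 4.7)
The plan is to apply Lemma~\ref{lem:keylem1} to both sides and compare term-by-term, with hard-core interactions entering through a one-line factorization of indicators.

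First I would note that the combinatorial construction in the proof of Lemma~\ref{lem:keylem1} — pick a graph $G_0$ on the root set, partition the non-root vertices into blocks $V_1,\ldots,V_r$, choose a connected graph on each block and a non-empty set of edges from each block to the roots — restricts without effort to $\psi_{n,n+k}^\mathrm{red}$: one must simply take $G_0$ to be the empty graph, so the prefactor $\prod_{1\le i<j\le n}(1+f(X_i,X_j))$ drops out. This yields
\begin{equation*}
\psi_{n,n+k}^\mathrm{red}(X_1,\ldots,X_n,Y_1,\ldots,Y_k)
 = \sum_{\{V_1,\ldots,V_r\}} \prod_{\ell=1}^r \Biggl( \prod_{\substack{1\le i\le n \\ j\in V_\ell}} \bigl(1+f(X_i,Y_j)\bigr) - 1 \Biggr) \varphi_{|V_\ell|}^\mathsf T\bigl((Y_j)_{j\in V_\ell}\bigr),
\end{equation*}
the sum ranging over set partitions of $\{1,\ldots,k\}$.

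Next I would apply the same lemma with $n=1$ and single root $D:=X_1\cup\cdots\cup X_n$, giving
\begin{equation*}
\varphi_{1+k}^\mathsf T(D,Y_1,\ldots,Y_k)
 = \sum_{\{V_1,\ldots,V_r\}} \prod_{\ell=1}^r \Biggl( \prod_{j\in V_\ell} \bigl(1+f(D,Y_j)\bigr) - 1 \Biggr) \varphi_{|V_\ell|}^\mathsf T\bigl((Y_j)_{j\in V_\ell}\bigr).
\end{equation*}
Matching the two expansions block by block, it suffices to verify, for every non-empty $V_\ell$, the identity
\begin{equation*}
\prod_{\substack{1\le i\le n \\ j\in V_\ell}} \bigl(1+f(X_i,Y_j)\bigr) = \prod_{j\in V_\ell}\bigl(1+f(D,Y_j)\bigr).
\end{equation*}
This is where the hard-core assumption enters and is the only non-routine step: using $1+f(X,Y)=\1_{\{X\cap Y=\varnothing\}}$, the product over $i$ for a fixed $j$ collapses as $\prod_{i=1}^n \1_{\{X_i\cap Y_j=\varnothing\}}=\1_{\{D\cap Y_j=\varnothing\}}=1+f(D,Y_j)$, after which multiplying over $j\in V_\ell$ finishes the identity.

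I do not foresee any real obstacle; the argument is two applications of Lemma~\ref{lem:keylem1} glued together by the indicator identity $\1_{A\cap C=\varnothing}\,\1_{B\cap C=\varnothing}=\1_{(A\cup B)\cap C=\varnothing}$. A minor point worth flagging, matching the sentence preceding the statement, is that the identity holds for arbitrary subsets $X_1,\ldots,X_n,Y_1,\ldots,Y_k\subset \R^d$ — neither disjointness of the $X_i$ nor membership in $\mathbb X$ is used anywhere in the argument.
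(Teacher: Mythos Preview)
Your proof is correct and follows essentially the same route as the paper: both apply the reduced partition identity~\eqref{eq:psipartred} to $\psi_{n,n+k}^\mathrm{red}$ and the $n=1$ case to $\varphi_{1+k}^\mathsf T(D,\cdot)$, then match block by block via the hard-core indicator identity $\prod_{i}\1_{\{X_i\cap Y_j=\varnothing\}}=\1_{\{D\cap Y_j=\varnothing\}}$. The paper phrases the key step as recognizing the bracket $\prod_{i,j}(1+f(X_i,Y_j))-1$ directly as $-\1_{\{\exists j\in V_\ell:\, Y_j\cap D\neq\varnothing\}}$, but this is the same observation you make, just written after subtracting~$1$.
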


\begin{proof}
	Revisiting the proof of Lemma~\ref{lem:keylem1}, we see that 
	\begin{align} \label{eq:psipartred}
		&\nonumber\psi_{n,n+k}^\mathrm{red}(X_1,\ldots,X_n,Y_1,\ldots,Y_k) \\&= \sum_{\{V_1,\ldots, V_r\}} \prod_{\ell=1}^r \Biggl( \prod_{\substack{1 \leq i \leq n,\\ j\in V_\ell}} \bigl(1+f (X_i,Y_j)\bigr) -1 \Biggr) \varphi_{|V_\ell|}^\mathsf T\bigl((Y_j)_{j\in V_\ell}\bigr),
\end{align}	
	where the sum runs over all set partitions $\{V_1,\ldots, V_r\}$ of non-root vertices $\{n+1,\ldots, n+k\}$. For hard-core interactions, the term in parentheses  is equal to minus the indicator that $D:=X_1\cup \cdots \cup X_n$ is intersected by at least one $Y_j$, $j\in V_\ell$. Thus 
	$$
		\psi_{n,n+k}^\mathrm{red}(X_1,\ldots,X_n,Y_1,\ldots,Y_k) = \sum_{\{V_1,\ldots, V_r\}} \prod_{\ell=1}^r \bigl(  - \1_{\{\exists j\in V_\ell:\, Y_j \cap D\neq\varnothing \}}\bigr) \varphi_{|V_\ell|}^\mathsf T\bigl((Y_j)_{j\in V_\ell}\bigr).
	$$
	Using again~\eqref{eq:psipartred}, we deduce 
	\begin{equation*}
		\psi_{n,n+k}^\mathrm{red}(X_1,\ldots,X_n,Y_1,\ldots,Y_k) = \psi_{1,1+k}(D,Y_1,\ldots, Y_k) = \varphi_{1+k}^\mathsf T (D,Y_1,\ldots,Y_k). \qedhere
	\end{equation*}
\end{proof} 

\begin{lemma} \label{lem:prec}
	Let $k\in\N$ and let $D_0,D_1$ be two disjoint subsets of $\R^d$, with $D_0\neq \varnothing$. Then 
	\begin{multline*}
		\varphi_{1+k}^\mathsf T (D_0\cup D_1,Y_1,\ldots, Y_k) \\
			= \sum_{L\subset [k]}  (-1)^\ell I\bigl( D_0;D_1;(Y_i)_{i\in L}\bigr)\varphi_{1+k-\ell}^\mathsf T \Bigl(D_1\cup \bigl( \bigcup_{i\in L} Y_i\bigr), (Y_j)_{j\in [k]\setminus L}\Bigr),
\end{multline*}
where the sum is taken over all subsets $L\subset[k]$ and $\ell$ denotes the cardinality of $L$. 
\end{lemma}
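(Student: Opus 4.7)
The plan is to sandwich the reduced recurrence of Lemma~\ref{lem:redrec} between two applications of Lemma~\ref{lem:hc1}, using the latter first to rewrite the Ursell function on the left-hand side as a reduced multi-rooted graph sum with two distinguished roots, and then to rewrite each summand produced by the recurrence back into Ursell form. The sign bookkeeping is then supplied by the identity $g = (-1)^\ell I$ recorded in~\eqref{eq:girel}.

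Concretely, the first step is to apply Lemma~\ref{lem:hc1} with $n = 2$ and roots $X_1 = D_0$, $X_2 = D_1$ to obtain
$$
\varphi_{1+k}^\mathsf T(D_0 \cup D_1, Y_1, \ldots, Y_k) = \psi_{2, 2+k}^\mathrm{red}(D_0, D_1, Y_1, \ldots, Y_k).
$$
The second step is to apply Lemma~\ref{lem:redrec} with the selection rule that picks the first root, so that $X_s = D_0$ and $X'_2 = D_1$, expanding the right-hand side above as
$$
\sum_{L \subset [k]} g\bigl(D_0; D_1; (Y_j)_{j \in L}\bigr)\, \psi_{1+\ell, 1+k}^\mathrm{red}\bigl(D_1, (Y_j)_{j \in L}, (Y_j)_{j \in [k]\setminus L}\bigr),
$$
where $\ell := |L|$. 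The third step is to apply Lemma~\ref{lem:hc1} once more, now in the opposite direction, with $n = 1 + \ell$ roots given by $D_1$ together with $(Y_j)_{j \in L}$, turning each factor $\psi^\mathrm{red}$ back into
$$
\varphi_{1+k-\ell}^\mathsf T\Bigl(D_1 \cup \bigcup_{j \in L} Y_j, (Y_j)_{j \in [k]\setminus L}\Bigr).
$$
Substituting the sign identity $g(D_0; D_1; (Y_j)_{j\in L}) = (-1)^\ell I(D_0; D_1; (Y_j)_{j\in L})$ from~\eqref{eq:girel} into the resulting sum then produces exactly the right-hand side of the claim.

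I do not expect any serious obstacle. The one point deserving a line of justification is that Lemmas~\ref{lem:redrec} and~\ref{lem:hc1} are here invoked with $D_0$ and $D_1$ that need not lie in $\mathbb X$; but the text explicitly remarks after each of those lemmas that their statements hold for arbitrary subsets of $\R^d$, since their proofs rely only on the definition of hard-core Mayer's $f$ extended to such subsets. Beyond this, nothing is required except careful bookkeeping of which $Y_j$'s get promoted to root vertices at the intermediate recurrence step; the disjointness hypothesis $D_0 \cap D_1 = \varnothing$ is used only implicitly, through its role in the definition of the indicator $I$.
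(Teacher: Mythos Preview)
Your proof is correct and follows exactly the approach indicated in the paper, which simply cites Eq.~\eqref{eq:girel}, Lemma~\ref{lem:redrec}, and Lemma~\ref{lem:hc1} without spelling out the details; you have supplied precisely those details in the right order.
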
 

\begin{proof}
	The claim of the lemma follows from Eq.~\eqref{eq:girel}, Lemma~\ref{lem:redrec} and Lemma~\ref{lem:hc1}.
\end{proof} 

\noindent For $D\in \mathbb D_\eps$ with $E_1,\ldots,E_n\in \mathbb E_\eps$ and $C(D)=\{E_1,\ldots, E_n\}$, let
$$
 \tilde T(D;z):=1+\sum_{k=1}^\infty \frac{1}{k!}\int_{\mathbb X^k} \bigl|\varphi_{1+k}^\mathsf T(D,Y_1,\ldots,Y_k)\bigr| \lambda_z^k(\dd \vect Y),
$$
$\tilde T_1(D;z):= \delta_{n,1}(\{E_1,\ldots,E_n\})$, and for $N \geq 2$, 
$$
	\tilde T_N\bigl(D;z\bigr)
	:=\mathbbm{1}_{\{n\leq N\}}+\sum_{k=1}^{N-n} \frac{1}{k!}\int_{\mathbb X^k}
		\1_{\{n+\sum_{i=1}^k |C(Y_i)|\leq N\}} \bigl|\varphi_{1+k}^\mathsf T(D,Y_1,\ldots,Y_k)\bigr| \lambda_z^k(\dd \vect Y).
$$
\blue{Although the value of $\tilde T(D;z)$ (if the series converges) does not depend on the choice of the chopping map, notice that the value of $\tilde T_N(D;z)$ clearly does depend on $C(D)=\{E_1,\ldots,E_n\}$ and not only on $E_1\cup\ldots\cup E_n$ --- due to the constraint on the number of snippets.} 

A selection rule is a map $s$ from collections of disjoint snippets $\mathbb D_\eps$ to $\mathbb E_\eps$ such that 
$$
	s(\{E_1,\ldots,E_n\}) \in \{E_1,\ldots, E_n\},
$$
i.e., $s(\cdot)$ selects one of the snippets. We use the suggestive but somewhat abusive notation 
$E_s $ for the selected snippet, and let $E'_2,\ldots, E'_n$ be any enumeration of the remaining snippets.  If $\xi(\cdot)$ is a function from $\mathbb D_\eps$ to $\R_+$ \blue{that satisfies the measurability assumption from Theorem~\ref{thm:main2}}, define a new function $\tilde \kappa_z^s\xi$ \red{(possibly assuming the value ``$\infty$")} by setting 
\begin{align}\label{def:contKS-op}
		\nonumber&(\tilde \kappa_z^s\xi)\bigl( D\bigr)
			:= \1_{\{n\geq 2\}}\, \xi\bigl(E'_2\cup\ldots\cup E'_n) \\
				&+\sum_{k=1}^\infty \frac{1}{k!} \int_{\mathbb X^k} I(E_s;E'_2\cup \cdots \cup E'_n; Y_1,\ldots,Y_k) \xi\Bigl(E'_2\cup\ldots E'_n\cup Y_1\cup\ldots\cup Y_k \Bigr)\lambda_z^k(\dd \vect Y)
\end{align}
for $D\in \mathbb D_\eps$ with $E_1,\ldots,E_n\in \mathbb E_\eps$ and $C(D)=\{E_1,\ldots, E_n\}$.
Furthermore, let $e(D):= \delta_{n,1}(\{E_1,\ldots,E_n\})$ be the indicator that $D$ is a single snippet.

Let $z$ be a non-negative activity such that for every non-empty $D\in \mathbb D_\epsilon$ the series $\tilde T\bigl(D;z\bigr)$ converges absolutely. \blue{\label{arg:m}Notice that the topology induced by the Hausdorff distance is equivalent to the myopic topology and the map $\mathscr K'\ni F\mapsto \1_{\{F\in\mathscr K'\vert\ F\cap B\neq \varnothing\}}(F)=-f(F,B)$ is measurable with respect to the myopic topology for all compact subsets $B$ (see~\cite{molchanov2005}). Measurability of $\mathscr K'\ni F\mapsto \tilde{T}(D\cup F ;z)$ for every $D\in \mathscr K'$ can be concluded, e.g., by representing the series $\tilde T(D\cup F;z)$ as in Equation \eqref{eq:tdz}. Since $\tilde{T}(D\cup F ;z)=\tilde{T}(\overline D\cup F ;z)$ for every $D\in \mathbb D_\eps$, its topological closure $\overline D$ in $\mathbb{R}^d$ and every $F\in \mathscr K'$ (by our assumptions that the boundaries of snippets are $\lambda$-null set), the measurability of $\mathscr K'\ni F\mapsto \tilde{T}(D\cup F ;z)$ for all $D\in \mathbb D_\eps$ follows}.

The next result is an analogue of Proposition~\ref{tKSmult}. 

\begin{prop} \label{prop:cavityint}
 We have 
	$$
		\tilde T\bigl(\cdot;z\bigr) = e(\cdot) +\tilde  \kappa_z^s \tilde T(\cdot;z).
	$$
Moreover $\tilde T_1(\cdot;z) = e(\cdot)$ and for $N\geq 1$
	$$
		\tilde T_{N+1}\bigl(\cdot;z\bigr) = e(\cdot) +  (\tilde{\kappa}_z^s \tilde T_N)\bigl(\cdot;z\bigr).
	$$
\end{prop}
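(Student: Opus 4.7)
The plan is to emulate the proof of Proposition~\ref{tKSmult}, with Lemma~\ref{lem:prec} playing the role of Lemma~\ref{lem:recursion}. Fix a non-empty $D\in\mathbb D_\eps$ with $C(D)=\{E_1,\ldots,E_n\}$ and write $D_1:=E'_2\cup\cdots\cup E'_n$ (so $D_1=\varnothing$ when $n=1$). I would apply Lemma~\ref{lem:prec} with $D_0:=E_s$. By Corollary~\ref{prop:alt-sign} applied with $n=1$, the Ursell function $\varphi^{\mathsf T}_{1+j}$ has constant sign $(-1)^{1+j}$ throughout its domain; together with the non-negativity of the indicator $I$, this forces every summand of that recursion to share the common sign $(-1)^{1+k}$, so that taking absolute values on both sides yields
\[
\bigl|\varphi^{\mathsf T}_{1+k}(D,Y_1,\ldots,Y_k)\bigr| = \sum_{L\subset[k]} I\bigl(E_s;D_1;(Y_i)_{i\in L}\bigr)\,\Bigl|\varphi^{\mathsf T}_{1+k-\ell}\Bigl(D_1\cup\textstyle\bigcup_{i\in L}Y_i,(Y_j)_{j\notin L}\Bigr)\Bigr|.
\]

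Next, I would multiply by $1/k!$, integrate against $\lambda_z^k$, and sum over $k\geq 1$. By symmetry of the integrand in the $Y_i$'s, the inner sum over subsets $L$ of cardinality $\ell$ reduces to $\binom{k}{\ell}$ identical contributions; the change of summation variables $(k,\ell)\mapsto(\ell,m)$ with $m:=k-\ell$ decouples the two integrations, and the inner sum
\[
\sum_{m\geq 0}\frac{1}{m!}\int_{\mathbb X^m}\bigl|\varphi^{\mathsf T}_{1+m}\bigl(D_1\cup Y_1\cup\cdots\cup Y_\ell,Z_1,\ldots,Z_m\bigr)\bigr|\,\lambda_z^m(\dd\vect Z)
\]
exactly reconstitutes $\tilde T(D_1\cup Y_1\cup\cdots\cup Y_\ell;z)$, using $\varphi^{\mathsf T}_1\equiv 1$ to account for the $m=0$ term. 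All interchanges of sums and integrals are legitimised by Tonelli's theorem, since every quantity is non-negative, and the measurability of $F\mapsto \tilde T(D_1\cup Y_1\cup\cdots\cup Y_\ell\cup F;z)$ is precisely what is established just before the proposition. Separating the $\ell=0$ and $\ell\geq 1$ contributions then yields
\[
\tilde T(D;z)-1 = \bigl(\tilde T(D_1;z)-1\bigr)+\sum_{\ell\geq 1}\frac{1}{\ell!}\int_{\mathbb X^\ell} I(E_s;D_1;Y_1,\ldots,Y_\ell)\,\tilde T\bigl(D_1\cup Y_1\cup\cdots\cup Y_\ell;z\bigr)\lambda_z^\ell(\dd\vect Y),
\]
where in the boundary case $n=1$ one invokes $f(\varnothing,\cdot)\equiv 0$ to deduce $\varphi^{\mathsf T}_{1+m}(\varnothing,Z_1,\ldots,Z_m)=0$ for every $m\geq 1$ and hence $\tilde T(\varnothing;z)=1$. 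Rearranging this identity and matching against the definition~\eqref{def:contKS-op} of $\tilde\kappa_z^s$ gives $\tilde T(D;z)=e(D)+(\tilde\kappa_z^s\tilde T)(D;z)$, because $\tilde T(D_1;z)=e(D)+\mathbf 1_{\{n\geq 2\}}\tilde T(D_1;z)$ holds in both boundary cases.

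The partial-sum identity is obtained by repeating the same manipulations in the presence of the truncation indicator $\mathbf 1_{\{n+\sum_{i=1}^k|C(Y_i)|\leq N+1\}}$. After the splitting $k=\ell+m$, this constraint rewrites as $\bigl(n-1+\sum_{i=1}^\ell|C(Y_i)|\bigr)+\sum_{j=1}^m|C(Z_j)|\leq N$, which is exactly the truncation defining $\tilde T_N\bigl(D_1\cup Y_1\cup\cdots\cup Y_\ell;z\bigr)$ relative to the outer variables $Z_1,\ldots,Z_m$. Similarly, the $k=0$ summand $\mathbf 1_{\{n\leq N+1\}}$ splits as $e(D)+\mathbf 1_{\{n\geq 2,\,n-1\leq N\}}$, matching $e(D)$ plus the $\ell=0$ part of $(\tilde\kappa_z^s\tilde T_N)(D;z)$. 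The base case $\tilde T_1(\cdot;z)=e(\cdot)$ is immediate from the definition of $\tilde T_1$.

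The main obstacle I anticipate is the bookkeeping around the boundary case $n=1$, $D_1=\varnothing$: one has to verify that the recursion degenerates correctly, checking that the $\ell=0$, $m\geq 1$ summands vanish identically (because of $f(\varnothing,\cdot)\equiv 0$) and that the constant $e(D)=1$ cleanly absorbs the would-be $\tilde T(\varnothing;z)$ term in the $\ell=0$ contribution. Beyond this, the argument is a mechanical manipulation of absolutely convergent non-negative series.
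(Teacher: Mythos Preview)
Your proposal is correct and follows precisely the approach the paper indicates: the paper's own proof is the single sentence ``The proposition follows from Lemma~\ref{lem:prec} by arguments similar to the proof of Proposition~\ref{tKSmult}, therefore the proof is omitted,'' and you have carried out exactly those arguments, including the sign analysis via Corollary~\ref{prop:alt-sign}, the symmetrisation and reindexing $(k,\ell)\mapsto(\ell,m)$, and the careful bookkeeping around the snippet-count truncation and the boundary case $n=1$. (One cosmetic point: the sign of $\varphi^{\mathsf T}_{1+j}$ is $(-1)^j$ rather than $(-1)^{1+j}$---the paper's display~\eqref{eq:altsign} contains a typo---but since your argument only uses that all summands share a \emph{common} sign, this has no effect on the proof.)
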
 

\noindent The proposition follows from Lemma~\ref{lem:prec} by arguments similar to the proof of Proposition~\ref{tKSmult}, therefore the proof is omitted.

\begin{proof}[Proof of Theorem~\ref{thm:main2}]
		To show the implication (ii) $\Rightarrow$ (i) suppose that $T(D;z)$ is absolutely convergent and thus $\tilde T(D;z)$ is convergent for all non-empty $D\in \mathbb D_\eps$. Moreover, $\tilde T(D;z)$ is uniformly bounded from below by $1$ and does not depend on the choice of the chopping map $C$. We set
		$$
			a(D):= \log \tilde{T}(D;z)\geq 0
		$$
		for non-empty $D\in \mathbb D_\eps$ and $a(\varnothing) := 0$. Furthermore, for $D\in \mathbb D_\eps$ with $E_1,\ldots,E_m\in \mathbb E_\eps$ and $C(D)=\{E_1,\ldots, E_m\}$, let $E_s\in \mathbb E_\eps$ be given by $E_{s(\{E_1,...,E_m\})}$ for some selection rule $s(\cdot)$ and set $D':=D\backslash E_s$. 
	Exploiting the fixed point equation for $\tilde T(\cdot;z)$ from Proposition~\ref{prop:cavityint}, we get 
	\begin{align*}
	\e^{a(D')} + \sum_{k=1}^\infty \frac{1}{k!}\int_{\mathbb X^k}I(E_s;D';Y_1,\ldots, Y_k) \e^{a(D'\cup Y_1\cup \cdots \cup Y_k)} \lambda_z^k(\dd \vect Y) \leq \e^{a(E_s\cup D')}.
	\end{align*}
	Item (i) of Theorem~\ref{thm:main2} follows upon multiplication with $\exp(-a(D'))$ on both sides (\red{in fact, we have shown that the inequality from item (i) holds for every choice of $s\in\{1,\ldots,m\}$}). \\
	
	\noindent  \blue{The implication (i) $\Rightarrow$ (ii) follows from Proposition~\ref{prop:cavityint} by an induction over $N$ similar to the proof of Theorem~\ref{thm:main1} on p.~\pageref{proof:main1}. Indeed, check that for the induction step it is sufficient that the corresponding system of Kirkwood-Salsburg inequalities holds for all finite disjoint unions of snippets (for any choice of the chopping map $C$, the snippet-size $\varepsilon>0$ and the selection rule $s$). Bound \eqref{eq:abound} is then established using the triangle inequality, the alternating sign property, and Eq.~\eqref{eq:tdz}.}
\end{proof}

\subsection{Recurrence relations for subset polymers. Proof of Theorem~\ref{thm:bfp}}
\label{sec:recc_sub}

\blue{Just as in the continuous case, we will show that the expansions $\tilde{T}$ solve a system of integral equations  in the discrete setup. We do so by providing a recursive formula for the corresponding coefficients $\varphi^\mathsf T_{1+k}$.}

\begin{lemma} \label{lem:phirecurr}
	For all finite subsets $D'\subset \Z^d$, all $x\in \Z^d\setminus D'$, and all $k\in \N$, 
	\begin{multline*}
		\varphi_{1+k}^\mathsf T(D'\cup \{x\}, Y_1,\ldots, Y_k) \\
			= \varphi_{1+k}^\mathsf T(D', Y_1,\ldots, Y_k)
			+ \sum_{i=1}^k \bigl( - \1_{\{Y_i\ni x,Y_i \cap D = \varnothing\}}\bigr) \varphi_{k}^\mathsf T \bigl(D'\cup Y_i, (Y_j)_{j\neq i}\bigr)
	\end{multline*} 
	with $\varphi_1^\mathsf T \equiv 1$. 
\end{lemma}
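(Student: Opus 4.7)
My plan is to derive the recursion as an immediate specialization of Lemma~\ref{lem:prec} to the discrete hard-core setting. Concretely, I would write the disjoint union $D'\cup\{x\}$ and apply the lemma with $D_0=\{x\}$ and $D_1=D'$, which is valid since $x\notin D'$. The resulting identity expresses $\varphi_{1+k}^\mathsf T(D'\cup\{x\},Y_1,\ldots,Y_k)$ as a sum over all subsets $L\subseteq [k]$, with the indicator $I(\{x\};D';(Y_i)_{i\in L})$ imposing that each $Y_i$ ($i\in L$) contains $x$, is disjoint from $D'$, and that the family $(Y_i)_{i\in L}$ is pairwise disjoint.

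Next I would argue that, in stark contrast to the continuum case, this sum collapses to only the terms $L=\varnothing$ and $L=\{i\}$ with $i\in[k]$. The reason is that the ``new piece'' $\{x\}$ is a single point: if $|L|\geq 2$, then for any two distinct $i,j\in L$ both $Y_i$ and $Y_j$ contain $x$, so $Y_i\cap Y_j\neq\varnothing$, making $I(\{x\};D';(Y_i)_{i\in L})=0$. Extracting the surviving terms then gives the claimed recurrence: the $L=\varnothing$ term yields $\varphi_{1+k}^\mathsf T(D',Y_1,\ldots,Y_k)$, and each singleton $L=\{i\}$ contributes $(-\1_{Y_i\ni x,\,Y_i\cap D'=\varnothing})\,\varphi_k^\mathsf T(D'\cup Y_i,(Y_j)_{j\neq i})$. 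The base case $\varphi_1^\mathsf T\equiv 1$ is immediate, since $\mathcal C_1$ consists of the single vertex with no edges.

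I do not anticipate any serious obstacle: Lemma~\ref{lem:prec} already bundles the combinatorial work (through Lemmas~\ref{lem:redrec} and~\ref{lem:hc1}), and the remark following Lemma~\ref{lem:redrec} explicitly allows arbitrary subsets of $\R^d$ as arguments, so the fact that we work in $\Z^d$ causes no trouble. A self-contained alternative is possible---one could expand $f(D'\cup\{x\},Y_i)=f(D',Y_i)-\1_{Y_i\ni x,\,Y_i\cap D'=\varnothing}$ directly in the graphical sum and observe that at most one such ``new'' factor can appear non-trivially because the two indicators on different vertices still force the point $x$ into several $Y_i$'s---but this simply rediscovers the same collapse in situ, so routing through Lemma~\ref{lem:prec} is the most economical path.
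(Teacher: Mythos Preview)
Your proposal is correct and matches the paper's own argument essentially verbatim: the paper also invokes (the discrete analogue of) Lemma~\ref{lem:prec} with $D_0=\{x\}$, $D_1=D'$ and observes that only the terms with $L=\varnothing$ or $|L|=1$ survive since two disjoint polymers cannot both contain the monomer $x$.
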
 

\begin{proof}
	\blue{Notice that the analogue of Lemma~\ref{lem:prec} holds in the discrete setup of subset polymers as well, in particular for the choice $D_0:=\{x\}$ and $D_1:=D'$. However, since two disjoint polymers $Y_1$ and $Y_2$ cannot both intersect the same monomer $x$, only the summands corresponding to $L=\varnothing$ and $\vert L\vert=1$ in the sum on the right-hand side of the identity in Lemma~\ref{lem:prec} provide non-trivial contributions.}
\end{proof} 

\blue{Let $D$ be a finite non-empty subset of $\Z^d$ and $z$ a non-negative activity function}. Set 
$$
	\tilde T(D;z):= 1+ \sum_{k=1}^\infty \frac{1}{k!}\sum_{(Y_1,\ldots, Y_k) \in \mathbb X^k}\bigl| \varphi_{1+k}^\mathsf T( D,Y_1,\ldots, Y_k) \bigr|z(Y_1) \cdots z(Y_k)
$$
and for $N\in \N$, 
\be \label{eq:tndef}
	\tilde T_N(D;z):= \mathbbm{1}_{\{\vert D\vert\leq N\}}+ \sum_{k=1}^\infty \frac{1}{k!}\sum_{(Y_1,\ldots, Y_k) \in \mathbb X^k} \1_{\{ |D| + \sum_{i=1}^k |Y_i| \leq N\}} \varphi_{1+k}^\mathsf T( D,Y_1,\ldots, Y_k) z(Y_1) \cdots z(Y_k).
\ee
Furthermore, we use the convention $\tilde T_N(\varnothing;z) =1$ for all $N\geq 1$.

\blue{Again, we lift the established recurrent relations on the level of coefficients (given by Lemma~\ref{lem:phirecurr}) to the level of partial sums and series, deriving a system of integral equations for those. The following result is an analogue of Proposition~\ref{tKSmult} and Proposition~\ref{prop:cavityint} for subset polymers.}

\begin{prop} \label{prop:delcont} Under the assumptions of Lemma~\ref{lem:phirecurr}, the identities
	$$
		\tilde T(D'\cup \{x\};z) = \tilde T(D';z) + \sum_{\substack{Y\ni x:\\ Y\cap D'=\varnothing}} z(Y) \tilde T(D'\cup Y;z)
	$$
	and for $N\in\N$,
	$$
		\tilde T_{N+1}(D'\cup \{x\};z) = \tilde T_N(D';z) + \sum_{\substack{Y\ni x:\\ Y\cap D'=\varnothing}} z(Y) \tilde T_N(D'\cup Y;z),
	$$
hold for any non-negative activity $z$.
\end{prop}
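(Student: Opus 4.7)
The plan is to lift the coefficient-level recurrence from Lemma~\ref{lem:phirecurr} to the level of the series $\tilde T$ and partial sums $\tilde T_N$, the key input being the alternating sign property of Corollary~\ref{prop:alt-sign} which lets us convert $\varphi_{1+k}^{\mathsf T}$ into $\bigl|\varphi_{1+k}^{\mathsf T}\bigr|$. Since hard-core interactions are in particular non-negative, Corollary~\ref{prop:alt-sign} yields $\varphi_{1+k}^{\mathsf T}(\cdot)=(-1)^k\bigl|\varphi_{1+k}^{\mathsf T}(\cdot)\bigr|$ and $\varphi_k^{\mathsf T}(\cdot)=(-1)^{k-1}\bigl|\varphi_k^{\mathsf T}(\cdot)\bigr|$. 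Multiplying the recurrence of Lemma~\ref{lem:phirecurr} by $(-1)^k$, the minus signs cancel against those coming from the alternating sign property, and we obtain the \emph{positive} recurrence
$$
\bigl|\varphi_{1+k}^{\mathsf T}(D'\cup\{x\},Y_1,\ldots,Y_k)\bigr|=\bigl|\varphi_{1+k}^{\mathsf T}(D',Y_1,\ldots,Y_k)\bigr|+\sum_{i=1}^k\1_{\{Y_i\ni x,\,Y_i\cap D'=\varnothing\}}\bigl|\varphi_{k}^{\mathsf T}\bigl(D'\cup Y_i,(Y_j)_{j\neq i}\bigr)\bigr|.
$$

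Next I would plug this into the definition~\eqref{eq:tndef} of $\tilde T_{N+1}(D'\cup\{x\};z)$. The indicator $\1_{\{|D'\cup\{x\}|+\sum_i |Y_i|\le N+1\}}$ equals $\1_{\{|D'|+\sum_i|Y_i|\le N\}}$ since $x\notin D'$, and similarly $\1_{\{|D'\cup\{x\}|\le N+1\}}=\1_{\{|D'|\le N\}}$. The first summand on the right-hand side of the recurrence then reassembles exactly to $\tilde T_N(D';z)$. For the second summand, I would use the symmetry of the integrand under permutations of $(Y_1,\ldots,Y_k)$ to replace the sum over $i\in[k]$ by a factor of $k$, renaming the selected polymer $Y_i$ to $Y$, and then perform the change of summation index $k\mapsto m+1$ to obtain
$$
\sum_{\substack{Y\ni x\\ Y\cap D'=\varnothing}}z(Y)\sum_{m=0}^\infty\frac{1}{m!}\sum_{(Y_1,\ldots,Y_m)\in\mathbb X^m}\1_{\{|D'|+|Y|+\sum_j|Y_j|\le N\}}\bigl|\varphi_{1+m}^{\mathsf T}(D'\cup Y,Y_1,\ldots,Y_m)\bigr|z(Y_1)\cdots z(Y_m).
$$
Because $Y\cap D'=\varnothing$ gives $|D'\cup Y|=|D'|+|Y|$, the truncation constraint is precisely that defining $\tilde T_N(D'\cup Y;z)$, so the last display equals $\sum_{Y\ni x,\,Y\cap D'=\varnothing} z(Y)\tilde T_N(D'\cup Y;z)$, finishing the partial-sum identity.

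The identity for $\tilde T$ follows by letting $N\to\infty$: every term involved is non-negative, so monotone convergence justifies exchanging the limit with the outer sum over $Y$ and, on the left-hand side, with the defining sum of $\tilde T(D'\cup\{x\};z)$. If either side is $+\infty$ the identity holds trivially; if one side is finite, the monotone passage shows that so is the other, and the values agree. I expect the only delicate bookkeeping to be the interplay between the truncation constraint and the shift $|D'|\mapsto|D'|+|Y|$, which is precisely what makes the scheme $N\mapsto N+1$ on the left match $N$ on the right; all remaining manipulations are routine consequences of Lemma~\ref{lem:phirecurr} and the alternating sign property.
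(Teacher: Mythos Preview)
Your proof is correct and follows essentially the same route as the paper's: both lift Lemma~\ref{lem:phirecurr} via the alternating sign property to a positive recurrence, match the truncation constraints (using $|D'\cup\{x\}|=|D'|+1$ and $|D'\cup Y_i|=|D'|+|Y_i|$ under the indicator), exploit the permutation symmetry of $\varphi_k^{\mathsf T}$ to collapse the sum over $i$, and pass to the limit $N\to\infty$ by monotone convergence. The only cosmetic difference is the order of operations---you take absolute values first and then insert the truncation indicator, whereas the paper multiplies the signed recurrence by the indicator first and then invokes the alternating sign property; the two are equivalent.
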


\begin{rem}
\blue{Notice that the first identity in Proposition~\ref{prop:delcont} is just a sign-flipped version of the standard Kirkwood-Salsburg equations for the reduced correlation functions found in~\cite{bissacot-fernandez-procacci2010}.}
\end{rem}

\begin{proof}
	Lemma~\ref{lem:phirecurr} yields 
	\begin{align*}
		&\1_{\{ |D'\cup\{x\}| + \sum_{i=1}^k |Y_i| \leq N+1\}} \varphi_{1+k}^\mathsf T( \red{D'\cup\{x\}},Y_1,\ldots, Y_k) \\
	&\qquad  =	\1_{\{ |D'| + \sum_{i=1}^k |Y_i| \leq N\}} \varphi_{1+k}^\mathsf T(D', Y_1,\ldots, Y_k) \\
	&\qquad \qquad 	+  \sum_{i=1}^k \bigl( - \1_{\{Y_i \ni x,\ Y_i \cap D'=\varnothing \}}\bigr) 	\1_{\{ |D'\cup Y_i| + \sum_{j\neq i} |Y_j| \leq N\}} \varphi_{k}^\mathsf T( D'\cup Y_i, (Y_j)_{j\neq i}).
	\end{align*}
	The proof of the recurrence relation for $\tilde T_N(\cdot;z)$ is concluded by exploiting the alternating sign property  of the Ursell functions,  summing over $k$ and  $Y_1,\ldots, Y_k$,  and exploiting the symmetry of $\varphi_k^\mathsf T$. The recurrence relation for $\tilde T(\cdot;z)$ follows by passing to the limit $N\to \infty$. 
\end{proof} 

\begin{proof}[Proof of Theorem~\ref{thm:bfp}]
To prove the implication $(i)\Rightarrow (ii)$, suppose that condition (i) is satisfied for some set function $a(\cdot)$. Proceeding as in the proof of Theorem \ref{thm:main1} again, we prove by induction over $N$ that
	\be \label{eq:tnd-bound}
			\tilde T_N(D;z) \leq \exp( a (D)),
	\ee
	 for all finite subsets $D\subset \Z^d$. For $N=1$, the inequality reads $\mathbbm{1}_{\{\vert D\vert\leq 1\}} \leq \exp(a(D))$ and it is true because $a(D)\geq 0$. Now, suppose it holds true for some $N\geq 1$ and all $D$. Let $\widehat{D}\subset \Z^d$ be finite. If $\widehat{D}$ is empty, then $\tilde T_{N+1}(\widehat{D};z) = 1 \leq \exp(a(\widehat{D}))$. If $\widehat{D}$ is not empty, let $x$ be any element of $\widehat{D}$ and let $D':=\widehat{D}\backslash\{x\}$. Then Proposition~\ref{prop:delcont} yields 
	$$
		\tilde  T_{N+1}(\widehat{D};z) = \tilde T_N(D';z) + \sum_{\substack{Y\ni x,\\ Y\cap D'=\varnothing}} z(Y) \tilde T_N(D'\cup Y;z). 
	$$	
	By the induction hypothesis and condition~\eqref{eq:discrete-nc}, 
	$$
		\tilde  T_{N+1}(\widehat{D};z)  \leq \e^{a(D')} +  \sum_{\substack{Y\ni x,\\ Y\cap D'=\varnothing}} z(Y) \e^{a(D'\cup Y)} \leq \e^{a(D'\cup \{x\})} = \e^{a(\widehat{D})}. 
	$$
	This completes the inductive proof of~\eqref{eq:tnd-bound}. Passing to the limit $N\to \infty$, we get $\tilde  T(D;z) \leq \exp(a(D)) <\infty$.

\noindent	To prove the converse implication $(ii)\Rightarrow (i)$, suppose that $T(D;z)$ is absolutely convergent for all finite subsets $D$. Then $\tilde T(D;z)<\infty$ and Proposition~\ref{prop:delcont}  yields 
	$$
		\tilde T(D\cup \{x\}; z) = \tilde T(D;z) + \sum_{\substack{Y\ni x,\\ Y\cap D=\varnothing}} z(Y) \tilde T(D\cup Y;z).
	$$
	Set $a(D):= \log \tilde T(D;z)$. Then $a(D)\geq 0$ because $\tilde T(D;z) \geq 1$, moreover 
	$$
		\e^{a(D\cup\{x\})} = \e^{a(D)} + \sum_{\substack{Y\ni x,\\ Y\cap D=\varnothing}} z(Y) \e^{a(D\cup Y)}
	$$
	and the inequality~\eqref{eq:discrete-nc} follows. 
\end{proof}

\red{Notice that the preceeding results of Section~\ref{sec:recc_sub} can be generalized by proving a more general version of Lemma~\ref{lem:phirecurr} --- a direct analogue of Lemma~\ref{lem:prec}, where we consider  two arbitrary finite subsets $D_0\subset\Z^d$ and $D_1\subset\Z^d$, instead of the special case where one of the subsets is a monomer. Naturally, one can view configurations of polymers not only as configurations of monomers but as configurations of disjoint snippets of arbitrary shape and derive from the generalized version of Lemma~\ref{lem:phirecurr} a system of Kirkwood-Salsburg equations different from the one in Proposition~\ref{prop:delcont}, equations which involve terms of higher order in the activity $z$. Those equations in turn lead to the following alternative for Theorem~\ref{thm:bfp}:}
\red{
\begin{thm}
	\label{thm:gen_bfp}
	Let $(z(X))_{X\in \mathbb X}$ be a non-negative activity. 
	The following two conditions are equivalent:
	\begin{enumerate}
		\item [(i)] There exists a function $a(\cdot)$ from the finite subsets of $\Z^d$ to $[0,\infty)$ such that $a(\varnothing )=0$ and the following system of inequalities is satisfied: For all finite, non-empty subsets $D\subset\mathbb{Z}^d$ there exists a subset $D_0\subset D$ such that --- setting $D_1:=D\backslash\{D_0\}$ --- we have
		\be\nonumber \label{eq:discrete-nc_gen}
			\sum\limits_{k\geq 1}\sum_{\{Y_1,\ldots,Y_k\}\subset X} z(Y_1)\ldots z(Y_k) \e^{a(D_1\cup Y_1\cup\ldots\cup Y_k) - a(D_1)} \leq \e^{a(D_1\cup D_0) - a(D_1)}-1,
		\ee
where the sum runs over sets of mutually disjoint polymers $\{Y_1,\ldots, Y_k\}\subset \mathbb X$ such that $Y_i\cap D_0\neq \varnothing$ and $Y_i\cap D_1= \varnothing$ for all $i\in\{1\ldots,k\}$.		
		\item [(ii)] $T(D;z)$ is absolutely convergent for all finite subsets $D\subset \Z^d$. 
		\end{enumerate} 
	Moreover, if one of the equivalent conditions (hence, both) holds true, then, for all finite subsets $D\subset \Z^d$, we have
	\begin{align*}\label{eq:abound_disc_gen}
		\bigl| \log T(D;z)\bigr|\leq \sum_{k=1}^\infty \frac{1}{k!}\sum_{(Y_1,\ldots, Y_k) \in \mathbb X^k}  \1_{\{\exists i:\, Y_i \cap D\neq \varnothing\}} \bigl| \varphi_{k}^\mathsf T(Y_1,\ldots, Y_k) \bigr| z(Y_1)\cdots z(Y_k) \leq a(D). 
	\end{align*}
\end{thm}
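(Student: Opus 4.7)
The plan is to mirror the argument for Theorem~\ref{thm:bfp}, replacing the single-monomer recurrence of Lemma~\ref{lem:phirecurr} with a subset-peeling recurrence that is nothing but the discrete transcription of Lemma~\ref{lem:prec}. Since Lemmas~\ref{lem:redrec} and~\ref{lem:hc1} apply to arbitrary hard-core subsets, the same argument yields, for any disjoint finite $D_0,D_1\subset\mathbb Z^d$ with $D_0\ne\varnothing$,
\begin{equation*}
\varphi_{1+k}^\mathsf T(D_0\cup D_1,Y_1,\ldots,Y_k)=\sum_{L\subset[k]}(-1)^{|L|}I\bigl(D_0;D_1;(Y_i)_{i\in L}\bigr)\,\varphi_{1+k-|L|}^\mathsf T\Bigl(D_1\cup\bigcup_{i\in L}Y_i,(Y_j)_{j\notin L}\Bigr).
\end{equation*}

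Next I would lift this to the series $\tilde T$ and its truncation $\tilde T_N$ from~\eqref{eq:tndef}, in perfect analogy with Proposition~\ref{prop:delcont}. Multiplying by $\prod_i z(Y_i)$, summing over $k$ and over ordered $(Y_1,\ldots,Y_k)$, factoring the sign $(-1)^{|L|}$ into the absolute values via the alternating sign property of Corollary~\ref{prop:alt-sign}, and re-indexing the inner sum by splitting polymers into the ones that intersect $D_0$ (call them $Z_1,\ldots,Z_\ell$) and those that do not ($W_1,\ldots,W_m$) converts the combinatorial factor $\binom{k}{\ell}/k!$ into $1/(\ell!\,m!)$. The size constraint $|D|+\sum|Y_i|\le N$ splits cleanly as $|D_1|+\sum|Z_i|+\sum|W_j|\le N-|D_0|$. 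The outcome is the generalized recurrence
\begin{equation*}
\tilde T(D_0\cup D_1;z)=\tilde T(D_1;z)+\sum_{k\ge 1}\sum_{\{Y_1,\ldots,Y_k\}}z(Y_1)\cdots z(Y_k)\,\tilde T\bigl(D_1\cup Y_1\cup\cdots\cup Y_k;z\bigr),
\end{equation*}
where the sum is over unordered sets of mutually disjoint polymers with $Y_i\cap D_0\ne\varnothing$ and $Y_i\cap D_1=\varnothing$, together with the analogous identity for $\tilde T_N$ on the left and $\tilde T_{N-|D_0|}$ on the right (valid for $N\ge|D_0|$).

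Both directions of the theorem now follow along the lines of Theorem~\ref{thm:bfp}. For (i)$\Rightarrow$(ii), I would induct on $N$ to show $\tilde T_N(D;z)\le\e^{a(D)}$: given a non-empty $\widehat D$, pick the (non-empty) subset $D_0$ supplied by~(i), set $D_1=\widehat D\setminus D_0$, and apply the truncated recurrence; since $|D_0|\ge 1$ the right-hand side involves $\tilde T_{N-|D_0|}$, so the induction hypothesis applies; bounding term by term by $\e^{a(\,\cdot\,)}$ and closing with~(i) (multiplied by $\e^{a(D_1)}$) yields $\tilde T_{N+1}(\widehat D;z)\le\e^{a(\widehat D)}$. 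Letting $N\to\infty$ gives absolute convergence and $\tilde T(D;z)\le\e^{a(D)}$. Conversely, if (ii) holds, set $a(D):=\log\tilde T(D;z)\ge 0$; the untruncated recurrence, rearranged, is precisely (i) with equality for every splitting $D=D_0\sqcup D_1$. Finally, the estimate on $|\log T(D;z)|$ follows from the triangle inequality applied to Eq.~\eqref{eq:tdz}, after which the alternating sign property identifies the resulting series with $\log\tilde T(D;z)\le a(D)$.

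The main obstacle I foresee is purely bookkeeping: carefully tracking the factorials and the disjointness constraints when re-indexing the $L\subset[k]$ sum into a sum over unordered families of mutually disjoint polymers, and verifying that the shift $N\mapsto N-|D_0|$ in the truncated recurrence is exactly what makes the induction close. Everything else reuses the machinery of Subsection~\ref{sec:recc_sub} without essential modification.
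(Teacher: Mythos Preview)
Your proposal is correct and follows precisely the route the paper indicates: the paper does not give a detailed proof of Theorem~\ref{thm:gen_bfp} but instead sketches it as generalizing Lemma~\ref{lem:phirecurr} to a discrete analogue of Lemma~\ref{lem:prec}, deriving the corresponding higher-order Kirkwood-Salsburg equations for $\tilde T$ and $\tilde T_N$, and then repeating the induction from the proof of Theorem~\ref{thm:bfp}. Your write-up supplies exactly these details, including the correct index shift $N\mapsto N-|D_0|$ in the truncated recurrence (there is a harmless notational slip where you write both $\tilde T_N$ and $\tilde T_{N+1}$ for the left-hand side in the induction step, but the argument is clearly sound since $|D_0|\ge 1$ ensures the right-hand side carries a strictly smaller index).
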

The details of the proof are left for the reader as an exercise. Notice that the sufficient condition for convergence given by Theorem~\ref{thm:gen_bfp} is more general than the one given by Theorem~\ref{thm:bfp}. However, all the proofs of sufficient conditions for systems of subset polymers in Section~\ref{sec:application} are using the special case of Theorem~\ref{thm:bfp}.
}

\section{Application to concrete hard-core models}\label{sec:application}

Our main results (Theorems~\ref{thm:main1},~\ref{thm:main2} and~\ref{thm:bfp}) provide characterizations of the domain of absolute convergence for the activity expansions $\rho_n(x_1,\ldots,x_n;z)$ from which well-known classical criteria are easily recovered (Corollaries~\ref{KP}, ~\ref{FPcrit} and~\ref{cor:GK}).  In this section, we illustrate how our convergence conditions  provide new, ``practitioner-type" sufficient conditions in concrete hard-core models, both discrete and continuous. \blue{Our goal here is not to improve on the best available conditions, but to provide upper bounds on the convergence radii that are of reasonable computational feasibility. In the one-dimensional setup of the Tonks gas, however, we are able to go as far as to recover the characterization of absolute convergence from~\cite{jansen2015tonks}.}

\subsection{Single-type subset polymers in $\Z^d$}

\blue{Consider the setup of subset polymers from Chapter~\ref{sec:sub_poly}}. Suppose there is some finite non-empty set $S\subset \Z^d$ and a scalar $z>0$ such that 
\begin{equation} \label{eq:z-single-type}
	z(X) = \begin{cases}
				z, &\quad X\text{ is a translate of }S,\\
				0, &\quad \text{otherwise}. 
			\end{cases} 
\end{equation} 
We call polymers with non-zero activity  \emph{active polymers}. Define 
$$
	V(D):= \bigl \lvert \{ X\in \mathbb X\mid z(X)>0,\, X\cap D \neq \varnothing\}\bigl\rvert,
$$
the number of active polymers intersecting a finite domain $D\subset \Z^d$. 
Notice $V(\{x\}) = |S|$, for all $x\in \Z^d$. 

\begin{thm}\label{Lgoof}
	Let $z(\cdot)$ be the activity function from~\eqref{eq:z-single-type}. 
	Suppose there exists $\alpha>0$ such that 
	\be \label{goof}
			|S|\,  \e^{\alpha V(S)} z\leq \e^{\alpha |S|}-1.
	\ee
	Then $T(D;z)$ is absolutely convergent for all finite subsets $D\subset \Z^d$, thus the activity expansions $\rho_n(x_1,\ldots,x_n;z)$ converge absolutely for all $n\in \N$ and all  $(x_1,\ldots, x_n)\in \mathbb X^n$. 
\end{thm}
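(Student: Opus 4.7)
The plan is to apply Theorem~\ref{thm:bfp} with the ansatz $a(D) := \alpha V(D)$, where $\alpha>0$ is the value provided by hypothesis~\eqref{goof}; clearly $a(\varnothing)=0$. For every non-empty finite $D\subset\Z^d$ I will pick an arbitrary $x\in D$ (the verification will not depend on which $x$ is chosen) and set $D':=D\setminus\{x\}$. The $|S|$ translates of $S$ that contain $x$ split into the $n:=V(D'\cup\{x\})-V(D')$ translates $Y_1,\ldots,Y_n$ disjoint from $D'$ and the remaining $|S|-n$ translates $Y_{n+1},\ldots,Y_{|S|}$ which meet $D'$. The case $n=0$ makes condition~(i) of Theorem~\ref{thm:bfp} trivial, so suppose $n\geq 1$; writing $m_j:=V(D'\cup Y_j)-V(D')$, the condition reduces to
\[
    \sum_{j=1}^{n} z\,\e^{\alpha m_j}\ \leq\ \e^{\alpha n}-1.
\]

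The key combinatorial step will be the bridge estimate $m_j \leq V(S)-(|S|-n)$ for each $j\in\{1,\ldots,n\}$. Indeed, $V(S)-m_j$ counts translates $Z$ of $S$ with $Z\cap Y_j\neq\varnothing$ and $Z\cap D'\neq\varnothing$, and each of $Y_{n+1},\ldots,Y_{|S|}$ is such a $Z$: it shares the point $x$ with $Y_j$ and meets $D'$ by construction, giving $|S|-n$ distinct such ``bridges''. Inserting this bound, it suffices to verify $n z\,\e^{\alpha(V(S)-|S|+n)}\leq \e^{\alpha n}-1$, equivalently $z\,\e^{\alpha V(S)}\leq \e^{\alpha|S|}(1-\e^{-\alpha n})/n$. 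The map $t\mapsto (1-\e^{-\alpha t})/t$ is non-increasing on $(0,\infty)$ (a short calculation reducing to $1+u\leq \e^u$), so since $n\leq |S|$ and hypothesis~\eqref{goof} is exactly this inequality at $t=|S|$, the required bound follows, and Theorem~\ref{thm:bfp} then delivers absolute convergence of $T(D;z)$ for all finite $D$.

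I expect the bridge estimate to be the only genuine hurdle: with the crude bound $m_j\leq V(S)$ one would only recover the Gruber--Kunz criterion of Corollary~\ref{cor:GK}, and the gain from replacing $|S|$ by $V(S)$ on the left-hand side of~\eqref{goof} precisely reflects the combinatorial observation that each of the $|S|-n$ translates through $x$ that are blocked by $D'$ simultaneously removes one translate from the ``neighbourhood count'' of every surviving $Y_j$.
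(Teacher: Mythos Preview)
Your proof is correct and follows essentially the same route as the paper: both apply Theorem~\ref{thm:bfp} with the ansatz $a(D)=\alpha V(D)$, derive the key exponent bound $m_j \leq V(S)-|S|+n$ (the paper obtains it from strong subadditivity of $V$ applied to $B=D'\cup\{x\}$, $C=Y_j$, while your ``bridge'' count is the same inequality verified directly), and finish via the monotonicity of $t\mapsto (1-\e^{-\alpha t})/t$ together with $n\leq |S|$.
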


\begin{rem}\blue{Notice that Theorem~\ref{Lgoof} improves on the upper bounds for the convergence radii given by Koteck{\'y}-Preiss and by Gruber-Kunz. The improvement over the Gruber-Kunz condition is achieved by a more sophisticated choice of the ansatz function $a$ in the proof of the theorem. However, although we do not have a general proof that the result by Fern{\'a}ndez-Procacci is stronger, notice that for \red{all non-pathological examples we considered (e.g., non-overlapping dimers or cubes)} Fern{\'a}ndez-Procacci provides better bounds than Theorem~\ref{Lgoof}.}
\end{rem}

\begin{ex} [Hypercubes]
	If $S= \{1,\ldots, k\}^d$ with $k\in \N$, condition~\eqref{goof} becomes 
	$$
		z\leq \sup_{\alpha > 0} \frac{\exp(\alpha k^d )-1}{k^d \exp( \alpha (2k-1)^d)}.
	$$
	Carrying out the optimization over $\alpha$ yields the condition 
	$$
		(2k-1)^d z \leq \Bigl(1-\frac{1}{(2 - 1/k)^d}\Bigr)^{(2-1/k)^d -1}. 
	$$
	In the limit $d\to \infty$ at fixed $k\geq 2$, the right-hand side converges from above to the familiar bound $1/\e$.
\end{ex}

\begin{proof} [Proof of Theorem~\ref{Lgoof}]
	We apply Theorem~\ref{thm:bfp} with $a(D):= \alpha V(D)$. We check that  $V(\cdot)$ is strongly subadditive. Let $B,C$ be finite subsets of $\Z^d$. Then for every polymer $X$,
	$$
		\1_{\{X\cap B\neq \varnothing \}} + 	\1_{\{X\cap C\neq \varnothing \}}
			\geq 		\1_{\{X\cap (B\cup C)\neq \varnothing \}} + 	\1_{\{X\cap (B\cap C) \neq \varnothing \}}.
	$$
	Indeed if $X$ intersects $B$ but not $C$ (or $C$ but not $B$), the inequality reads $1+ 0 \geq 1 +0$ and it is true. If $X$ intersects both $B$ and $C$, the inequality reads $1+1 \geq 1+ \1_{\{X\cap (B\cap C) \neq \varnothing\}}$ and it is true as well. Finally if $X$ intersects neither $B$ nor $C$, then both sides of the inequality vanish. Summing over all polymers $X$, we get 
	\be \label{eq:V2subadditive}
		V(B) + V(C) \geq V(B\cup C) + V(B\cap C). 
	\ee
	Now we turn to the criterion (i) from Theorem~\ref{thm:bfp}. 
	Condition~\eqref{eq:discrete-nc} for $D'= \varnothing$ reads 
	$$
		|S|\, z\, \e^{\alpha V(S)} \leq \e^{\alpha V(\{x\})} - 1,
	$$
	it is satisfied because of  $V(\{x\}) = |S|$ and the assumption~\eqref{goof}. For non-empty $D'$, we bound the left-hand side of condition~\eqref{eq:discrete-nc} with the help of the strong subadditivity. The inequality~\eqref{eq:V2subadditive} applied to $B = D'\cup \{x\}$ and $C= X$ yields 
	\be \label{tc1}
				V(D'\cup \{x\}) + V(X) \geq V(D'\cup X) + V(\{x\}),
	\ee
	for  $x\in \Z^d\setminus D'$, $x\in X$, and $X\cap D' =\varnothing$, and 
	\begin{align*}
		V(D'\cup X) - V(D') 
			& \leq 	V(D'\cup \{x\}) + V(X) - V(D') - V(\{x\})\\
			& =  	V(D'\cup \{x\}) - V(D') + V(S) - |S|. 
	\end{align*}
	This provides an $X$-independent bound for the exponent in the left-hand side of condition~\eqref{eq:discrete-nc}. The number of summands on the left-hand side of condition~\eqref{eq:discrete-nc} is given by the number of active polymers intersecting $x$ but not $D'$, which is equal to $V(D'\cup \{x\}) - V(D')$.  
	Thus to prove~\eqref{eq:discrete-nc} it suffices to show that 
	\be \label{eq:tc}
		\bigl(V(D'\cup \{x\}) - V(D')\bigr) z\, \e^{\alpha[V(D'\cup \{x\})  -  V(D') + V(S) - |S|]}	\leq \e^{\alpha[V(D'\cup \{x\})-V(D')]} - 1.
	\ee
	In view of condition~\eqref{goof}, the last inequality in turn follows once we check 
	$$
		\bigl(V(D'\cup \{x\}) - V(D')\bigr) \bigl( \e^{\alpha |S|} - 1\bigr) \e^{\alpha[V(D'\cup \{x\})  -  V(D') - |S|]} \leq |S|\bigl( \e^{\alpha[V(D'\cup \{x\})-V(D')]} - 1\bigr)
	$$
	or equivalently, 
	$$
		\frac{1- \exp(-\alpha |S|)}{|S|} \leq \frac{1 - \exp( -\alpha R )}{R},\quad R:= V(D'\cup \{x\}) - V(D').
	$$
	Because of the subadditivity of $V$, we have $R \geq V(\{x\}) = |S|$. The exponential map $x\mapsto \exp(- \alpha x)$ is convex and therefore the difference quotient is monotone increasing, i.e.,  $(\exp(-\alpha x) - 1)/x \leq (\exp( -\alpha y) - 1))/y$ whenever $0\leq x\leq y$. We apply the inequality to $x = R$ and $y= |S|$ and obtain the required bound. 
\end{proof}

\subsection{Single-type hard-core system in $\R^d$}
\blue{Consider a bounded convex shape $S\subset \R^d$ which is non-empty, regular closed and balanced (recall: A set $S\subset \R^d$ is called regular closed if and only if it equals the closure of its interior, i.e., $\overline{S^\circ}=S$, and it is called balanced if and only if $\alpha S\subset S$ for all $\vert\alpha\vert\leq 1$)}. We investigate the special case of the hard-core setup in the continuum from Section~\ref{sec:hardcore} where $\mathbb X$ consists of all translates $x+ S = \{x+ y\mid y\in S\}$. Let us further assume that both the activity and the reference measure $\lambda$ are translationally invariant.  Then we may identify $\mathbb X$ with $\R^d$, the reference measure $\lambda$ with the Lebesgue measure, and the activity function with a positive scalar,  $z(x) \equiv z>0$.

\blue{For an integrable function $h: \mathbb X\to\mathbb{R}$ we write 
\begin{align*}
\int_{\mathbb{X}} h(Z)\lambda(\dd Z)=\int_{\mathbb{R}^d} h(x+S)\dd x.
\end{align*}
}

 Write $|S|$ for the Lebesgue volume of the shape $S$ and define for Borel sets $D\subset \mathbb{R}^d$
\begin{align}\label{def:map_V}
V(D):= \int_{\R^d} \1_{\{(x+S)\cap D\neq \varnothing\}} \dd x. 
\end{align}
Notice $V(\{y\}) = |S|$, \blue{which is positive and finite by our assumptions on $S$}, and $V(S) = |S\oplus S|$ with $A\oplus B :=\{a+b\mid a\in A,\, b\in B\}$ the Minkowski sum. \blue{The latter identity holds since we assumed the set $S$ to be balanced which implies $\{(x+S)\cap S\neq \varnothing\}=S\oplus S$}. \red{Moreover, notice that $V$ --- as a function on $\mathbb{D}_{\varepsilon}$ with the convention $V(\varnothing)=0$ ---  satisfies the measurability assumption from Theorem \ref{thm:main2} by the same argument as formulated on p.~\pageref{arg:m} for $\tilde T$.}

We refer to such systems as \emph{single-type hard-core systems in the continuum.} In the language of stochastic geometry (\blue{see~\cite[Sections 3,4]{schneider2008stochastic}}), the associated Gibbs measure is a \emph{hard-core germ-grain model} with deterministic grain $S$ (the germs are the positions $x$).

\begin{thm}\label{cLgoof}
	Assume there exists $\alpha>0$ such that 
	\be\label{cgoof}
		  |S| \e^{\alpha V(S)} z <\e^{\alpha |S|}-1.
	\ee
	Then the activity expansions $\rho_n(x_1,\ldots,x_n;z)$ converge absolutely for all $n\in \N$ and all  $(x_1,\ldots, x_n)\in \mathbb \mathbb \mathbb X^n$. 
\end{thm}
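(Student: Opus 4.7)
The plan is to apply Theorem~\ref{thm:main2} with $a(D):=\alpha V(D)$, mirroring the discrete proof of Theorem~\ref{Lgoof}. Since $S$ is bounded and convex I fix $\eps\geq\mathrm{diam}(S)/2$ and adopt the trivial chopping $C(X):=\{X\}$, so that each object is its own snippet and elements of $\mathbb D_\eps$ are precisely the finite disjoint unions of translates of $S$; the measurability requirement of Theorem~\ref{thm:main2} is provided by the argument on p.~\pageref{arg:m}.

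The two key geometric ingredients are, first, the strong subadditivity of $V$, i.e.~$V(B)+V(C)\geq V(B\cup C)+V(B\cap C)$ for Borel $B,C\subset\R^d$, obtained exactly as in the proof of Theorem~\ref{Lgoof} by integrating over $y$ the pointwise indicator inequality (verified by case analysis); and second, the geometric lower bound $V(X_1\cap Y)\geq|S|$ for any two translates $X_1,Y$ of $S$ with non-empty intersection. The latter uses balancedness of $S$, which yields $V(A)=|A\oplus S|$ for every measurable $A$: picking $p\in X_1\cap Y$ gives $p+S\subset(X_1\cap Y)\oplus S$ and hence $V(X_1\cap Y)\geq|S|$.

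Iterating strong subadditivity with anchor $D'\cup X_1\cup Y_1\cup\cdots\cup Y_{j-1}$ (noting that $V(Y_j\cap X_1)\geq|S|$ at each step) yields the key estimate
\[
	V(D'\cup Y_1\cup\cdots\cup Y_k)-V(D')\leq R+k\bigl(V(S)-|S|\bigr)
\]
for pairwise disjoint translates $Y_j$ of $S$ each intersecting $X_1$ and disjoint from $D'$, where $R:=V(D'\cup X_1)-V(D')$ satisfies $R\leq V(S)$ by submodularity. With the selection rule $E_s:=X_1$, condition (i) of Theorem~\ref{thm:main2} thus reduces to verifying
\[
	\sum_{k\geq 1}\frac{\tau^k}{k!}\int_{\mathbb X^k} I(X_1;D';Y_1,\ldots,Y_k)\,\lambda^k(\dd\vect Y)\leq 1-e^{-\alpha R},\qquad \tau:=z\,e^{\alpha(V(S)-|S|)}.
\]

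The main obstacle lies in this final estimate. Its left-hand side is (up to the constant $1$) a hard-core grand-canonical partition function for $S$-translates of activity $\tau$ intersecting $X_1$ and avoiding $D'$. In the discrete setting of Theorem~\ref{Lgoof} the analogous Kirkwood-Salsburg sum involves only a single polymer $Y$ and collapses cleanly via the monotonicity of $x\mapsto(1-e^{-\alpha x})/x$; here, however, arbitrarily many pairwise disjoint $Y_j$ may jointly intersect $E_s$, and the naive bound $\int I\,\lambda^k(\dd\vect Y)\leq R^k$ is too crude. The resolution rests on (i) the hard-core packing constraint built into $I$, which truncates the sum at $k\leq\lfloor V(S)/|S|\rfloor$; (ii) the submodularity bound $R\leq V(S)$; and (iii) a careful packing/convexity estimate for the partition function which, together with the strict inequality in the hypothesis $|S|ze^{\alpha V(S)}<e^{\alpha|S|}-1$, provides just enough slack to close the inequality. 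The bound $|\log T(D;z)|\leq\alpha V(D)$ governing the absolute convergence then follows from~\eqref{eq:abound}.
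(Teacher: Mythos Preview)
Your geometric preliminaries are correct: strong subadditivity of $V$, the bound $V(X_1\cap Y)\geq |S|$, and the iterated estimate $V(D'\cup Y_1\cup\cdots\cup Y_k)-V(D')\leq R+k(V(S)-|S|)$ all go through. The genuine gap is exactly where you flag it: the ``final estimate'' is not merely an obstacle to be dispatched by a packing argument---with your trivial chopping $E_s=X_1$ it cannot be closed at all. Take $D'=\varnothing$, so $R=V(X_1)=V(S)$. The $k=1$ term of your left-hand side equals $\tau\,V(S)$, while the right-hand side is $1-\e^{-\alpha V(S)}$. The hypothesis is equivalent to $\tau\,|S|<1-\e^{-\alpha|S|}$; since $r\mapsto (1-\e^{-\alpha r})/r$ is decreasing and $V(S)>|S|$, this does \emph{not} imply $\tau\,V(S)\leq 1-\e^{-\alpha V(S)}$. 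For hard spheres with $V(S)/|S|=2^d$ and the optimizing $\alpha$ one gets $\tau\,V(S)$ close to $1$ while $1-\e^{-\alpha V(S)}\approx 1-1/\e$, so already the first term overshoots before any $k\geq 2$ contribution is added. Your asserted truncation $k\leq \lfloor V(S)/|S|\rfloor$ is also unjustified: the centers $y_j$ lie in $X_1\oplus S$, but the bodies $Y_j$ themselves need not, so no direct volume comparison gives that bound.

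The paper's proof avoids this by using the opposite extreme: snippets $E_1$ of diameter $\eps\to 0$. This buys two things. First, $R=V(D'\cup E_1)-V(D')\leq V(E_1)=|S|+O(\eps)$, so the convexity comparison $(1-\e^{-\alpha R})/R\geq (1-\e^{-\alpha|S|})/|S|$ from the discrete proof of Theorem~\ref{Lgoof} becomes available (up to $O(\eps)$). Second, and crucially, two disjoint translates of $S$ can both intersect an $\eps$-snippet only on a set of $\lambda$-measure $O(\eps)$ (this is the estimate~\eqref{eq:grainbound1}), so all $k\geq 2$ contributions are $O(\eps)$ rather than $O(1)$. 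The strict inequality in~\eqref{cgoof} then absorbs the $O(\eps)$ corrections for $\eps$ small enough. With a whole translate of $S$ as snippet neither mechanism is available, and the argument does not close.
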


\begin{rem}\label{rem:single_type_cont}\blue{Again, notice that  while Theorem~\ref{cLgoof} --- just as its discrete analogue Theorem~\ref{Lgoof}) --- improves on the Koteck{\'y}-Preiss condition, it is in \red{all the cases we considered as examples} weaker than Fern{\'a}ndez-Procacci (e.g., for systems of hard spheres the bounds on the radius of convergence obtained in~\cite{fernandez-procacci-scoppola2007} and~\cite{fernandez-nguyen2020} are slightly better). However, the advantage of our criterion is that an explicit bound is provided directly, with no need for numerical computation regardless of the dimension.}
\end{rem}

\begin{ex}[Hard spheres]
	If $S=B_R(0)$ is the closed ball of radius $R>0$ around the origin,  condition~\eqref{cgoof} becomes 
	$$
		z\leq \sup_{\alpha > 0} \frac{\exp(\alpha |B_R(0)|)-1}{|B_R(0)|\, \exp( \alpha |B_{2R}(0)|)}.
	$$
	Carrying out the optimization over $\alpha$ yields the condition 
	$$
		|B_{2R}(0)|\, z \leq \Bigl(1-\frac{1}{2^d}\Bigr)^{2^d -1}. 
	$$
	In the limit $d\to \infty$ at fixed $R>0$, the right-hand side converges from above to the familiar bound $1/\e$.

\end{ex}

\begin{proof}
Let $\alpha>0$ satisfy condition~\eqref{cgoof}. Set $a(\widehat D):= \alpha V(\widehat D)$ for $\widehat D\in\mathbb D_\varepsilon $, choose some chopping map $C$ and let $D\in \mathbb D_\varepsilon$ (the snippet-size $\varepsilon>0$ will be specified later in the proof). For the simplest selection rule $s$ choosing always the first snippet $E_1$, condition (i) in Theorem~\ref{thm:main2} reads 
\be 	\label{smth}
	\sum_{k=1}^\infty \frac{z^k}{k!}\int_{\mathbb{X}^k} I(E_1;D';Y_1,\ldots, Y_k)\, \e^{\alpha [V(D'\cup (\cup_{i=1}^k Y_i))-V(D')]}\lambda^k(\dd\vect Y) \leq 
 \e^{\alpha[V(D'\cup E_1)-V(D')]}-1,
\ee
where $D'=D\backslash E_1$.

Notice that --- unlike in the discrete case ---  terms of order higher than one in $z$ do not necessarily vanish in the series in \eqref{smth}. 
Inspired by the proof of Theorem~\ref{cgoof}, we try first to bound the exponent on the left-hand side of~\eqref{smth}, seeking a bound that separates $Y:= Y_1\cup \cdots \cup Y_k$  from $E_1$ and $D'$. If the constraint were that $E_1\subset Y$, we would conclude with strong subadditivity applied to $B:= Y$ and $C:= D'\cup E_1$ that $V(E_1)+ V(D'\cup Y) \leq V(Y)+ V(D'\cup E_1)$. For the weaker constraint $E_1\cap Y\neq \varnothing$, this is no longer true. \blue{Let $Z\in\mathbb X$.} A straightforward case distinction reveals that under the indicator $I$, the inequality 
$$
	\1_{\{ Z\cap E_1\neq \varnothing\}} +  	\1_{\{ Z\cap (D'\cup Y)\} \neq \varnothing\}} \leq  	\1_{\{ Z\cap Y\neq \varnothing\}} +  	\1_{\{ Z\cap (D'\cup E_1)\neq \varnothing\}}   
$$
is correct for all possible values of the left-hand side except possibly $1+1$. 
Indeed it may happen that $Z$ intersects both $E_1$ and $D'\cup Y$, hence a fortiori $D'\cup E_1$, but not $Y$, so that the right-hand side becomes $0+1$. 
This happens precisely when $Z$ intersects $D'$ and $E_1$ but not $ Y$. The inequality becomes correct if we add the indicator of this event to the right-hand side. Integrating over $Z$, we obtain 
$$
	V(E_1)+ V(D'\cup Y) \leq V(Y)+ V(D'\cup E_1) + \int_\mathbb X \1_{\{Z\cap E_1\neq \varnothing,\, Z\cap Y=\varnothing,\, Z\cap D'\neq \varnothing\}} \lambda(\dd Z).
$$
Moreover, there exists a constant $C=C(S,d)>0$ that depends only on the dimension $d$ and the shape $S$ such that if $k=1$ and $Y=Y_1\in \mathbb X$ is a translate of $S$, then 
$$
	\int_\mathbb X \1_{\{Z\cap E_1\neq \varnothing,\, Z\cap Y=\varnothing,\, Z\cap D'\neq \varnothing\}} \lambda(\dd Z)\leq C\eps.
$$
Indeed, on the left side we may drop the indicator that $Z$ intersects $D'$ and see that it is sufficient to check 
\be \label{eq:grainbound1}
	\int_\mathbb X \1_{\{Z\cap E_1\neq \varnothing,\, Z\cap Y=\varnothing\}} \lambda(\dd Z)\leq C\eps.
\ee
\blue{To see that such an estimate holds let $B_\varepsilon(c)$ be a closed ball of radius $\varepsilon$ around some $c\in\R^d$ containing the snippet $E_1$ and let $x\in Y\cap E_1$ (by assumption this intersection is non-empty). Then $x\in B_\varepsilon(c)$ and the inequality 
\[
\1_{\{Z\cap E_1\neq \varnothing,\, Z\cap Y=\varnothing\}}\leq\1_{\{Z\cap B_\varepsilon(c)\neq \varnothing,\, x\notin Z\}}
\]
holds pointwise in $Z$, thus also 
\[
\int_\mathbb X \1_{\{Z\cap E_1\neq \varnothing,\, Z\cap Y=\varnothing\}}\lambda(\dd Z)\leq
\int_\mathbb X \1_{\{Z\cap B_\varepsilon(c)\neq \varnothing,\, x\notin Z\}}\lambda(\dd Z).
\]
Notice that  $$\int_\mathbb X \1_{\{Z\cap B_\varepsilon(c)\neq \varnothing,\, x\notin Z\}}\lambda(\dd Z)=\vert\{y\in\R^d\vert (y+S)\cap B_\varepsilon(0)\neq \varnothing, \tilde{x}\notin y+S\}\vert,$$ where $B_\varepsilon(0)$ is the closed ball of radius $\varepsilon$ around $0$ and $\tilde{x}:=x-c$. For the set on the right-hand side of that equation, the identity
\[
\{y\in\R^d\vert (y+S)\cap B_\varepsilon(0)\neq \varnothing, \tilde{x}\notin y+S\}=\left(S\oplus B_\varepsilon(0)\right)\backslash (\tilde{x}+S),
\]
holds since $S$ being balanced directly implies \[\{y\in\R^d\vert (y+S)\cap B_\varepsilon(0)\neq \varnothing\}=S\oplus B_\varepsilon(0)\] and \[\{y\in\R^d\vert\tilde{x}\in y+S\}=\tilde{x}+S.\]
Furthermore, observe that the inclusion
\[\left(S\oplus B_\varepsilon(0)\right)\backslash (\tilde{x}+S)\subset \left(\left(S\oplus B_\varepsilon(0)\right)\backslash S\right)\cup \left(\left(S\oplus B_\varepsilon(\tilde{x})\right)\backslash (\tilde{x}+S)\right)\]
holds since $\tilde{x}\in B_\varepsilon(0)$ and $S$ is balanced set.
Moreover, $\left(S\oplus B_\varepsilon(\tilde{x})\right)\backslash (\tilde{x}+S)$ is the translate of $\left(S\oplus B_\varepsilon(0)\right)\backslash S$ by $\tilde{x}$, hence it has the same Lebesgue volume and
\begin{align*}
\vert\left(S\oplus B_\varepsilon(0)\right)\backslash (\tilde{x}+S)\vert\leq&\vert\left(S\oplus B_\varepsilon(0)\right)\backslash S\vert+ \vert\left(S\oplus B_\varepsilon(\tilde{x})\right)\backslash (\tilde{x}+S)\vert\\=&2\vert\left(S\oplus B_\varepsilon(0)\right)\backslash S\vert\\=&2\left(\vert S\oplus B_\varepsilon(0)\vert-\vert S\vert\right).
\end{align*}
Finally, by Steiner's formula for compact convex  sets (see~\cite{schneider2008stochastic}), $\vert S\oplus B_\varepsilon(0)\vert-\vert S\vert$ is given by a non-constant polynomial in $\varepsilon$, which yields a bound of the form given by the right-hand side of \eqref{eq:grainbound1}  (where the constant $C>0$ can be expressed in terms of the intrinsic volumes of $S$ following the formula).
}

Consequently, we obtain the bound
\be \label{tc1_cont}
	V(E_1)+ V(D'\cup Y) \leq V(Y)+ V(D'\cup E_1) + C\eps 
\ee
 which corresponds to the bound \eqref{tc1} in the proof of Theorem~\ref{Lgoof}. 

The inequality~\eqref{tc1_cont} immediately yields the following upper bound for the left-hand side of~\eqref{smth}:
$$
	\e^{\alpha C \eps}\, \e^{\alpha [V(D'\cup E_1)-V(D')- V(E_1)]} \sum_{k=1}^\infty \frac{z^k}{k!}\int_{\mathbb{X}^k} I(E_1;D';Y_1,\ldots, Y_k)\, \e^{\alpha V(Y)}\lambda^k(d\vect Y).
$$
The summand for $k=1$ is equal to 
$$
	z\, \e^{\alpha V(S)}\int_\mathbb X\mathbbm{1}_{\{Y_1\cap E_1\neq \varnothing, Y_1\cap D'= \varnothing\}}\lambda(\dd Y_1)=z[V(D'\cup E_1)-V(D')]\e^{\alpha V(S)}.
$$
For $k\geq 2$, we bound $V(Y) \leq \sum_{i=1}^k V(Y_i) = k V(S)$, drop the indicator that the $Y_i$'s do not intersect $D'$, and get the upper bound 
$$
	z^k \e^{\alpha k V(S)} \int_{\mathbb X^k}\prod_{i=1}^k \1_{\{Y_i\cap E_1 \neq \varnothing\}}\1_{\{Y_1,\ldots, Y_k\text{ disjoint} \}}\lambda^k (\dd \vect Y).
$$
Notice that there exists $N\in \N$ such that for all $k\geq N+1$ the integral vanishes. \blue{To see this, assume that there are infinitely many disjoint objects $Y\in\mathbb X$ intersecting the snippet $E_1$ (and therefore some open $\varepsilon$-ball $B_\varepsilon$ in which the snippet is contained). Since all the objects $Y$ are translates of $S$,  we can choose the same radius $r>0$ for all of the infinitely many disjoint objects $Y$ intersecting $E_1$ such that $Y=x+S \subset B_r(x)$. Naturally, every such $r$-ball must intersect $B_\varepsilon$ and therefore their union is again a bounded Borel subset of $\mathbb{R}^d$. But --- by our assumptions on the shape $S$ --- every $Y\in\mathbb X$ has the same fixed, strictly positive Lebesgue measure, thus their disjoint union must have infinite Lebesgue measure, which is a contradiction to its boundedness.} 

For $k \leq N$, we drop the indicator that $Y_3,\ldots, Y_k$ are disjoint and find that the integral is bounded by 
$$
	V(E_1)^{k-2} \int_{\mathbb X}\1_{\{Y_1\cap E_1 \neq \varnothing\}} \Bigl( \int_\mathbb X \1_{\{Y_2\cap E_1 \neq \varnothing,\, Y_2 \cap Y_1 = \varnothing\}}\lambda_z(\dd Y_2) \Bigr) \lambda_z(\dd Y_1). 
$$
The inner integral is bounded by $C\eps$ because of~\eqref{eq:grainbound1}, the outer integral gives an additional factor $V(E_1)$. 
Altogether, the left-hand side of~\eqref{smth} is bounded by 
$$
	\e^{\alpha C \eps}\, \e^{\alpha [V(D'\cup E_1)-V(D')- V(E_1)]}\Bigl( z[V(D'\cup E_1)-V(D')]\e^{\alpha V(S)}+  C\eps \sum_{k=2}^N z^k V(E_1)^{k-1} \e^{\alpha k \red{V(S)}} \Bigr). 
$$
Proceeding as in the proof of Theorem~\ref{goof}, but taking into account the strict inequality from assumption~\eqref{cgoof}, we find that there exist $\alpha>0$ such that 
$$
	 \e^{\alpha [V(D'\cup E_1)-V(D')- V(E_1)]} z[V(D'\cup E_1)-V(D')]\e^{\alpha V(S)} <\e^{\alpha [V(D'\cup E_1) - V(D)] }- 1,
$$
compare to~\eqref{eq:tc}. Therefore, picking $\eps$ small enough, we see that~\eqref{smth}, hence also condition (i) in Theorem~\ref{thm:main2} is satisfied and all $T(D;z)$, $D\in \mathbb D_\eps$, are absolutely convergent. The claim of the theorem follows immediately.
\end{proof}

\subsection{Multi-type hard spheres in $\R^d$}

Let $(r_n)_{n\in\N}$ be an increasing sequence of positive real numbers and let $B_{r_n}(0)\subset \R^d$, $n\in\N$, be the family of $d$-dimensional closed balls around $0$ with the corresponding radii. To the sequence $(r_n)_{n\in\N}$ of radii is associated the sequence of non-negative activities $(z_n)_{n\in\N}$ such that the ball $B_{r_i}$ has the activity $z_i$. In the setup of hard-core systems in the continuum from Section~\ref{sec:hardcore}, let $\mathbb X$ be given by all possible translates of these objects. Notice that the closed balls are compact convex sets that are non-empty and regular closed. We refer to this special case of a hard-core system as a system of \emph{multi-type hard spheres} in $\R^d$. We will show a new sufficient condition for absolute convergence of the activity expansions in these types of models.

For an integrable function $h: \mathbb X\to\mathbb{R}$ we write 
\begin{align*}
\int_{\mathbb{X}} h(Z)\lambda(\dd Z)=\sum_{\ell\geq 1}\ \int_{\mathbb{R}^d} h(x+B_{r_\ell}(0))\dd x.
\end{align*}

\blue{We define the family of functions $(V_r)_{r>0}$ by setting for  Borel sets $D\subset  \R^d$
\begin{align}\label{def:map Vr}
V_r(D):=\int_{\mathbb R^d}\1_{\{(x+B_r(0))\cap D\neq \varnothing\}} \dd x,
\end{align}
where $B_r(0)$ is the $d$-dimensional closed ball of radius $r>0$ around $0$. Naturally, the map $V_r$ coincides with the map $V$ from \eqref{def:map_V} for the grain $S$ given by the closed ball $B_r(0)$ and therefore satisfies the measurability assumption from Theorem \ref{thm:main2} (as a function on $\mathbb{D}_{\varepsilon}$ with the convention $V(\varnothing)=0$). Furthermore, we have $V_r(\{y\})=\vert B_r(0)\vert$ and $V_r(B_s(y))=\vert B_s(y)\oplus B_r(0)\vert=\vert B_{s+r}(0)\vert$ (where $\oplus$ denotes the Minkowski sum) for any $y\in\R^d$ and any numbers $r,s>0$.}

The following auxiliary result turns out to be essential for the proof of the new sufficient condition:

\begin{lemma}\label{aux_geom}
Let $D_1$ be a finite union of  bounded convex regular closed subsets of $\mathbb{R}^d$ and let $D_2$ be a $d$-dimensional ball in $\mathbb{R}^d$.
The map $(0,\infty)\ni r\mapsto \frac{V_r(D_1 \cup D_2)-V_r(D_1)}{V_r(D_2)}$ is monotonically decreasing in $r$.
\end{lemma}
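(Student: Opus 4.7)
The plan is to reformulate the claim in terms of Minkowski sums, reduce it to monotonicity of a complementary ratio, perform a single linear rescaling onto the unit ball, and then invoke the Lipschitz property of the distance function to obtain pointwise monotonicity of an indicator under the integral.

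First I would observe that $V_r(D) = |D \oplus B_r(0)|$ for any bounded Borel set $D$, where $\oplus$ denotes Minkowski sum. Setting $A_1 := D_1 \oplus B_r(0)$ and $A_2 := D_2 \oplus B_r(0)$, we have $V_r(D_1 \cup D_2) - V_r(D_1) = |A_1 \cup A_2| - |A_1| = |A_2 \setminus A_1|$ and $V_r(D_2) = |A_2|$. Hence
\[
\frac{V_r(D_1 \cup D_2) - V_r(D_1)}{V_r(D_2)} = 1 - \frac{|A_1 \cap A_2|}{|A_2|},
\]
and it suffices to show that $r \mapsto |A_1 \cap A_2|/|A_2|$ is monotonically non-decreasing.

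Since $D_2$ is a ball, say $D_2 = B_s(y_0)$ for some $y_0 \in \R^d$ and $s > 0$, we have $A_2 = B_{s+r}(y_0)$. A linear change of variables $x = y_0 + (s+r)u$ with $u \in B_1(0)$ transforms the ratio into
\[
\frac{|A_1 \cap A_2|}{|A_2|} = \frac{1}{|B_1(0)|}\int_{B_1(0)} \mathbbm{1}_{\{\dist(y_0 + (s+r)u,\, D_1)\, \leq\, r\}}\,\dd u.
\]

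The heart of the argument is the pointwise monotonicity of the integrand in $r$ for every fixed $u \in B_1(0)$. If $\dist(y_0 + (s+r)u, D_1) \leq r$ and $r' > r$, then by the $1$-Lipschitz property of $\dist(\cdot, D_1)$ (a consequence of the triangle inequality),
\[
\dist(y_0 + (s+r')u, D_1) \leq \dist(y_0 + (s+r)u, D_1) + (r' - r)|u| \leq r + (r' - r)|u| \leq r',
\]
where the final inequality uses $|u| \leq 1$. Integrating this pointwise monotonicity over $B_1(0)$ yields monotonicity of the ratio, finishing the proof. There is no serious obstacle here: once the Minkowski-sum reformulation and the rescaling to $B_1(0)$ are in place, the Lipschitz bound delivers the result in one line, and the hypothesis that $D_2$ is a ball is used only to ensure that $D_2 \oplus B_r(0)$ is again a ball of radius $s+r$ (which is what makes the change of variables clean); the structure of $D_1$ enters only through the boundedness and the well-definedness of $\dist(\cdot, D_1)$.
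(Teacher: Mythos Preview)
Your proof is correct and takes a genuinely different route from the paper's. The paper proceeds by differentiation: it invokes a generalized Steiner formula $V_r(D)=|D|+S(D)r+o(r)$ (valid for finite unions of convex bodies), uses the semigroup property $V_{r+\varepsilon}(D)=V_\varepsilon(D\oplus B_r(0))$ to reduce to computing the derivative at $r=0$, and after algebraic manipulation arrives at the inequality $S(B)/|B|\leq S(A\cap B)/|A\cap B|$ with $A=D_1\oplus B_r(0)$ and $B=D_2\oplus B_r(0)$, which is then deduced from the isoperimetric inequality (since $B$ is a ball). Your approach is considerably more elementary: after the same Minkowski-sum reformulation you rescale $A_2$ to the unit ball and reduce everything to a one-line Lipschitz estimate for $\dist(\cdot,D_1)$, avoiding Steiner's formula, differentiability issues, and the isoperimetric inequality altogether. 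As you correctly observe, your argument uses almost nothing about $D_1$ beyond closedness (so that $A_1=\{x:\dist(x,D_1)\leq r\}$), whereas the paper's route genuinely needs the convex-geometric structure of $D_1$ to make the Steiner expansion and the surface-area identity $S(A\cup B)-S(A)=S(B)-S(A\cap B)$ available. The paper's proof, on the other hand, makes transparent \emph{why} a ball is the right hypothesis on $D_2$ --- it is the equality case of the isoperimetric inequality --- whereas in your argument the ball hypothesis enters more implicitly through the scaling $x=y_0+(s+r)u$.
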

\begin{proof}
First of all, observe that for sets $D$ given by a finite union of convex, regular closed subsets of $\mathbb{R}^d$ the volume $V_r(D)$ can be written as
\begin{align}\label{Steiner_conseq}
V_r(D)=\vert D\vert + S(D)r+o(r),
\end{align}
where $S(D)$ denotes the surface area of $D$. This follows from a generalized version of the classical Steiner's formula (see\red{~\cite[Section 4.4]{schneider2013})}. In particular, we see that the map $r\mapsto \frac{V_r(D_1 \cup D_2)-V_r(D_1)}{V_r(D_2)}$ is differentiable in $r=0$.
  
Next, we notice that the map satisfies the following \textit{semi-group property}: 
\begin{align*}
V_{r+\varepsilon}(D)=V_\varepsilon( D\oplus B_r(0)). 
\end{align*}

Therefore, to prove the claim of the lemma, it suffices to consider the differential at zero:
\begin{align*}
\lim\limits_{\varepsilon\searrow 0}\frac{1}{\varepsilon}\left(\frac{V_\varepsilon(A\cup B)-V_\varepsilon(A)}{V_\varepsilon(B)}-\frac{\vert A\cup B\vert-\vert A\vert }{\vert B \vert }\right),
\end{align*}
where $A:=D_1\oplus B_r(0)$ and $B:=D_2\oplus B_r(0)$.

Using the formula \eqref{Steiner_conseq}, a simple computation shows that this limit is equal to
\begin{align*}
\frac{\vert B \vert\left(S(A\cup B)-S(A)\right)-S(B)\left(\vert A\cup B\vert -\vert A\vert\right)}{\vert B \vert^2}.
\end{align*}
The monotonicity in the claim of the lemma is then equivalent to
\begin{align*}
\vert B \vert\left(S(A\cup B)-S(A)\right)-S(B)\left(\vert A\cup B\vert -\vert A\vert\right)\leq 0
\end{align*}
or, equivalently,
\begin{align*}\frac{\vert B\vert}{S(B)}
\leq \frac{\vert A\cup B\vert -\vert A\vert}{S(A\cup B)-S(A)}.
\end{align*}
Using the obvious identities $\vert A\cup B\vert-\vert A\vert = \vert B\vert - \vert A\cap B\vert$ and $S(A\cup B)-S(A)=S(B)-S(A\cap B)$, we can rewrite the last inequality as
\begin{align*}\frac{S(B)}{\vert B \vert}\leq
\frac{ S(A\cap B)}{\vert A\cap B\vert},
\end{align*}
which holds by the isoperimetric inequality since $B=D_2\oplus B_r(0)$ is a ball in $\mathbb{R}^d$ (``the ball is the shape that minimizes the surface area for given volume", see~\cite[3.2.43]{federer1969geometric}).
\end{proof}

\blue{The following sufficient condition is, in some sense, a ``continuous version" of the Gruber-Kunz criterion in the setup of hard spheres in $\mathbb{R}^d$. The similarity in the form arises as follows: To establish the recurrence relations underlying the proof of Gruber-Kunz we selected a monomer, a single point in $\Z^d$, from a configuration of polymers. We follow this idea in the proof of the following result, choosing the chopping map $C$ and the selection rule $s$ such that a tiny snippet that approximates a single point in the continuous space sufficiently well is selected. At the same time we choose an ansatz function $a$ that can be interpreted as the continuous analogue of the ansatz function from the proof of Corollary~\ref{cor:GK}.} 
\begin{thm}\label{GK_cont}
In the setup of multi-type hard objects on $\mathbb{R}^d$, assume that the activity $z$ satisfies
\begin{align}\label{gens_cont}
\exists\alpha>0:~~\sum\limits_{\ell\geq 1}\vert B_{r_\ell}\vert \e^{\alpha \vert B_{r_\ell+r_1}\vert }z_\ell< e^{\alpha \vert B_{r_1}\vert}-1,
\end{align} 
where by $\vert B_r\vert$ we denote the (Lebesgue) volume of a ball of radius $r>0$.  
Then the activity expansions $\rho_n(X_1,\ldots, X_n;z)$ converge absolutely.
\end{thm}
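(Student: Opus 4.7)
The plan is to apply Theorem~\ref{thm:main2}, following the strategy of the proofs of Corollary~\ref{cor:GK} and Theorem~\ref{cLgoof}. Fix $\eps>0$ (to be tuned at the end of the argument), a chopping map $C$ whose snippets are, up to $\lambda$-null sets, $d$-dimensional balls of radius $\delta\ll\eps$, and define
\begin{equation*}
a(D):=\alpha V_{r_1}(D),\qquad D\in\mathbb D_\eps,\quad a(\varnothing):=0.
\end{equation*}
This is the continuum analogue of the discrete Gruber--Kunz ansatz $a(D)=\alpha|D|$ used in the proof of Corollary~\ref{cor:GK}: Lebesgue measure is replaced by the $r_1$-inflation volume, which satisfies $V_{r_1}(\{x\})=|B_{r_1}|$ and $V_{r_1}(x+B_{r_\ell})=|B_{r_\ell+r_1}|$, exactly matching the factors appearing in the exponents of~\eqref{gens_cont}. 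Measurability of $F\mapsto a(D\cup F)$ is justified as on p.~\pageref{arg:m}. Pick any selection rule $s$ (for definiteness, the one choosing the first snippet of $C(D)$) and set $D':=D\setminus E_s$.

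To verify condition~(i) of Theorem~\ref{thm:main2}, I bound the $k=1$ contribution to the left-hand side. Strong subadditivity of $V_{r_1}$, together with a Minkowski-sum correction in the spirit of~\eqref{eq:grainbound1}, yields
\begin{equation*}
V_{r_1}(D'\cup Y)\leq V_{r_1}(D'\cup E_s)+V_{r_1}(Y)-V_{r_1}(E_s)+C\eps
\end{equation*}
for any ball $Y\in\mathbb X$ with $Y\cap E_s\neq\varnothing$ and $Y\cap D'=\varnothing$. Using $V_{r_1}(Y)=|B_{r_\ell+r_1}|$ for $Y=x+B_{r_\ell}$ and the inclusion--exclusion identity
\begin{equation*}
\int_{\R^d}\1_{\{(x+B_{r_\ell})\cap E_s\neq\varnothing,\ (x+B_{r_\ell})\cap D'=\varnothing\}}\,dx=V_{r_\ell}(E_s\cup D')-V_{r_\ell}(D')=:R_{r_\ell},
\end{equation*}
I obtain
\begin{equation*}
(\text{LHS})_{k=1}\leq\e^{\alpha(R_{r_1}-V_{r_1}(E_s))+\alpha C\eps}\sum_{\ell\geq 1}z_\ell R_{r_\ell}\,\e^{\alpha|B_{r_\ell+r_1}|}.
\end{equation*}
Lemma~\ref{aux_geom}, applied with $D_1=D'$ and $D_2=E_s$ (a ball by construction of $C$), is the essential input: the monotonicity of $r\mapsto(V_r(D_1\cup D_2)-V_r(D_1))/V_r(D_2)$ gives $R_{r_\ell}/V_{r_\ell}(E_s)\leq R_{r_1}/V_{r_1}(E_s)$ for every $\ell\geq 1$. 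Letting $\delta\to 0$, so that $V_{r_\ell}(E_s)\to|B_{r_\ell}|$ and $V_{r_1}(E_s)\to|B_{r_1}|$, one arrives at
\begin{equation*}
(\text{LHS})_{k=1}\leq(1+o(1))\,\frac{R_{r_1}}{|B_{r_1}|}\,\e^{\alpha(R_{r_1}-|B_{r_1}|)}\sum_{\ell\geq 1}z_\ell|B_{r_\ell}|\,\e^{\alpha|B_{r_\ell+r_1}|}.
\end{equation*}

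The higher-order terms ($k\geq 2$) are $o(1)$ by the separation argument used in the proof of Theorem~\ref{cLgoof}, which exploits the finite packing of disjoint balls meeting a tiny snippet together with the bound~\eqref{eq:grainbound1}. Combining with the right-hand side $\e^{\alpha R_{r_1}}-1$ of condition~(i) and invoking the strict hypothesis~\eqref{gens_cont}, the required inequality reduces to
\begin{equation*}
\frac{1-\e^{-\alpha|B_{r_1}|}}{|B_{r_1}|}\leq\frac{1-\e^{-\alpha R_{r_1}}}{R_{r_1}},
\end{equation*}
which holds by monotonicity (decreasing) of $x\mapsto(1-\e^{-\alpha x})/x$ together with $R_{r_1}\leq V_{r_1}(E_s)\to|B_{r_1}|$ as $\delta\to 0$. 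The strictness in~\eqref{gens_cont} absorbs the $o(1)$ corrections for $\eps$ sufficiently small, verifying condition~(i) of Theorem~\ref{thm:main2} and delivering absolute convergence of $T(D;z)$---hence of the expansions $\rho_n$. The main technical obstacle is the careful derivation of the Minkowski-sum correction term $C\eps$ (quantifying the failure of additivity of $V_{r_1}$ across neighbouring snippets) and the uniform bookkeeping needed to apply Lemma~\ref{aux_geom} for all ball radii $r_\ell$ simultaneously.
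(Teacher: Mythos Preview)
Your overall strategy---the ansatz $a=\alpha V_{r_1}$, the strong-subadditivity correction of the form~\eqref{eq:grainbound1}, the invocation of Lemma~\ref{aux_geom} to compare $R_{r_\ell}$ with $R_{r_1}$, and the concluding monotonicity argument for $(1-\e^{-\alpha x})/x$---mirrors the paper's proof exactly. The one genuine gap is your choice of chopping map: you postulate snippets that are, up to null sets, $d$-dimensional balls of radius $\delta$. For $d\geq 2$ no such chopping map exists, because a finite collection of disjoint balls cannot fill a ball up to Lebesgue-measure zero (balls do not tile). Since the definition of a chopping map requires $C(X)=\{E_1,\dots,E_m\}$ to be a genuine finite set partition of $X\in\mathbb X$, and Lemma~\ref{aux_geom} requires $D_2$ to be an honest ball, your application of the lemma with $D_2=E_s$ is not justified.

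The paper circumvents this by leaving the chopping map arbitrary and, \emph{after} writing down $R_{r_\ell}=V_{r_\ell}(D'\cup E_s)-V_{r_\ell}(D')$, enlarging the snippet to a ball $B_\eps\supset E_s$ of radius $\eps$: since $V_{r_\ell}(D'\cup E_s)\leq V_{r_\ell}(D'\cup B_\eps)$, one obtains $R_{r_\ell}\leq V_{r_\ell}(D'\cup B_\eps)-V_{r_\ell}(D')$, and now Lemma~\ref{aux_geom} applies with $D_2=B_\eps$ (a ball). A dominated-convergence argument then handles the passage $\eps\to 0$ uniformly in $\ell$, using that $V_{r_\ell}(B_\eps)/|B_{r_\ell}|=(1+\eps/r_\ell)^d\leq(1+\eps/r_1)^d$. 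This is the only missing ingredient in your argument; once you replace ``$E_s$ is a ball'' by ``enclose $E_s$ in a ball $B_\eps$'', everything else you wrote goes through. A minor related point: your final inequality needs $R_{r_1}\leq |B_{r_1}|$, but you only have $R_{r_1}\leq V_{r_1}(E_s)$ which is slightly larger; the excess is $O(\eps)$ and is absorbed by the strict inequality in~\eqref{gens_cont}, just like the other $o(1)$ corrections.
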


\begin{rem}
\red{Activity expansions for systems with infinitely many types of objects are not particularly well-studied in statistical mechanics. In the case of finitely many types, we expect our result to exceed Koteck{\'y}-Preiss but to be weaker then Fern{\'a}ndez-Procacci  --- as in the special case of a single type treated above (see Remark~\ref{rem:single_type_cont}). The general case, for $r_n\to\infty$ in particular, remains to be investigated.} 
\end{rem}

\begin{proof}
Again, our strategy is to show that condition (i) from Theorem \ref{thm:main2} is satisfied for an appropriate ansatz function $a$. By assumption, $r_1$ is the radius of the smallest ball present in the system. Set $a(\widehat D):=\alpha V_{r_1}(\widehat D)$ for $\widehat{D}\in\mathbb D_\varepsilon$, where $V_{r_1}$ is given by \eqref{def:map Vr} and $\alpha$ satisfies \eqref{gens_cont}. \blue{Choose some chopping map $C$ and let $D\in\mathbb D_\varepsilon$ (the snippet-size $\varepsilon>0$ is to be specified later in the proof).}  Just as in the proof of Theorem \ref{cLgoof}, \blue{independently of the choice of the snippet $E_1$ (i.e., independently of the selection rule $s$)}, we obtain the following upper bound for the left-hand side of the inequality from condition (i):
\begin{align}\label{bound:hsph1}
\e^{\alpha C_1 (\eps)}\, \e^{\alpha [V(D'\cup E_1)-V(D')- V(E_1)]} \sum_{k=1}^\infty \frac{z^k}{k!}\int_{\mathbb{X}^k} I(E_1;D',Y_1,\ldots, Y_k)\, \e^{\alpha V(\vect Y)}\lambda^k(\dd\vect Y).
\end{align}
The positive number $C_1(\varepsilon)$ converges towards $0$ for $\varepsilon\searrow 0$ and is precisely the bound from \eqref{eq:grainbound1} in the proof of Theorem \ref{cLgoof} for $Y$ given by a translate of $B_{r_1}(0)$, i.e., by a sphere of minimal volume present in the system.

The summand for $k=1$ in~\eqref{bound:hsph1} is equal to
\begin{align*}
\sum\limits_{\ell\geq 1}z_\ell \e^{\alpha \vert B_{r_\ell+r_1}\vert}\int_{\mathbb X}\mathbbm{1}_{\{r(Y_1) = r_\ell \}}\mathbbm{1}_{\{Y_1\cap E_1\neq \varnothing, Y_1\cap D'= \varnothing\}}\lambda(\dd Y_1).
\end{align*}
Notice that the integrals in the last expression are equal to $[V_{r_\ell}(D'\cup E_1)-V_{r_\ell}(D')]$ for every $\ell\in\mathbb N$.

The summand for any $k\geq 2$ in~\eqref{bound:hsph1} is bounded from above by
\begin{align}\label{mt_hot}
\sum\limits_{\ell_1,\ldots,\ell_k}z_{\ell_1}\ldots z_{\ell_k}\e^{\alpha\sum\limits_{i=1}^k \vert B_{r_{\ell_i}+r_1}\vert}\int_{\mathbb X^k}\prod\limits_{i=1}^k\mathbbm{1}_{\{r(Y_i) = r_{\ell_i} \}}\prod\limits_{i=1}^k\mathbbm{1}_{\{Y_i\cap E_1\neq \varnothing, Y_i\cap D'= \varnothing\}}\mathbbm{1}_{\{Y_1,\ldots,Y_k \textit { disjoint}\}}\lambda^k(\dd \vect Y),
\end{align}
which --- by arguments similar to the ones used for the bound $C_1(\varepsilon)$ --- is again bounded by
\begin{align}\label{mt_hot_bound}
C_2(\varepsilon)\sum\limits_{\ell_1,\ldots,\ell_k}z_{\ell_1}\ldots z_{\ell_k}\e^{\alpha\sum\limits_{i=1}^k \vert B_{r_{\ell_i}+r_1}\vert}\vert B_{r_{\ell_1}}\vert\ldots\vert B_{r_{\ell_k}}\vert
\end{align}
for a positive constant $C_2(\varepsilon)$ that is independent of $k$ and satisfies $C_2(\varepsilon)\searrow 0$ for $\varepsilon\searrow 0$.
Notice that the sum in \eqref{mt_hot_bound} is finite for every $k\in\mathbb{N}$ by assumption \eqref{gens_cont} (since it is simply given by the $k$-th power of the left-hand side of the inequality in \eqref{gens_cont}). Moreover, by the same argument as in the proof of Theorem \ref{cLgoof}, the expression in \eqref{mt_hot} does vanish for all but finitely many $k\in \mathbb{N}$, i.e., there exists a number $N\in\mathbb{N}$ such that \eqref{mt_hot} is equal to zero for all $k\geq N+1$.

Altogether we get the upper bound
\begin{align*}
&\e^{\alpha C_1 (\eps)}\, \e^{\alpha [V(D'\cup E_1)-V(D')- V(E_1)]}\times\Bigl( \sum\limits_{\ell\geq 1}z_\ell\e^{\alpha \vert B_{r_\ell+r_1}\vert}[V_{r_\ell}(D'\cup E_1)-V_{r_\ell}(D')]+\\& C_2(\varepsilon)\sum\limits_{2\leq k\leq N}\sum\limits_{\ell_1,\ldots,\ell_k}z_{\ell_1}\ldots z_{\ell_k}\e^{\alpha\sum\limits_{i=1}^k \vert B_{r_{\ell_i}+r_1}\vert}\vert B_{r_{\ell_1}}\vert\ldots\vert B_{r_{\ell_k}}\vert \Bigr). 
\end{align*}
As in the single-type case, we see that it is sufficient to prove the strict inequality 
\begin{align*}
\e^{\alpha [V(D'\cup E_1)-V(D')- V(E_1)]}\sum\limits_{\ell\geq 1}z_\ell\e^{\alpha \vert B_{r_\ell+r_1}\vert}[V_{r_\ell}(D'\cup E_1)-V_{r_\ell}(D')]< \e^{\alpha [V(D'\cup E_1)-V(D)]}-1
\end{align*}
for small values of $\varepsilon>0$ and, consequently, for small volumes of the snippet $E_1$ contained in an $\varepsilon$-ball.

To do so, we bound $[V_{r_\ell}(D'\cup E_1)-V_{r_\ell}(D')]$ from above by $[V_{r_\ell}(D'\cup B_\varepsilon)-V_{r_\ell}(D')]$ for every $\ell\in\mathbb{N}$, where $B_\varepsilon$ is the ball of radius $\varepsilon$ containing the snippet $E_1$.
Then we use Lemma \ref{aux_geom} to obtain
\begin{align*}
[V_{r_\ell}(D'\cup B_\varepsilon)-V_{r_\ell}(D')]\leq [V_{r_m}(D'\cup B_\varepsilon)-V_{r_m}(D')]\frac{V_{r_\ell}(B_\varepsilon)}{V_{r_m}(B_\varepsilon)}
\end{align*}
for $m\leq \ell$ (since in that case $r_m\leq r_\ell$ holds by assumption) and therefore
\begin{align*}
\sum\limits_{\ell\geq 1}z_\ell \e^{\alpha \vert B_{r_\ell+r_1}\vert}[V_{r_\ell}(D'\cup E_1)-V_{r_\ell}(D')]\leq \frac{V_{r_1}(D'\cup B_\varepsilon)-V_{r_1}(D')}{V_{r_1}(B_\varepsilon)}\sum\limits_{\ell\geq 1}z_\ell\e^{\alpha \vert B_{r_\ell+r_1}\vert}V_{r_\ell}(B_\varepsilon).
\end{align*}
By dominated convergence and assumption \eqref{gens_cont} we can choose $\varepsilon>0$ small enough to strictly bound the right-hand side of the last equation by
\begin{align*}
\frac{V_{r_1}(D'\cup E_1)-V_{r_1}(D')}{V_{r_1}(E_1)}(\e^{\alpha V_{r_1}(E_1)}-1).
\end{align*}
Finally, it suffices to show the inequality
\begin{align}
\e^{\alpha [V(D'\cup E_1)-V(D')- V(E_1)]}\frac{V_{r_1}(D'\cup E_1)-V_{r_1}(D')}{V_{r_1}(E_1)}(\e^{\alpha \vert V_{r_1}(E_1) \vert}-1)\leq \e^{\alpha [V(D'\cup E_1)-V(D)]}-1
\end{align}
as in \eqref{eq:tc} to conclude the proof.
\end{proof}

\subsection{Tonks gas on $\Z$}

\blue{Next we turn to the discrete one-dimensional Tonks gas with translationally invariant activities.} That is, in the setup of subset polymers from Section~\ref{sec:sub_poly} for $d=1$, let $(z_\ell)_{\ell\in \N}$ be a sequence of non-negative numbers and consider the activity
\be 	\label{eq:z-tonks} 
	z(X) = \begin{cases} 
			z_\ell, &\quad X = \{m, m+1,\ldots, m+\ell-1\}\text{ for some }m \in \Z,\\
			0, &\quad \text{else}. 
		\end{cases} 
\ee

\begin{thm}\label{Tres}
	Let $d=1$ and let $(z_\ell)_{\ell\in \N}$ be a sequence of non-negative activities. 
	\begin{enumerate} 
		\item [(a)]  Suppose there exists $\alpha>0$ such that 
	\be \label{onedTsuff}
		\sum_{\ell=1}^\infty \e^{\alpha \ell}z_\ell\leq \e^{\alpha}-1.
	\ee
	Then $T(D;z)$ is absolutely convergent for all finite subsets $D\subset \Z$. 
		\item[(b)] Conversely, if $T(D;z)$ is absolutely convergent for all finite subsets $D\subset \Z$, then there exists $\alpha>0$ such that~\eqref{onedTsuff} holds true.
	\end{enumerate} 
\end{thm}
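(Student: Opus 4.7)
The plan is to reduce everything to a one-variable linear recurrence for $\tilde T_n := \tilde T([1,n];z)$. Specializing Proposition~\ref{prop:delcont} with $D'=[2,n]$ and $x=1$, the polymers $Y\ni 1$ disjoint from $[2,n]$ are exactly the intervals $[2-\ell,1]$ for $\ell\geq 1$, so translational invariance yields
\begin{equation*}
    \tilde T_n = \tilde T_{n-1} + \sum_{\ell=1}^\infty z_\ell\, \tilde T_{n+\ell-1}, \qquad n\geq 1,
\end{equation*}
together with the analogous identity $\tilde T_{N+1,n} = \tilde T_{N,n-1} + \sum_\ell z_\ell\, \tilde T_{N,n+\ell-1}$ for the truncations from~\eqref{eq:tndef}. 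Setting $\mu := e^\alpha$, the hypothesis~\eqref{onedTsuff} is exactly the characteristic inequality $\sum_\ell z_\ell \mu^\ell \leq \mu-1$ for this recurrence.

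For (a), I would prove $\tilde T_{N,n}\leq e^{\alpha n}$ by induction on $N$. The base $N=1$ holds since $\tilde T_{1,n}\leq 1$, while the inductive step follows directly from the hypothesis:
\begin{equation*}
    \tilde T_{N+1,n} \leq e^{\alpha(n-1)} + \sum_{\ell=1}^\infty z_\ell e^{\alpha(n+\ell-1)} = e^{\alpha(n-1)}\bigl(1+\sum_{\ell=1}^\infty z_\ell e^{\alpha\ell}\bigr) \leq e^{\alpha(n-1)}\cdot e^\alpha = e^{\alpha n}.
\end{equation*}
Passing to $N\to\infty$ gives $\tilde T_n\leq e^{\alpha n}$ for every $n$. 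To handle a general finite $D$, combining~\eqref{eq:tdz} with the alternating sign property from Corollary~\ref{prop:alt-sign} yields
\begin{equation*}
    \log\tilde T(D;z) = \sum_{k=1}^\infty \frac{1}{k!}\sum_{(Y_1,\ldots,Y_k)\in \mathbb X^k}\1_{\{\exists i:\, Y_i\cap D\neq\varnothing\}}\,|\varphi_k^\mathsf T(Y_1,\ldots,Y_k)|\, z(Y_1)\cdots z(Y_k),
\end{equation*}
which is manifestly monotone in $D$; hence $\tilde T(D;z)\leq \tilde T(\widehat D;z)\leq e^{\alpha|\widehat D|}<\infty$ with $\widehat D := [\min D,\max D]$, and absolute convergence of $T(D;z)$ follows for every finite $D$.

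For (b), assume $\tilde T_n<\infty$ for every $n$. Monotonicity of $\tilde T$ in $D$ gives $\tilde T_{n+\ell-1}\geq \tilde T_n$, so the recurrence forces $(1-z_\ell)\tilde T_n\geq \tilde T_{n-1}$ for every $\ell$ with $z_\ell>0$; in particular $z_\ell<1$ and $\tilde T_n\geq (1-z_\ell)^{-n}$, so $\mu^* := \limsup_{n\to\infty}\tilde T_n^{1/n} > 1$. Conversely, $z_\ell\tilde T_{n+\ell-1}\leq \tilde T_n$ produces an exponential upper bound, so $\mu^*<\infty$. To extract the characteristic equation, I would analyse the reciprocal generating function $H(y):= \sum_{n\geq 0}\tilde T_n\, y^{-n}$, which converges for $y>\mu^*$ and diverges as $y\searrow\mu^*$. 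Multiplying the recurrence by $y^{-n}$, summing and rearranging yields
\begin{equation*}
    H(y)\bigl(y - 1 - \sum_{\ell=1}^\infty z_\ell y^\ell\bigr) = y - \sum_{\ell=1}^\infty z_\ell y^\ell \sum_{m=0}^{\ell-1}\tilde T_m\, y^{-m}.
\end{equation*}
Since the right-hand side stays bounded near $y=\mu^*$, the bracket on the left must vanish at $y=\mu^*$, giving $\mu^* - 1 = \sum_\ell z_\ell (\mu^*)^\ell$; the choice $\alpha := \log\mu^* > 0$ then verifies~\eqref{onedTsuff} with equality.

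The main technical obstacle lies in this final step of part (b): one must verify that the right-hand side of the generating function identity stays bounded in a neighbourhood of $y=\mu^*$, so that the singularity of $H$ genuinely comes from a zero of the denominator rather than a blow-up of the right-hand side. This reduces to establishing $\sum_\ell z_\ell (\mu^*)^\ell <\infty$ independently, which can be obtained by combining the exponential upper bound $\tilde T_n=O((\mu^*+\eps)^n)$ (for any $\eps>0$, from the definition of $\mu^*$) with the recurrence identity $\sum_\ell z_\ell \tilde T_{n+\ell-1}=\tilde T_n-\tilde T_{n-1}$ and monotone convergence as $n\to\infty$.
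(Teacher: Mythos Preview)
Your argument for (a) has a genuine gap at $n=1$. The recurrence $\tilde T_n = \tilde T_{n-1} + \sum_\ell z_\ell \tilde T_{n+\ell-1}$ is only valid for $n\ge 2$: when $n=1$ the set $D'=[2,1]$ is empty, so \emph{every} $\ell$-rod through the point $1$ (there are $\ell$ of them, not one) contributes, and Proposition~\ref{prop:delcont} gives instead $\tilde T_{N+1,1}=1+\sum_\ell \ell\, z_\ell\,\tilde T_{N,\ell}$. Your inductive step then requires $\sum_\ell \ell z_\ell e^{\alpha\ell}\le e^\alpha-1$, which does \emph{not} follow from~\eqref{onedTsuff}. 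You cannot sidestep this by monotonicity of the truncations $\tilde T_N(\cdot;z)$ (which is not clear), nor by restricting the induction to $n\ge 2$, since the step for $n=2$ still needs the bound at $n-1=1$. This is precisely the obstacle the paper's proof addresses: via a convexity argument on $h(u)=1+\sum_\ell z_\ell u^\ell$ it locates a (possibly smaller) $\tilde\alpha$ with $h(e^{\tilde\alpha})=e^{\tilde\alpha}$ \emph{and} $h'(e^{\tilde\alpha})<1$, i.e.\ $\sum_\ell \ell z_\ell e^{\tilde\alpha\ell}<e^{\tilde\alpha}$, and then uses the enriched ansatz $a(D)=\tilde\alpha\,V_L(D)$ with an auxiliary rod length $L$ chosen large enough to absorb the case $D'=\varnothing$. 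Your approach is salvageable along the same lines (e.g.\ replace $e^{\alpha n}$ by $Ce^{\tilde\alpha n}$ with large $C$), but the convexity step is not optional---it is the substance of the proof.

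For (b) your generating-function route is genuinely different from the paper's, which is much more direct: the paper sets $a(D)=\log\tilde T(D;z)$, picks $D'\subset(-\infty,0]$ with $0\in D'$, and writes $a(D'\cup[1,\ell])-a(D')$ as a telescoping sum of increments that are, by translation invariance and an explicit cluster representation, all equal to the same constant $\alpha>0$. The Kirkwood--Salsburg identity then yields~\eqref{onedTsuff} immediately, with no growth-rate or singularity analysis. Your approach could in principle work, but besides the $n=1$ issue in the recurrence (which resurfaces here), the ``main technical obstacle'' you flag---bounding the right-hand side of the $H(y)$ identity near $y=\mu^*$ and establishing $\sum_\ell z_\ell(\mu^*)^\ell<\infty$ a priori---is real, and your sketch for resolving it (``combining the exponential upper bound with monotone convergence'') does not obviously close. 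The paper's route avoids all of this.
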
 

\begin{rem}\blue{The condition~\eqref{onedTsuff} is exactly the necessary and sufficient criterion for absolute convergence of the activity expansion of the pressure in the system derived in~\cite{jansen2015tonks}. While the result itself is not novel, we consider the proof to be instructive since it demonstrates how our approach can provide conditions improving on the Fern{\'a}ndez-Procacci criterion. In this concrete setup even the optimal result --- recovering the whole domain of convergence --- can be achieved. }
\end{rem}
The proof of the sufficient condition relies on a refinement of Theorem~\ref{thm:bfp}. Roughly, we weaken condition (i) to consider the Kirkwood-Salsburg inequalities being satisfied only for single rods rather than for arbitrary configurations of rods; at the same time we specify the selection rule by assuming \blue{that the leftmost (or, alternatively, the rightmost) element $\{x\}$ is always picked from any given domain}.

\begin{prop} \label{prop:selection}
	Suppose there exists a non-negative function $a(\cdot)$ from the finite intervals of $\Z$ to $[0,\infty)$ with $a(\varnothing) =0$ and for every finite interval $D$ of $\mathbb Z$ with $x=\min D$ such that
	\be \label{eq:selection}
		\sum_{\substack{Y\ni x,\\ Y\cap D' =\varnothing}} z(Y)\, \e^{a(D'\cup Y) - a(D')} \leq \e^{a(D'\cup \{x\}) - a(D')} - 1,
	\ee
	where we set $D'=D\backslash\{x\}$. Then $T(D,z)$ is absolutely convergent, for all finite $D\subset \Z$  (interval or not).
\end{prop}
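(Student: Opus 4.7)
The plan is to extend $a$ from intervals to all finite subsets by evaluating at the convex hull, i.e.\ to set $\tilde a(D) := a([\min D, \max D])$ for non-empty finite $D \subset \Z$ (and $\tilde a(\varnothing) := 0$), and then to prove by induction on $N$ that $\tilde T_N(D;z) \leq \exp(\tilde a(D))$ for every finite $D \subset \Z$, where $\tilde T_N$ is defined in~\eqref{eq:tndef}. Absolute convergence of $T(D;z)$ for general $D$ will then follow by monotone convergence. The base case $N = 1$ is immediate from $\tilde T_1(D;z) = \mathbbm{1}_{\{|D| \leq 1\}}$ and $\tilde a \geq 0$.

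For the inductive step I would, given non-empty finite $\widehat D$, pick $x = \min \widehat D$ in Proposition~\ref{prop:delcont}, set $D := \widehat D \setminus \{x\}$, and apply the inductive hypothesis to reduce the task to proving
$$e^{\tilde a(D)} + \sum_{\substack{Y \ni x \\ Y \cap D = \varnothing}} z(Y)\, e^{\tilde a(D \cup Y)} \leq e^{\tilde a(\widehat D)}.$$
If $D = \varnothing$ (so $\widehat D = \{x\}$), the bound reduces directly to the hypothesis of the proposition applied to the singleton interval $\{x\}$, noting that $\tilde a(Y) = a(Y)$ whenever $Y$ is an interval.

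When $D \neq \varnothing$, write $m' := \min D > x$ and $M := \max \widehat D$. The polymers relevant to the inner sum are exactly the intervals $Y = [u,y]$ with $u \leq x$ and $y \in \{x, x+1, \ldots, m'-1\}$, and one checks $\tilde a(\widehat D) = a([x,M])$, $\tilde a(D) = a([m',M])$, and $\tilde a(D \cup Y) = a([u,M])$. The key step is a telescoping trick: apply the hypothesis of the proposition to each interval $[j,M]$ for $j = x, x+1, \ldots, m'-1$. For each such $j$ the polymers containing $j$ and disjoint from $[j+1, M]$ are precisely $[u,j]$ with $u \leq j$, yielding
$$\sum_{u \leq j} z([u,j])\, e^{a([u,M])} \leq e^{a([j,M])} - e^{a([j+1,M])}.$$
Summing over $j$ telescopes the right-hand side to $e^{a([x,M])} - e^{a([m',M])} = e^{\tilde a(\widehat D)} - e^{\tilde a(D)}$; restricting the left-hand side to $u \leq x$ (a non-negative reduction) gives exactly the required bound on the polymer contribution.

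The main obstacle is choosing the right extension of $a$. The naive additive extension $\hat a(D) = \sum_I a(I)$ over connected components runs into trouble when the leftmost component of $\widehat D$ is a singleton, because a polymer $Y$ starting to the left of $x$ and ending adjacent to a subsequent component merges with it in $D \cup Y$, and without a subadditivity hypothesis on $a$ over adjacent intervals the required inequality cannot be closed. Likewise, a single application of the hypothesis at $[x,M]$ only captures polymers with $\max Y = x$, missing those with $\max Y \in \{x+1, \ldots, m'-1\}$ that become available in the presence of a gap to the right of $x$. The convex-hull extension together with the telescoping over the hypotheses at $[j,M]$ precisely resolves this tension.
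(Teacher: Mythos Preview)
Your proof is correct, but it takes a different and somewhat more elaborate route than the paper's. The paper observes that if $D$ is a finite interval and $x=\min D$, then $D':=D\setminus\{x\}$ is again an interval, and every active polymer $Y$ (a rod) containing $x$ and disjoint from $D'$ must have $\max Y = x$, so that $D'\cup Y$ is again an interval. Hence the recursion from Proposition~\ref{prop:delcont} stays entirely within the class of intervals, and a straight induction over $N$ gives $\tilde T_N(D;z)\le \e^{a(D)}$ for all intervals $D$ directly from the hypothesis~\eqref{eq:selection} applied once per step. The extension to arbitrary finite $D$ is then a one-line monotonicity (or subadditivity) argument for $\log\tilde T(\cdot;z)$, using the representation behind~\eqref{eq:tdz}.

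Your approach instead extends $a$ to all finite sets via the convex hull and runs the induction over all finite $D$ simultaneously. The price is that after removing $x=\min\widehat D$ there may be a gap $[x,m'-1]$ before the remaining set, so polymers with $\max Y>x$ can appear; you handle this cleanly with the telescoping sum of the hypotheses at $[j,M]$ for $j=x,\ldots,m'-1$. This is a genuine extra idea, and it has the advantage of being self-contained: you never need to invoke subadditivity or monotonicity of $\log\tilde T$. The paper's argument is shorter because it lets the structure of the recursion do the work of keeping everything within intervals, at the cost of appealing to an external property of $\tilde T$ for the final extension.
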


\begin{proof}
	\blue{We revisit the proof of the implication (i)$\ \Rightarrow\ $(ii) of Theorem~\ref{thm:bfp} given on p.~\pageref{eq:tnd-bound} and prove first by induction over $N$ that $\tilde T_N(D;z)\leq \exp(a(D))$, for all finite discrete intervals $D\subset \Z$. For $N=1$, the inequality is trivial because $\tilde T_1(D;  z)=\mathbbm{1}_{\{\vert D\vert \leq 1\}}\leq 1\leq\exp(a(\widehat D))$. Now, suppose $\tilde T_N(D;z)\leq \exp(a(D))$ for some $N\in\N$ and all discrete intervals $D\subset \Z$. Let $\widehat D\subset \Z$ be any discrete interval. If $\widehat D=\varnothing$, then $\tilde  T_{N+1}(\widehat D;z) =1\leq \exp(a(\widehat D))$. If $\widehat D$ is non-empty, let $x:=\min \widehat D$ (or, alternatively, $x:=\max \widehat D$) and set $D'=\widehat D\backslash\{x\}$, then Proposition~\ref{prop:delcont} yields
$$
\tilde T_{N+1}(\widehat D;z) = \tilde T_N(D';z) + \sum_{\substack{Y\ni x:\\ Y\cap D'=\varnothing}} z(Y) \tilde T_N(D'\cup Y;z).
$$ Since all the arguments $D'$ and $D'\cup Y$ of $\tilde T_N$ on the right side of this identity are again finite discrete intervals, the inductive hypothesis and our assumption \eqref{eq:selection} imply that
$$\tilde T_{N+1}(\widehat D;z)\leq \e^{a(D')} + \sum_{\substack{Y\ni x:\\ Y\cap D'=\varnothing}} z(Y) \e^{a(D'\cup Y)}  \leq \e^{ a(\widehat D)}.$$ 
This completes the inductive proof of the inequality $\tilde T_N(D;z) \leq \exp( a(D))$. 
	Passing to the limit $N\to \infty$, we get $\tilde T(D;z) \leq \exp(a(D))<\infty$ for all intervals $D\subset \Z$. The convergence extends to all finite sets because $\log \tilde T(\cdot;z)$ is subadditive.}
\end{proof} 

\begin{proof}[Proof of Theorem~\ref{Tres}(a)] 
	Consider the selection rule $s(D):= \min D$ that picks the left-most point of a finite set. For $\alpha>0$ and $L \in \N$ let 
	$$
		V_L(D):= \bigl|\{ X\subset \Z\mid X\text{ is an $L$-rod, } X\cap D \neq  \varnothing\}\bigr|
	$$
	and $a(D)\equiv a_{\alpha,L}(D):= \alpha\, V_L(D)$. The choice of  $\alpha$ and $L$ is specified later. For a non-empty interval $D$, write  $x:= s(D) = \min (D)$, and $D':= D\setminus \{x\}$. If $D'$ is non-empty, then condition~\eqref{eq:selection} reads
	\be \label{eq:selection2}
		\sum_{\ell=1}^\infty z_\ell\, \e^{\alpha \ell}  \leq \e^\alpha - 1.
	\ee
	Indeed in that case for each $\ell$ there is a single $\ell$-rod $X$ that contains $x$ but does not intersect $D'$ (note that $x+1\in D'$ because of the assumption that $D$ is an interval and $x=\min D$). The rod is simply the $\ell$-rod with right-most endpoint $x$. Moreover $V_L (D'\cup X) - V_L(D') = \ell$ and $V_L(D'\cup \{x\}) - V_L(D') = 1$. 
	
	On the other hand if $D'$ is empty, then the number of $\ell$-rods that contain any given site $x\in \Z^d$ is equal to $\ell$ and the number of $L$-rods intersecting an $\ell$-rod is equal to $L+\ell-1$, therefore 
	condition~\eqref{eq:selection} reads instead 
	\be \label{eq:selection3}
		\sum_{\ell=1}^ \infty  \ell z_\ell\, \e^{\alpha( \ell+L-1)} \leq \e^{\alpha L} - 1. 
	\ee
	The proof of Theorem~\ref{Tres} is complete once we check the existence of $\alpha>0$ and $L\in \N$ such that the inequalities~\eqref{eq:selection2} and~\eqref{eq:selection3} hold true. Set 
	$$
		h(u):= 1+ \sum_{\ell=1}^\infty z_\ell\, u^\ell\qquad (u\in \R_+).
	$$
	Conditions~\eqref{eq:selection2} and~\eqref{eq:selection3} are equivalent to 
	\be \label{eq:selection4} 
		h(\e^\alpha) \leq \e^\alpha, \quad h'(\e^\alpha) \leq 1- \e^{-\alpha L}. 
	\ee
	Notice that $h$ is convex and monotone increasing with $h(0)=1$. The assumption~\eqref{onedTsuff} yields the existence of some $u = \e^\alpha >0$ such that~$h(u)< u$. 	On the other hand, clearly $h(0) =1>0$. Therefore the mean-value theorem yields the existence of a point $\tilde u \in (0,u)$ such that $h(\tilde u) = \tilde u$. The point $\tilde u$ is necessarily larger then $1$ because $h(\tilde u)$ is.  Suppose by contradiction that $h'(\tilde u) \geq 1$. Then the convexity of $h$ implies 
	$$
		h(u) \geq h(\tilde u)  + h'(\tilde u) (u-\tilde u) \geq h(\tilde u)  +  (u-\tilde u) = u,
	$$
	which contradicts the assumption $h(u)<u$. Therefore $h(\tilde u) =\tilde u>1$ and $h'(\tilde u)<1$. Replacing $\alpha$ with $\tilde \alpha:= \log \tilde u$ if needed, and picking $L=L(\alpha)$ large enough, we find that ~\eqref{eq:selection4} is satisfied for some $\alpha>0$. This concludes the proof. 
\end{proof} 

\begin{proof}[Proof of Theorem~\ref{Tres}(b)]
	Let $a(D):=\log \tilde T(D;z) = \log T(D;-z)$. In view of Eq.~\eqref{eq:tdz} and the alternating sign property,  we have 
	$$
		a(D) = \sum_{k=1}^\infty \frac{1}{k!}\sum_{(Y_1,\ldots, Y_k)\in \mathbb X^k} \1_{\{\exists i:\, Y_i \cap D\neq \varnothing\}} \bigl|\varphi_k^\mathsf T(Y_1,\ldots, Y_k) \bigr|\, z(Y_1) \cdots z(Y_k). 
	$$
	 \blue{By Proposition~\ref{prop:delcont}}, for every $D\subset \Z\setminus \{1\}$, we have 
	\be \label{eq:tonksnec1}
		\sum_{\substack{Y\ni 1, \\ Y\cap D= \varnothing}} z(X) \e^{a(D\cup Y) - a(D)} \leq \e^{a(D\cup \{1\}) - a(D)} - 1. 
	\ee
	Let us choose $D\subset \Z\cap (-\infty,0]$ with $0 \in D$. Then for every given $\ell\in \N$, the unique rod of length $\ell$ that contains $1$ but does not intersect $D$ is the rod $\{1,\ldots,\ell\}$, and we obtain
	\be  \label{eq:tonksnec2}
		\sum_{\ell=1}^\infty z_\ell\,  \e^{a(D\cup \{1,\ldots, \ell\}) - a(D)} \leq \e^{a(D\cup \{1\}) - a(D)} - 1. 
	\ee
	Let $D_0:=D$ and for $m\geq 1$ set  $D_{m}:= D\cup \{1,\ldots,m\}$. 
	The exponent on the left-hand side in~\eqref{eq:tonksnec2}  may be written as 
	\be  \label{eq:tonksnec3}
		a(D\cup \{1,\ldots, \ell\}) - a(D)  = \sum_{m=1}^{\ell} \bigl( a( D_{m}) -  a(D_{m-1})  \bigr).
	\ee
	Now
	\begin{multline*}
		a(D_m) - a(D_{m-1})\\
			 = \sum_{k=1}^\infty \frac{1}{k!}\sum_{(Y_1,\ldots, Y_k)\in \mathbb X^k}\Bigl( \1_{\{\exists i:\, Y_i \cap D_m \neq \varnothing\}} - \1_{\{\exists i:\, Y_i \cap D_{m-1} \neq \varnothing\}} \Bigr)  \bigl| \varphi_k^\mathsf T(Y_1,\ldots, Y_k)\bigr| z(Y_1)\cdots z(Y_k).
	\end{multline*} 
	The only clusters $(Y_1,\ldots, Y_k)$ that contribute to the sum are those that intersect $D_{m}$ but do not intersect $D_{m-1}$. This is only possible if one of the $Y_i$'s contains $m$ and all of them are contained in $\Z\cap [m,\infty)$. Thus 
	\begin{multline*}
		a(D_m) - a(D_{m-1})\\
			= \sum_{k=1}^\infty \frac{1}{k!}\sum_{(Y_1,\ldots, Y_k)\in \mathbb X^k} \1_{\{\exists i:\, Y_i \ni m\}} \1_{\{\forall i:\, Y_i \subset [m,\infty)\}}  \bigl| \varphi_k^\mathsf T(Y_1,\ldots, Y_k)\bigr| z(Y_1)\cdots z(Y_k).
	\end{multline*} 
	Because of the translational invariance, the value of the sum does not depend on $m$. Thus $a(D_m) - a(D_{m-1}) = \alpha>0$ for all $m\geq 1$ and some $\alpha >0$. Turning back to~\eqref{eq:tonksnec3}, we obtain 
	$$
		a(D\cup \{1,\ldots, \ell\}) - a(D) = \ell \alpha 
	$$
	and then~\eqref{eq:tonksnec2} yields $\sum_{\ell=1}^\infty z_\ell \exp(\alpha \ell) \leq \exp( \alpha) - 1$. 
\end{proof} 

\subsection{Tonks gas on $\R$}

Next, we want to consider the  continuous  version of the one-dimensional Tonks gas. Let $(L_\ell)_{\ell\in \N}$ be  sequence of strictly positive numbers and $\mathbb X$ the space of compact intervals $I\subset\R$ with lengths $|I|\in \{L_\ell\mid \ell\in \N\}$. The map $\R\times \N$, $(x,\ell)\mapsto [x-L_\ell/2,x+L_\ell/2]$ is a bijection between $\R\times \N$ and $\mathbb X$. The reference measure $\lambda$ is defined by the equality 
$$
	\int_\mathbb X h(X) \lambda(\dd X) = \sum_{\ell=1}^\infty \int_{-\infty}^\infty h\bigl(\bigl[ x - \tfrac{L_\ell}{2},x+ \tfrac{L_\ell}{2}\bigr]\bigr) \dd x
$$
for all non-negative measurable functions $h:\mathbb X\to \R_+$. 
We assume that the activity is of the form 
\begin{equation*}
	z(X) = \begin{cases} 
					z_\ell, &\quad X = [x,x+L_\ell]\, \text{ for some }\ell \in \N, x\in \R,\\
					0, &\quad \text{else}
			\end{cases} 
\end{equation*} 
for some sequence $(z_\ell)_{\ell\in \N}$ of non-negative numbers. We assume that rod lengths are bounded from below, i.e., there exists $\delta>0$ such that 
\be \label{eq:minlengths}
	\inf_{\ell\in \N} L_\ell \geq \delta. 
\ee

From here on, we will consider  the following chopping map: For $X=[x,x+L_\ell]\in \mathbb X$, let $C(X) = \{E_1,\ldots, E_m\}$ consist of the intersections of $X$ with the intervals $[x+(k-1)\eps, x+k\eps)$ with $k\in \Z$, where $\eps\in (0,\delta)$. 
	The space of snippets $\mathbb E_\eps$ consists of  intervals $[a,b]$ and $[a,b)$  of length $b-a\leq \eps$.

\begin{thm} \label{thm:cont1T}
	In the setup of multi-type Tonks gas on $\mathbb{R}$, under the assumption~\eqref{eq:minlengths}: 
	\begin{enumerate} 
		\item [(a)] Suppose there exists $\alpha>0$ such that 
	\be  \label{c1dTsuff}
		\sum_{\ell=1}^\infty \e^{\alpha L_\ell}z_\ell<\alpha.
	\ee
	Then the expansion for $T(D;z)$ is absolutely convergent, for all bounded sets $D\subset \R$. 
		\item[(b)] Conversely, if $T(D;z)$ is absolutely convergent for all bounded subsets $D\subset \R$, then there exists $\alpha>0$ such that~\eqref{c1dTsuff} holds true with ``$\leq$'' instead of ``$<$''.
	\end{enumerate} 
\end{thm}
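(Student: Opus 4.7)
For (a), I would follow the template of Theorem~\ref{Tres}(a) adapted to the continuum. Setting $h(\alpha):=\sum_{\ell=1}^\infty z_\ell e^{\alpha L_\ell}$, which is convex with $h(0)=\sum_\ell z_\ell>0$, the hypothesis $h(\alpha_0)<\alpha_0$ yields by the intermediate value theorem a fixed point $\alpha^*\in(0,\alpha_0)$ of $h$ with $h'(\alpha^*)<1$ strictly, by exactly the convexity argument used in the proof of Theorem~\ref{Tres}(a). I would use the ansatz $a(D):=\alpha^*V_L(D)$, with $V_L(D):=\int_\R\1_{[x,x+L]\cap D\neq\varnothing}\,dx$ for $L>0$ to be chosen large, and apply Theorem~\ref{thm:main2} with the leftmost-snippet selection and $\eps<\delta$. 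Since every rod has length at least $\delta>\eps$, two mutually disjoint rods cannot both intersect the same snippet, so the $k\geq2$ terms of the Kirkwood--Salsburg integral vanish. A direct computation with the $k=1$ term (parameterized by the gap between the extra rod and $D'$) reduces the inequality, for $D$ consisting of at least two adjacent snippets, to $h(\alpha^*)\leq\alpha^*$, while for a single-snippet $D$ it reduces to $h'(\alpha^*)+\eps\,h(\alpha^*)\leq e^{\alpha^*\eps}-e^{-\alpha^*L}$, which holds for $\eps$ small and $L$ large thanks to $h'(\alpha^*)<1$.

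For (b), I would set $a(D):=\log\tilde T(D;z)\geq 0$. Using~\eqref{eq:tdz} and the translational invariance of the activity, the cluster measure disintegrates into ``translation class $\times$ translate'', from which one reads off that $g(L):=a([0,L])$ is exactly affine, $g(L)=\alpha L+\beta$, with $\alpha$ the intensity of translation classes of clusters and $\beta$ their mean spatial extent; the $k=1$ cluster term alone gives $\alpha\geq\sum_\ell z_\ell>0$. Applying the fixed-point equation of Proposition~\ref{prop:cavityint} to $D=[-M,0]$ chopped into snippets of size $\eps<\delta$ with $E_s$ the rightmost snippet, the $k\geq 2$ contributions vanish as in (a), leaving the exact equality
\[
e^{\alpha\eps}-1\;=\;\sum_{\ell=1}^\infty z_\ell\int_0^{\eps}e^{a(D'\cup Y_\ell(s))-a(D')}\,ds,
\]
where $Y_\ell(s)$ is the rod of length $L_\ell$ sitting immediately to the right of $D'$ with gap $s$.

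The crux of (b) is the lower bound $a(D'\cup Y_\ell(s))-a(D')\geq \alpha L_\ell$. By inclusion--exclusion on the cluster measure, the left-hand side equals $\alpha L_\ell+\beta-\mathrm{Ov}(s)$, where $\mathrm{Ov}(s)$ is the contribution of clusters hitting both $D'$ and $Y_\ell(s)$. Because rods in one dimension are intervals and connected clusters are path-connected through pairwise intersections of rods, the support of any connected cluster is itself a single interval; hence any cluster contributing to $\mathrm{Ov}(s)$ must cover the touch-point where $D'$ and $Y_\ell(0)$ meet, so it contributes to $\mathrm{Ov}(0)=\beta$ as well, yielding $\mathrm{Ov}(s)\leq\beta$. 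Substituting into the equality and dividing by $\eps$ gives $(e^{\alpha\eps}-1)/\eps\geq\sum_\ell z_\ell e^{\alpha L_\ell}$; letting $\eps\to 0^+$ concludes (b). I anticipate two main obstacles: in (a), dealing with $D\in\mathbb D_\eps$ whose snippets are not adjacent, which I would tackle either by a direct case analysis exploiting that a larger gap correspondingly enlarges the right-hand side of the Kirkwood--Salsburg inequality, or via a continuous analog of Proposition~\ref{prop:selection} together with the monotonicity $A\subset B\Rightarrow\tilde T(A;z)\leq\tilde T(B;z)$ to extend convergence from intervals to all bounded $D$; and in (b), rigorously justifying both the exact affine form of $g$ and the monotonicity $\mathrm{Ov}(s)\leq\mathrm{Ov}(0)$, both of which rely essentially on the one-dimensional geometry.
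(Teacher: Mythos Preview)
Your overall strategy for both parts matches the paper's: the ansatz $a=\alpha V_L$ with leftmost-snippet selection for (a), and $a=\log\tilde T(\cdot;z)$ together with the affine-on-intervals property for (b). The paper also supplies exactly the ``continuous analog of Proposition~\ref{prop:selection}'' you anticipate (its Proposition~\ref{prop:cont_selection}), reducing to intervals and extending by monotonicity.

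There is one genuine error in (a): your claim that ``two mutually disjoint rods cannot both intersect the same snippet'' is false. When $D'=\varnothing$, a snippet $E_1=[0,\eps]$ can be hit from both sides --- e.g.\ $Y_1=[-L_1+\tfrac{\eps}{3},\tfrac{\eps}{3}]$ and $Y_2=[\tfrac{2\eps}{3},\tfrac{2\eps}{3}+L_2]$ are disjoint yet both meet $E_1$. Your statement is correct only when $D'$ is a nonempty interval adjacent to $E_1$ on the right, since then admissible rods must have their right endpoint inside $E_1$ and at most one fits. For the single-snippet case the paper keeps the $k=2$ term and bounds it by $O(\eps^2)$; this extra term must be added to the left of your inequality $h'(\alpha^*)+\eps\,h(\alpha^*)\leq e^{\alpha^*\eps}-e^{-\alpha^*L}$, after which the conclusion still goes through for $\eps$ small.

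For (b), your $k\geq2$ vanishing claim \emph{is} correct (the adjacent $D'$ forces all admissible rods to one side). The paper, however, bypasses your overlap inequality $\mathrm{Ov}(s)\leq\beta$ by computing the increment exactly: defining $\alpha$ as the mass of clusters pinned at their rightmost endpoint, translational invariance gives $a(I\cup J)-a(J)=\alpha\,|I|$ for adjacent intervals, and since $a$ does not see gaps of width $\leq\eps$ one obtains $a(D'\cup Y_\ell(s))-a(D')=\alpha(L_\ell+s)$ on the nose. Plugging this into the fixed-point identity yields $\sum_\ell z_\ell e^{\alpha L_\ell}\leq\alpha$ directly, without any limiting argument in $M$ or monotonicity of $\mathrm{Ov}$.
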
 

\begin{rem}\blue{The theorem essentially recovers the necessary and sufficient convergence criterion from~\cite{jansen2015tonks} (derived there for the activity expansion of the pressure in the system). The sufficient condition in \cite{jansen2015tonks} is~\eqref{c1dTsuff} with ``$\leq$'' instead of ``$<$''. Again, while the result itself is not novel, its proof demonstrates the potential of our approach to go beyond the Fern{\'a}ndez-Procacci criterion --- also in continuous setups.}
\end{rem}

\red{First we prove an auxiliary result, the analogue of Proposition \ref{prop:selection} for the continuous setup, which is not quite as trivial. We introduce the following notion: Define the $\varepsilon$-gap-filling operation $\ \widehat{\cdot}\ $ by setting $\widehat{D}:=D\cup\{x\in\R\vert\ \exists y, z\in D \textit{ with } y<x<z \textit{  such that } z-y\leq \varepsilon \}$ for any $D\subset \R$. Let $\mathscr P$ be some subset of the power set of $\R$, we say that a function $\xi:\mathscr P\to\R$ does not see gaps of diameter at most $\varepsilon$ if it is invariant under the $\varepsilon$-gap-filling operation, i.e., if $\xi(D)=\xi(\widehat D)$ for all $D\in \mathscr P$.}

\begin{prop} \label{prop:cont_selection}
Suppose that there exists a non-negative, measurable map $a(\cdot)$ defined on finite unions of (bounded) intervals which \red{does not see gaps of diameter at most $\varepsilon$}  and satisfies the following system of inequalities: For any (bounded) interval $D$ with $C(D)=\{E_1,\ldots, E_n\}$, $E_1,\ldots, E_n\in \E_\varepsilon$, where the chopping map $C$ is defined as above, there is a subinterval $E_s\subset D$ of length at most $\varepsilon$, such that
\begin{align}\label{eq:cont_selection}
			\sum_{k=1}^\infty \frac{1}{k!}\int_{\mathbb X^k} I(E_s; D'; Y_1,\ldots, Y_k) \e^{a(D'\cup Y_1\cup \cdots \cup Y_k) - a(D')} \lambda_z^k(\dd \vect Y) 
			\leq \e^{a(E_s\cup D') - a(D')}-1,
		\end{align}
where we set $D':=D\backslash E_s$ and $I(E_s;D'; Y_1,\ldots, Y_k)$ is the indicator from Eq.~\eqref{eq:idef}. Then $T(D;z)$ is absolutely convergent for all bounded subsets $D\subset \mathbb R$.
\end{prop}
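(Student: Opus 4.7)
The plan is to adapt the proof of Proposition~\ref{prop:selection} from the discrete Tonks setting by running an induction on the partial sums $\tilde T_N(D;z)$ restricted to bounded intervals. The base case $N=1$ is immediate since $\tilde T_1 \in \{0,1\} \leq \exp(a(D))$. For the inductive step, given a bounded interval $D$, I would invoke the hypothesis to obtain the subinterval $E_s$ satisfying~\eqref{eq:cont_selection}, choose the chopping and the selection rule $s$ so that $E_s$ is the selected snippet of $D$, set $D':=D\setminus E_s$, and apply the recurrence $\tilde T_{N+1} = e + \tilde \kappa_z^s \tilde T_N$ from Proposition~\ref{prop:cavityint}.

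The geometric heart of the argument is that assumption~\eqref{eq:minlengths} together with $\eps<\delta$ forces the sum in~\eqref{def:contKS-op} to collapse dramatically. Indeed, under the indicator $I(E_s; D'; Y_1,\ldots, Y_k)$ with $D'\neq\varnothing$, if $E_s$ is placed at a boundary end of $D$ then each admissible rod $Y_i$ has an endpoint lying inside $E_s$ (an interval of length $\leq\eps$); since two disjoint rods of length $\geq\delta>\eps$ cannot both have such endpoints, only the $k=1$ term contributes. Furthermore, all gaps appearing in $D'\cup Y_1\cup\cdots\cup Y_k$ have diameter at most $\eps$, so the $\eps$-gap-filling invariance of $a$ applies; the same invariance holds for $\tilde T_N(\cdot;z)$, because rods of length $\geq\delta$ cannot fit into a gap of size $\leq\eps$, making the indicators $\1_{\{Y\cap\cdot\neq\varnothing\}}$ insensitive to filling in these gaps. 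Thus the inductive hypothesis applied to the bounded interval $\widehat{D'\cup Y_1\cup\cdots\cup Y_k}$ is available, and combining it with the KS inequality~\eqref{eq:cont_selection} closes the inductive step. The single-snippet case $D'=\varnothing$ is handled analogously: several rods can contribute, but they all cluster around $E_s$ with pairwise gaps of diameter $\leq\eps$, so the gap-filled union is again a bounded interval.

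Passing to the limit $N\to\infty$ then yields $\tilde T(D;z)\leq \exp(a(D))<\infty$ for every bounded interval $D\subset\R$. To extend absolute convergence to an arbitrary bounded subset $D\subset\R$, I would invoke the subadditivity of $\log\tilde T(\cdot;z)$: Eq.~\eqref{eq:tdz} combined with the triangle inequality yields $\log\tilde T(D_1\cup D_2;z)\leq \log\tilde T(D_1;z)+\log\tilde T(D_2;z)$ for any bounded $D_1,D_2\subset\R$, and since every bounded set is contained in a bounded interval, finiteness on intervals transfers to finiteness on all bounded sets.

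The main obstacle is keeping the induction closed within the class of bounded intervals, because the combinatorial pieces $D'$ and $D'\cup Y_1\cup\cdots\cup Y_k$ produced by the recurrence are \emph{a priori} only finite unions of disjoint intervals. This is exactly where the $\eps$-gap-filling invariance of $a$ supplied by the hypothesis --- together with the parallel geometric invariance of $\tilde T_N(\cdot;z)$ stemming from~\eqref{eq:minlengths} --- is indispensable; the same minimum-length assumption also underlies the collapse of $\tilde \kappa_z^s$ to at most a single rod, without which even writing down a clean recursion would be problematic.
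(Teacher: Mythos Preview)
Your approach---induction on $\tilde T_N$ restricted to bounded intervals, using the $\varepsilon$-gap-filling invariance of both $a$ and $\tilde T_N$ to keep the recursion closed within intervals, then extending to arbitrary bounded sets---is essentially the paper's; the paper merely packages the gap-filling into a modified operator $\tilde{\mathscr K}^s_z$ before running the same induction. Two small remarks: the collapse to $k\leq 1$ (or $k\leq 2$ when $D'=\varnothing$) that you emphasize is correct but not needed for this proposition, since gap-filling alone already guarantees that $\widehat{D'\cup Y_1\cup\cdots\cup Y_k}$ is an interval (all $Y_i$ meet $E_s$, so every gap lies inside $E_s$ and has diameter $\leq\varepsilon$); and for the final extension the paper invokes monotonicity $\tilde T(D_1;z)\leq\tilde T(D_2;z)$ for $D_1\subset D_2$ (immediate from Eq.~\eqref{eq:tdz} and the alternating-sign property), which is what your last clause actually uses, rather than the subadditivity you name.
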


\begin{proof}
We can modify the Kirkwood-Salsburg-type \red{equations} $\tilde{\kappa}^s_z$ from Chapter~\ref{sec3.3} as follows:  If $\xi(\cdot)$ is a function from $\mathbb D_\eps$ to $\R_+$ \blue{that does not see gaps of diameter at most $\varepsilon$ and satisfies the measurability assumption from Theorem~\ref{thm:main2}}, define the function $\tilde{\mathscr{K}}^s_z\xi$ (possibly assuming the value ``$\infty$") by
\begin{multline*}
		(\tilde{\mathscr{K}}^s_z\xi)\bigl( D\bigr)
			:= \1_{\{n\geq 2\}}\, \xi\Bigl(\reallywidehat{E'_2\cup\ldots\cup E'_n}\Bigr) \\
				+\sum_{k=1}^\infty \frac{1}{k!} \int_{\mathbb X^k} I(E_s;E'_2\cup \cdots \cup E'_n; Y_1,\ldots,Y_k) \xi\Bigl(\reallywidehat{E'_2\cup\ldots\cup E'_n\cup Y_1\cup\ldots\cup Y_k} \Bigr)\lambda_z^k(\dd \vect Y),
\end{multline*}
for $D\in \mathbb D_\eps$ with $E_1,\ldots,E_n\subset \mathbb E_\eps$ and $C(D)=\{E_1,\ldots, E_n\}$, where $\widehat{D}$ is given by \red{``filling gaps" of diameter at most $\varepsilon$ in $D\subset \R$} as defined above.

Notice that for any such function $\xi(\cdot)$ \red{(that does not see gaps of diameter at most $\varepsilon$ and satisfies the measurability assumption from Theorem~\ref{thm:main2})} 
\begin{align*}
\tilde{\mathscr{K}}^s_z\xi=\tilde \kappa^s_z\xi
\end{align*}
holds, where $\tilde \kappa^s_z\xi$ is the function defined by \eqref{def:contKS-op}. \red{In particular, the left hand side of the equation is well-defined.} Since the functions $\tilde T_N(\cdot;z)$, $N\in\N$, and $\tilde T(\cdot;z)$ \red{do not see gaps of diameter at most $\varepsilon$} (by our assumption $\varepsilon<\delta$ and the respective definitions), Proposition \ref{prop:cavityint} implies
	$$
		\tilde T\bigl(\cdot;z\bigr) = e(\cdot) +\tilde{\mathscr K}_z^s \tilde T(\cdot;z).
	$$
and 
	$$
		\tilde T_{N+1}\bigl(\cdot;z\bigr) = e(\cdot) +  (\tilde{\mathscr K}_z^s \tilde T_N)\bigl(\cdot;z\bigr).
	$$
Assumption \eqref{eq:cont_selection} is equivalent to $e(D)+(\tilde{\mathscr K}^s_z\e^a)(D)\leq\e^{a(D)}$ for any interval $D\subset \R$. We prove by induction over $N$ that $\tilde T_N(D;z) \leq \e^{a(D)}$ for all $N\in \N$ and all intervals $D\subset \R$. For $N=1$, we have by our assumption
	$$
		\tilde T_1(D,z) = e(D)\leq e(D) + (\tilde{\mathscr K}_z^s \e^a)(D)\leq \e^{a(D)}
	$$
for all intervals $D\subset \R$.
	Next, assume for some $N\in \N$ that $\tilde T_N(D;z) \leq \e^{a(D)}$ for all intervals $D\subset\R$, then 
	$$
		{\tilde T}_{N+1}(D;z) = e(D)+\tilde{\mathscr K_z}^s \tilde{T}_{N}(D;z) \leq  e(D) +(\tilde{\mathscr  K}_z^s \e^a)(D)\leq \e^{a(D)},
	$$
where the first inequality holds by the inductive hypothesis, by monotonicity of $\tilde{\mathscr K}_z^s$ on non-negative functions and by the observation that for intervals $D\subset \R$ all the arguments of $\xi$ appearing in the definition of $(\tilde{\mathscr K}_z^s\xi)(D)$ are again intervals.

	This completes the induction and proves $T_N(D;z)\leq \e^{a(D)}$ for all $N\in\N$ and all intervals $D\subset\R$. Taking the limit $N\to\infty$ yields the corresponding bound for $\tilde{T}(D;z)$. The claim for arbitrary bounded subsets follows since every bounded subset is contained in some compact interval and
\begin{align*}
\tilde{T}(D_1;z)\leq\tilde T(D_2,z)
\end{align*}
for $D_1\subset D_2\subset \R$.
\end{proof}

\begin{proof}[Proof of Theorem~\ref{thm:cont1T}(a)]
	In analogy to the discrete case, for $\alpha>0$ and $L >0$ let 
	$$
		V_L(D):= \int_{-\infty}^\infty \1_{\{ [x,x+L]\cap D\neq \varnothing\}}\, \dd x.
	$$
	and $a(D)\equiv a_{\alpha,L}(D):= \alpha\, V_L(D)$. The choice of  $\alpha$ and $L$ is specified later in the proof. We apply Proposition~\ref{prop:cont_selection}  with the choice of the chopping map introduced at the beginning of this subsection and the selection rule $s$ that picks the leftmost snippet.

Remember the indicator $I(E_1;D'; Y_1,\ldots, Y_k)$ from Eq.~\eqref{eq:idef}. 
We show that there exists $\alpha >0$ such that 
\be \label{cytes}
	\sum_{k=1}^\infty \frac{1}{k!}\int_{\mathbb X^k} I(E_1;D';Y_1,\ldots,Y_k) \e^{\alpha [V_L(D'\cup Y_1\cup \cdots \cup Y_k) - V_L(D')]} \lambda_z^k(\dd\vect Y) 
		\leq \e^{\alpha V_L(D'\cup E_1) - V_L(D')} - 1
\ee
for all intervals $D'=[a,b)\subset \R$ or $D' = [a,b]$, including the empty set $D'= \varnothing$,  and all snippets $E_1 = [(k-1)\eps, a)\in \mathbb E_\eps$. 

If $D'$ is non-empty, then because of $\inf_{\ell\in \N} L_\ell \geq\eps$ and $|E_1|\leq \eps$ there cannot be two or more disjoint rods in $X$ that intersect $E_1$ but do not intersect $D'$, so the inequality to be proven reduces to 
\be\label{cytes2}
	\int_{\mathbb{X}} \e^{\alpha (V_L(D'\cup Y)-V_L(D'))} \mathbbm{1}_{\{Y\cap E_1\neq \varnothing,\ Y\cap D'= \varnothing\}}\lambda_z(\dd Y)\leq \e^{\alpha(V_L(D'\cup E_1)-V_L(D'))} - 1. 
\ee
Assuming that $L \geq \eps$, this is equivalent to
\be \label{cytes3}
	\sum_{\ell=1}^\infty z_\ell \int_0^{|E_1|}\, \e^{\alpha (L_\ell+x)} \dd x \leq \e^{\alpha|E_1|} - 1. 
\ee
The integral on the left-hand side is equal to $\exp( \alpha L_\ell) [\exp(\alpha |E_1|)- 1]/\alpha$, so we find that~\eqref{cytes} is equivalent to 
$$
	\sum_{\ell=1}^\infty z_\ell\, \e^{\alpha L_\ell} \leq \alpha,
$$
which holds true because of the assumption~\eqref{c1dTsuff}.

If $D'$ is empty, we note that there can be at most two disjoint rods in $X$ that intersect the snippet $E_1$, hence~\eqref{cytes} becomes 
\begin{multline} \label{cytes4}
	\int_\mathbb X \e^{\alpha V_L(Y)} \1_{\{Y\cap E_1\neq \varnothing\}} \lambda_z(\dd Y) 
	+ \frac12 \int_{\mathbb X^2} \e^{\alpha V_L(Y_1\cup Y_2)} \1_{\{Y_1\cap E_1\neq \varnothing,\, Y_2\cap E_1\neq \varnothing,\, Y_1\cap Y_2 = \varnothing\}} \lambda_z^2\bigl(\dd (Y_1,Y_2)\bigr) \\
	\leq \e^{\alpha V_L(E_1)} -1. 
\end{multline} 
The right-hand side is equal to $\exp(\alpha (L + |E_1|)) -1$. The first term on the left-hand side is equal to 
$$
	\sum_{\ell=1}^\infty z_\ell (L_\ell + |E_1|) \e^{\alpha (L + L_\ell)} = 	\sum_{\ell=1}^\infty z_\ell\, L_\ell \,  \e^{\alpha (L + L_\ell)} + O(\eps).
$$
The second term on the left-hand side of~\eqref{cytes4} is equal to 
$$
	\sum_{\ell,r=1}^\infty z_{\ell} z_r \int_{E_1^2}  \1_{\{x<y\}}\, \e^{\alpha V_L( [x - L_\ell, y+L_r])} \dd x \dd y
$$
which is bounded by 
$$
	\Bigl(\sum_{\ell=1}^\infty z_\ell \e^{\alpha L_\ell} \Bigr)^2 \e^{\alpha (\eps+L)}|E_1|^2 = O(\eps^2).
$$
For the inequality~\eqref{cytes4} to be satisfied, it is sufficient that 
\be \label{cytes5}
	\sum_{\ell=1}^\infty L_\ell z_\ell  \e^{\alpha L_\ell} + O(\eps) \leq \e^{\alpha |E_1|} - \e^{-\alpha L}. 
\ee
Arguments similar to the proof of  Theorem~\ref{Tres}(b), applied to the convex function $h:\R_+\to \R$, $h(u): = 1+\sum_{\ell=1}^\infty z_\ell u^{L_\ell}$, show that under condition~\eqref{c1dTsuff} there exists $\alpha >0$ such that not only condition~\eqref{c1dTsuff} holds true but in addition 
$$
	h'(\e^\alpha) =	\sum_{\ell=1}^\infty L_\ell z_\ell \e^{\alpha L_\ell} < 1. 
$$
Thus one can choose $L = L(\alpha)$ large enough and $\eps$ small enough so that~\eqref{cytes5} and hence~\eqref{cytes4} hold true. 
\end{proof}

\begin{proof}[Proof of Theorem~\ref{thm:cont1T}(b)]
We proceed as in the proof of Theorem~\ref{Tres}(b). Suppose that the expansions are absolutely convergent and define 
$$
	a(D):= \log T(D;-z) = \sum_{k=1}^\infty \frac{1}{k!}\int_{\mathbb X^k} \1_{\{ \exists i\in [n]:\, Y_i \cap D \neq \varnothing\}} \bigl|\varphi_{k}^\mathsf T(Y_1,\ldots, Y_k)\bigr|\, \lambda_z^k(\dd \vect Y). 
$$
Then \blue{by Proposition 3.13 and since $\inf_{\ell\in \N} L_\ell$ is bounded from below by $
\eps>0$}, 
\be \label{eq:contnec1}
	\int_{\mathbb X} \1_{\{ X\cap E_1 \neq \varnothing,\, X\cap D'= \varnothing\}} \, \e^{a(D'\cup X) - a(D')} \lambda_z(\dd X)  \leq \e^{a(D'\cup E_1) - a(D')} - 1
\ee
for example for $E_1= [0,\eps)$ and $D' = [\eps,\eps+L]$ with $L>0$ and $\eps$ sufficiently small. 

Before we evaluate the two sides of the inequality, we note two useful properties of $a(\cdot)$. First, \red{the map $a$ does not see gaps of diameter at most $\eps$}. Precisely, if $X=[x- L_\ell, x]$ with $x\in [0,\eps)$ and $D'$ is as above, then 
$$
 a(D'\cup X) = a([x-L_\ell,\eps+L]).
$$
Indeed, any rod $Y_i\in \mathbb X$ that intersects $[0,\eps)$ must also intersect $D'\cup X$ because it has a length $|Y_i| \geq \eps$. 
Second, because of translational invariance, the weight $a(D)$ of a non-empty  interval depends only on its length $|D|$. We check that in addition, it is an affine function of the length. For $x\in \R$, define
$$
	\alpha(x):= \sum_{\ell=1}^\infty z_\ell \sum_{k=0}^\infty \frac{1}{k!} \int_{\mathbb X^k} \1_{\{\forall i \in [k]:\, Y_i \subset (-\infty, x]\}}  \bigl|\varphi_{1+k}^\mathsf T(Y_1,\ldots, Y_k, [x-L_\ell,x])\bigr|\, \lambda_z^k(\dd \vect Y).
$$
The quantity $\alpha(x)$ is best thought of as an integral over clusters in which  the right-most rod $[x-L_\ell, x]$  has its right end pinned at $x$. By translational invariance, $\alpha(x)$ is actually independent of $x$ and we may write $\alpha(x) \equiv \alpha$ for some scalar $\alpha \geq 0$. 
Now let $I= [a,b]$ and $J= [b,c]$ with $a<b<c$. Then 
$$
	a(I\cup J) - a(J) = \sum_{k=1}^\infty \frac{1}{k!}\int_{\mathbb X^k} \1_{\{ \exists i\in [k]:\, Y_i \cap I \neq \varnothing\}}\1_{\{\forall i\in [k]:\, Y_i \cap J = \varnothing\}} \bigl|\varphi_{k}^\mathsf T(Y_1,\ldots, Y_k)\bigr|\, \lambda_z^k(\dd \vect Y).
$$
Any cluster $(Y_1,\ldots, Y_k)$ that intersects $I$ but not $J$ has its right-most end in $[a,b)$, therefore 
$$
	a(I\cup J) - a(J) = \int_I \alpha(x) \dd x = \alpha\, |I|. 
$$
With these two observations, the left-hand side of~\eqref{eq:contnec1} becomes 
$$
	\sum_{\ell=1}^\infty z_\ell \int_0^\eps \e^{ a( [x- L_\ell,x]\cup D') - a(D')} \dd x
		=\sum_{\ell=1}^\infty z_\ell \int_0^\eps \e^{\alpha (x+ L_\ell)} \dd x 
			= \sum_{\ell=1}^\infty z_\ell \e^{\alpha L_\ell} \frac{1}{\alpha}(\e^{\alpha \eps} - 1)
$$
while the right-hand side of~\eqref{eq:contnec1} is $\exp(\alpha \eps) - 1$. It follows that 
\begin{equation*}
	 \sum_{\ell=1}^\infty z_\ell \e^{\alpha L_\ell} \leq \alpha.  \qedhere
\end{equation*} 
\end{proof}

\appendix 

\section{Proof of Lemma~\ref{kl2}}\label{AppB}
\begin{proof}[Proof of Lemma \ref{kl2}]\label{proof:k12}
We show that the system of inequalities
\begin{align}\label{eqkl3}
\frac{1+\sum\limits_{k\geq 1}\sum\limits_{{\substack{Y=\{y_1,\ldots,y_k\}\\y_i \nsim x_1,\ y_i\sim y_j}}}\prod\limits_{i=1}^k\mu(y_i)\prod\limits_{w\in\Gamma(Y)}\e^{\mu(w)}}{\prod\limits_{q\in\Gamma(x_1)\cap\Gamma(X)}\e^{\mu(q)}}\geq 1+\sum\limits_{k\geq 1}\sum\limits_{\substack{\substack{Y=\{y_1,...,y_k\}\\y_i \nsim x_1,\ y_i\sim y_j}\\y_i\sim X}}\prod\limits_{i=1}^k\mu(y_i)\prod\limits_{w\in\Gamma(Y)\cap\Gamma(X)^C}\e^{\mu(w)},
\end{align}
which is equivalent to \eqref{eqkl2}, holds under the assumptions of the lemma. We do so by proving the following three claims. However, first we would like to introduce some additional notation to complement the notation from Section~\ref{sec:abstract-polymers}. \\\\
 For given $x_1\in\mathbb{X}$ and $X=\{x_2,...,x_p\}\subset \mathbb{X}$ let $Q$ denote the set $\Gamma(x_1)\cap\Gamma(X)$ and let $\mathcal C$ denote the set of (non-empty) compatible subsets of $Q$. Furthermore, we define the family $(A_U)_{U\subset Q}$, $A_U=A_U(x_1,Q,\mu)$, indexed by all the subsets $U\subset Q$ (including the empty set), by
\begin{align*}
A_U=A_U(x_1,Q,\mu):=\sum\limits_{k\geq 1}\sum\limits_{Y=\{y_1,...,y_k\}}\prod\limits_{i=1}^k\mu(y_i)\prod\limits_{w\in\Gamma(Y)\backslash U}\e^{\mu(w)},
\end{align*}
where the sum is over subsets $Y=\{y_1,...,y_k\}\subset\mathbb{X}$ such that the following constraints are satisfied: $Y$ is an compatible set, $Y\subset \Gamma(x_1)$, $Y\cap Q=\varnothing$ and $U= \Gamma(Y)\cap Q$.\\
Finally, define the family of coefficients $(\beta_U)_{U\subset Q}$, $\beta_U=\beta_U(x_1,Q,\mu)$, also indexed by all the subsets $U\subset Q$ (including the empty set), by
\begin{align*}
\beta_U=\beta_U(x_1,Q,\mu):=\prod\limits_{q\in Q\backslash U}\e^{-\mu(q)}+\sum\limits_{\substack{C\in\mathcal C\\C\cap U=\varnothing}}\prod\limits_{c\in C}\mu(c)\prod\limits_{w\in (Q\backslash U)\backslash \Gamma(C)}\e^{-\mu(w)}.
\end{align*}
Then the following statements hold true:
\begin{claim}\label{claimB1} The right-hand side of \eqref{eqkl3} is bounded from above by 
\begin{align*}
1+\sum_{U\subset Q}A_U.
\end{align*}
\end{claim}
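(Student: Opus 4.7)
The strategy is to reindex the sum on the right-hand side of~\eqref{eqkl3} by partitioning the admissible configurations $Y$ according to the subset $U\subset Q$ they ``see" in $Q=\Gamma(x_1)\cap \Gamma(X)$, and then to bound term-wise.

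First I would verify that the sets of configurations $Y$ appearing on either side match. The inner sum on the right-hand side of~\eqref{eqkl3} runs over compatible $Y=\{y_1,\ldots,y_k\}\subset\mathbb X$ with $y_i\in\Gamma(x_1)$ and $y_i\sim X$ for all $i$; the second condition means $Y\cap\Gamma(X)=\varnothing$, which is equivalent to $Y\cap Q=\varnothing$ since $Q\subset\Gamma(X)$ and $Y\subset\Gamma(x_1)$. In the definitions of the $A_U$'s, the admissible $Y$'s are precisely the compatible subsets of $\Gamma(x_1)$ with $Y\cap Q=\varnothing$. So the two families of $Y$'s coincide, and for each such $Y$ there is a unique $U\subset Q$ with $U=\Gamma(Y)\cap Q$. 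This shows that $\sum_{U\subset Q}A_U$ is a reindexing by the value of $\Gamma(Y)\cap Q$ of a single sum over all admissible $Y$.

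Next I would compare the exponential factors term-wise. For a fixed admissible $Y$ with associated $U=\Gamma(Y)\cap Q$, the exponential weight contributed on the right-hand side of~\eqref{eqkl3} is $\prod_{w\in\Gamma(Y)\cap\Gamma(X)^C}\e^{\mu(w)}$, while its contribution to $A_U$ is $\prod_{w\in\Gamma(Y)\setminus U}\e^{\mu(w)}$. Since $U\subset Q\subset \Gamma(X)$, we have the inclusion $\Gamma(Y)\cap\Gamma(X)^C\subset\Gamma(Y)\setminus U$, and so --- using $\mu\geq 0$, hence $\e^{\mu(w)}\geq 1$ --- the factor appearing in~\eqref{eqkl3} is bounded from above by the factor appearing in $A_U$. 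The common factor $\prod_{i=1}^k\mu(y_i)$ is non-negative, so this bound lifts to the full summand.

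Summing this term-wise bound over all admissible $Y$ with $\Gamma(Y)\cap Q=U$ yields exactly $A_U$, and then summing over $U\subset Q$ recovers (an upper bound for) the full inner sum on the right-hand side of~\eqref{eqkl3}. Adding the constant $1$ on both sides gives the statement of the claim. The argument has no analytical obstacle; the only thing to be careful about is the bookkeeping of which constraints on $Y$ are equivalent, in particular the identification $Y\cap Q=\varnothing\Leftrightarrow Y\cap\Gamma(X)=\varnothing$ valid under $Y\subset\Gamma(x_1)$.
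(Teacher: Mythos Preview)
Your proposal is correct and follows essentially the same approach as the paper: reorder the admissible $Y$'s according to $U=\Gamma(Y)\cap Q$, note that $Y\sim X$ together with $Y\subset\Gamma(x_1)$ is equivalent to $Y\cap Q=\varnothing$, and use the inclusion $\Gamma(Y)\cap\Gamma(X)^C\subset\Gamma(Y)\setminus U$ to compare the exponential weights term-wise. Your write-up is in fact slightly more detailed than the paper's, which states the key inclusion and leaves the rest to the reader.
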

\begin{claim}\label{claimB2}
The left-hand side of \eqref{eqkl3} is bounded from below by
\begin{align}\label{idlhs}
\beta_{\varnothing}+\sum\limits_{U\subset Q}\beta_UA_U.
\end{align}
\end{claim}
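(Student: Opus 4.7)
The plan is to rewrite the left-hand side of~\eqref{eqkl3} as a single sum over compatible subsets $Y\subset\Gamma(x_1)$ with an explicit weight, then decompose every such $Y$ uniquely as $Y=Y_1\sqcup Y_2$ with $Y_1:=Y\cap Q$ and $Y_2:=Y\setminus Q$, and finally bound the weight of $Y$ from below by a decoupled product whose $Y_1$- and $Y_2$-sums are precisely $\beta_U$ and $A_U$.

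First I would split $\prod_{w\in\Gamma(Y)}\e^{\mu(w)}=\prod_{w\in\Gamma(Y)\cap Q}\e^{\mu(w)}\cdot\prod_{w\in\Gamma(Y)\setminus Q}\e^{\mu(w)}$ and use the first factor to partially cancel the denominator $\prod_{q\in Q}\e^{\mu(q)}$ of the left-hand side of~\eqref{eqkl3}. This yields
\begin{align*}
\mathrm{LHS}=\sum_{Y}\prod_{y\in Y}\mu(y)\prod_{w\in\Gamma(Y)\setminus Q}\e^{\mu(w)}\prod_{q\in Q\setminus\Gamma(Y)}\e^{-\mu(q)},
\end{align*}
where $Y$ runs over all compatible subsets of $\Gamma(x_1)$, the empty set included (contributing $\prod_{q\in Q}\e^{-\mu(q)}$, which accounts for the ``$1+$'' in the numerator). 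With $Y_1,Y_2$ as above, compatibility of $Y$ is equivalent to $Y_1$ and $Y_2$ being individually compatible together with the constraint $Y_1\cap U=\varnothing$, where $U:=\Gamma(Y_2)\cap Q$.

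The key computation is to rewrite the weight of $Y$ in terms of $Y_1$, $Y_2$ and $U$. Using $\Gamma(Y)\cap Q=(\Gamma(Y_1)\cap Q)\cup U$, so that $Q\setminus\Gamma(Y)=(Q\setminus U)\setminus\Gamma(Y_1)$, together with the set identity
\begin{align*}
\prod_{w\in\Gamma(Y)\setminus Q}\e^{\mu(w)}=\prod_{w\in(\Gamma(Y_1)\setminus Q)\setminus\Gamma(Y_2)}\e^{\mu(w)}\cdot\prod_{w\in\Gamma(Y_2)\setminus Q}\e^{\mu(w)},
\end{align*}
the weight factorises as $\mathrm{weight}(Y)=\prod_{w\in(\Gamma(Y_1)\setminus Q)\setminus\Gamma(Y_2)}\e^{\mu(w)}\cdot w_1(Y_1,U)\cdot w_2(Y_2)$, where $w_1(Y_1,U):=\prod_{y\in Y_1}\mu(y)\prod_{q\in(Q\setminus U)\setminus\Gamma(Y_1)}\e^{-\mu(q)}$ and $w_2(Y_2):=\prod_{y\in Y_2}\mu(y)\prod_{w\in\Gamma(Y_2)\setminus Q}\e^{\mu(w)}$. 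Because $\mu\geq 0$, the prefactor is at least $1$, so $\mathrm{weight}(Y)\geq w_1(Y_1,U)\cdot w_2(Y_2)$.

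Summing this bound over $(Y_1,Y_2)$ grouped by $U\subset Q$ decouples: the $Y_1$-sum yields exactly $\beta_U$ (where $Y_1=\varnothing$ produces the first summand of $\beta_U$), while the $Y_2$-sum yields $A_U$ for $U\neq\varnothing$ and $1+A_\varnothing$ for $U=\varnothing$ (the ``$1$'' coming from $Y_2=\varnothing$). Summing over all $U$ gives $\beta_\varnothing+\sum_{U\subset Q}\beta_U A_U$, which is the desired lower bound. The main obstacle is the bookkeeping in the third paragraph: tracking how $\Gamma(Y)$ distributes across $Q$ and its complement without double counting, and identifying precisely the ``extra'' factor $\prod_{w\in(\Gamma(Y_1)\setminus Q)\setminus\Gamma(Y_2)}\e^{\mu(w)}$ as the overcount incurred by artificially decoupling the $Y_1$- and $Y_2$-weights. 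Once this identity is pinned down, non-negativity of $\mu$ delivers the lower bound immediately.
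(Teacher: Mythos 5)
Your proof is correct and takes essentially the same approach as the paper: both decompose each compatible $Y\subset\Gamma(x_1)$ into $Y_1=Y\cap Q$ and $Y_2=Y\setminus Q$, factorise the weight, and discard the same overcount factor $\prod_{w\in(\Gamma(Y_1)\setminus Q)\setminus\Gamma(Y_2)}\e^{\mu(w)}\geq 1$ to reach the decoupled lower bound. The only difference is in the order of summation (the paper first groups by $C=Y_1$ and then reorganises by $U$, while you group by $U$ directly), which makes your bookkeeping a bit more transparent but does not change the substance.
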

\begin{claim}\label{claimB3} The lower bounds $\beta_U\geq 1$ hold for every $U\subset Q$ and thus 
\begin{align*}
\beta_{\varnothing}+\sum\limits_{U\subset Q}\beta_UA_U\geq 1+\sum_{U\subset Q}A_U.
\end{align*}
\end{claim}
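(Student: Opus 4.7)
The second inequality of the claim is immediate from $\beta_U \geq 1$ combined with $A_U \geq 0$ (equivalently, $(\beta_\varnothing -1) + \sum_U (\beta_U - 1) A_U \geq 0$), so I focus on proving $\beta_U \geq 1$. Setting $R := Q \setminus U$, the constraint $C \cap U = \varnothing$ on compatible subsets $C \in \mathcal{C}$ of $Q$ becomes $C \subset R$, and one may rewrite
$$\beta_U = \beta(R) := \sum_{\substack{C \subset R \\ C \text{ compatible}}} \prod_{c \in C} \mu(c) \prod_{w \in R \setminus \Gamma(C)} e^{-\mu(w)}$$
(where $C = \varnothing$ contributes the leading term $\prod_{q \in R} e^{-\mu(q)}$). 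My plan is to prove $\beta(R) \geq 1$ by induction on $|R|$, the base case $R = \varnothing$ being trivial.

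For the induction step I would pick any $r \in R$ and decompose the sum defining $\beta(R)$ according to the status of $r$ relative to $C$: either $r \in C$, or $r \notin C$ with $r$ compatible with $C$, or $r \notin C$ with $r$ incompatible with some element of $C$. Reparametrizing the first case via $C = \{r\} \cup C'$ with $C' \subset R \setminus \Gamma(r)$ identifies its contribution as $\mu(r)\, \beta(R \setminus \Gamma(r))$, using $R \setminus \Gamma(C) = (R \setminus \Gamma(r)) \setminus \Gamma(C')$. Regrouping the other two cases and comparing against the decomposition of $\beta(R \setminus \{r\})$ (split by whether $C$ intersects $\Gamma(r) \cap R$ or not) should produce the identity
$$\beta(R) = \mu(r)\, \beta(R \setminus \Gamma(r)) + \beta(R \setminus \{r\}) - (1 - e^{-\mu(r)})\, T,$$
where $T := \sum_{C \subset R \setminus \Gamma(r),\, C \text{ compat.}} \prod_{c \in C} \mu(c) \prod_{w \in (R \setminus \{r\}) \setminus \Gamma(C)} e^{-\mu(w)}$ arises from the $r \sim C$ subcase stripped of its $e^{-\mu(r)}$ factor.

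The main obstacle is controlling the correction term $T$. Termwise, the summands of $T$ differ from those of $\beta(R \setminus \Gamma(r))$ only by an additional factor $\prod_{w \in (\Gamma(r) \cap R) \setminus (\{r\} \cup \Gamma(C))} e^{-\mu(w)} \in (0, 1]$, so $T \leq \beta(R \setminus \Gamma(r))$. Plugging this bound into the identity (and using $1 - e^{-\mu(r)} \geq 0$) yields
$$\beta(R) \geq \bigl(\mu(r) + e^{-\mu(r)} - 1\bigr)\, \beta(R \setminus \Gamma(r)) + \beta(R \setminus \{r\}).$$
The elementary inequality $t + e^{-t} \geq 1$ for $t \geq 0$ makes the first summand non-negative (since $\beta(R \setminus \Gamma(r)) \geq 0$ is manifest), and the induction hypothesis applied to the strictly smaller set $R \setminus \{r\}$ gives $\beta(R \setminus \{r\}) \geq 1$, concluding $\beta(R) \geq 1$.
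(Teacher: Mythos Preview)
Your proof is correct and follows essentially the same strategy as the paper's: an induction that peels off one element at a time, splits the sum over compatible $C$ according to whether the distinguished element lies in $C$, in $\Gamma(C)$, or in neither, discards some exponential weights $\leq 1$ to match a smaller instance, and closes with the elementary bound $t+e^{-t}\geq 1$. Your reformulation via $R=Q\setminus U$ and the direct induction on $|R|$ is a bit cleaner than the paper's, which carries both $Q$ and $U$ and inducts on $|Q|$, but the decomposition, the estimate $T\leq\beta(R\setminus\Gamma(r))$, and the final inequality are the same in substance.
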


All the sums in the three claims above run over subsets of $Q=\Gamma(x_1)\cap\Gamma(X$) including the empty set (so that the coefficient $\beta_\varnothing$ appears in~\eqref{idlhs} twice). The inequalities \eqref{eqkl3} follow directly from the three claims. Proving the claims is thus sufficient to conclude the proof of the lemma:
\begin{proof}[Proof of Claim~\ref{claimB1}]Reorder the sum in the right-hand side of  \eqref{eqkl3} by putting the summand together which belong to the same $U:=\Gamma(Y)\cap Q$. Notice that the constraint $Y\sim X$ implies $Y\cap Q=\varnothing$ since $Y\subset \Gamma(x_1)$. The claim now follows directly from the simple observation that $\Gamma(Y)\cap \Gamma(X)^C\subset\Gamma(Y)\backslash U$ for any Y and thus \[\prod\limits_{w\in\Gamma(Y)\backslash U}\e^{\mu(w)}\geq\prod\limits_{w\in\Gamma(Y)\cap \Gamma(X)^C}\e^{\mu(w)}.\]
\end{proof}
\begin{proof}[Proof of Claim~\ref{claimB2}] To see that the bounds stated in the claim hold, decompose the sum in the left-hand side of \eqref{eqkl3} as 
\begin{align}\label{eq:reord1}
&\nonumber 1+\sum\limits_{k\geq 1}\sum\limits_{{\substack{Y=\{y_1,...,y_k\}\\y_i \nsim x_1,\ y_i\sim y_j}}}\prod\limits_{i=1}^k\mu(y_i)\prod\limits_{w\in\Gamma(Y)}\e^{\mu(w)}\\=&1+\sum\limits_{C\in\mathcal C\cup\{\varnothing\}}\sum\limits_{k\geq 1}\sum\limits_{{\substack{Y=\{y_1,...,y_k\}\\y_i \nsim x_1,\ y_i\sim y_j}}}\mathbbm{1}_{\{C=Q\cap Y\}}\prod\limits_{i=1}^k\mu(y_i)\prod\limits_{w\in\Gamma(Y)}\e^{\mu(w)}.
\end{align}
Notice that for any $C\in\mathcal C$
\begin{align*}
&\sum\limits_{k\geq 1}\sum\limits_{{\substack{Y=\{y_1,\ldots,y_k\}\\y_i \nsim x_1,\ y_i\sim y_j}}}\mathbbm{1}_{\{C=Q\cap Y\}}\prod\limits_{i=1}^k\mu(y_i)\prod\limits_{w\in\Gamma(Y)}\e^{\mu(w)}\\ \geq&\prod\limits_{c\in C}\mu(c)\prod\limits_{w\in \Gamma(C)\cap Q}\e^{\mu(w)}\left(1+\sum\limits_{k\geq 1}\sum\limits_{{\substack{Y=\{y_1,...,y_k\}\\y_i \nsim x_1,\ y_i\sim y_j}}}\mathbbm{1}_{\{Y\cap Q =\varnothing\}}\mathbbm{1}_{\{Y\sim C\}}\prod\limits_{i=1}^k\mu(y_i)\prod\limits_{w\in\Gamma(Y)\backslash (\Gamma(C)\cap Q)}\e^{\mu(w)}\right).
\end{align*}
This estimate is established by discarding the exponential weights corresponding to a subset of $\Gamma(C)$: Multiplying the lower bound in the last line with \[\prod\limits_{w\in \Gamma(C)\backslash \Gamma(Y)\backslash Q}\e^{\mu(w)}\geq 1\] yields equality. No further estimates are necessary to prove the claim; we simply plug the obtained lower bound into the right-hand side of~\eqref{eq:reord1} and get
\begin{align*}
&1+\sum\limits_{k\geq 1}\sum\limits_{{\substack{Y=\{y_1,...,y_k\}\\y_i \nsim x_1,\ y_i\sim y_j}}}\mathbbm{1}_{\{Y\cap Q =\varnothing\}}\prod\limits_{i=1}^k\mu(y_i)\prod\limits_{w\in\Gamma(Y)}\e^{\mu(w)}+\sum\limits_{C\in\mathcal C}\prod\limits_{c\in C}\mu(c)\prod\limits_{w\in \Gamma(C)\cap Q}\e^{\mu(w)}
\\&\times\left(1+\sum\limits_{k\geq 1}\sum\limits_{{\substack{Y=\{y_1,...,y_k\}\\y_i \nsim x_1,\ y_i\sim y_j}}}\mathbbm{1}_{\{Y\cap Q =\varnothing\}}\mathbbm{1}_{\{Y\sim C\}}\prod\limits_{i=1}^k\mu(y_i)\prod\limits_{w\in\Gamma(Y)\backslash (\Gamma(C)\cap Q)}\e^{\mu(w)}\right)
\end{align*}
or, equivalently,
\begin{align*}
1&+\sum\limits_{C\in\mathcal C}\prod\limits_{c\in C}\mu(c)\prod\limits_{w\in \Gamma(C)\cap Q}\e^{\mu(w)}+\sum\limits_{k\geq 1}\sum\limits_{{\substack{Y=\{y_1,...,y_k\}\\y_i \nsim x_1,\ y_i\sim y_j}}}\mathbbm{1}_{\{Y\cap Q =\varnothing\}}\prod\limits_{i=1}^k\mu(y_i)\prod\limits_{w\in\Gamma(Y)}\e^{\mu(w)}\\&+\sum\limits_{C\in\mathcal C}\prod\limits_{c\in C}\mu(c)\prod\limits_{w\in \Gamma(C)\cap Q}\e^{\mu(w)}\sum\limits_{k\geq 1}\sum\limits_{{\substack{Y=\{y_1,...,y_k\}\\y_i \nsim x_1,\ y_i\sim y_j}}}\mathbbm{1}_{\{Y\cap Q =\varnothing\}}\mathbbm{1}_{\{Y\sim C\}}\prod\limits_{i=1}^k\mu(y_i)\prod\limits_{w\in\Gamma(Y)\backslash (\Gamma(C)\cap Q)}\e^{\mu(w)}
\end{align*}
as a lower bound for the left-hand side of \eqref{eq:reord1}.

Reordering the last expression by summing over $Y$ first, one realizes that it is equal to
\begin{align*}
&1+\sum\limits_{C\in\mathcal C}\prod\limits_{c\in C}\mu(c)\prod\limits_{w\in \Gamma(C)\cap Q}\e^{\mu(w)}+\sum\limits_{k\geq 1}\sum\limits_{{\substack{Y=\{y_1,...,y_k\}\\y_i \nsim x_1,\ y_i\sim y_j}}}\mathbbm{1}_{\{Y\cap Q =\varnothing\}}\prod\limits_{i=1}^k\mu(y_i)\prod\limits_{w\in\Gamma(Y)}\e^{\mu(w)}
\\
&\times\left(1+\sum\limits_{\substack{C\in\mathcal C\\C\sim Y}}\prod\limits_{c\in C}\mu(c)\prod\limits_{w\in (\Gamma(C)\cap Q)\backslash\Gamma (Y)}\e^{\mu(w)}\right),
\end{align*}
which may be rewritten as
\begin{align*}
&1+\sum\limits_{C\in\mathcal C}\prod\limits_{c\in C}\mu(c)\prod\limits_{w\in \Gamma(C)\cap Q}\e^{\mu(w)}
+\sum\limits_{k\geq 1}\sum\limits_{{\substack{Y=\{y_1,...,y_k\}\\y_i \nsim x_1,\ y_i\sim y_j}}}\mathbbm{1}_{\{Y\cap Q =\varnothing\}}\prod\limits_{i=1}^k\mu(y_i)\prod\limits_{w\in\Gamma(Y)\backslash(\Gamma(Y)\cap Q)}\e^{\mu(w)}
\\
&\times\left(\prod\limits_{w\in \Gamma(Y)\cap Q}\e^{\mu(w)}+\sum\limits_{\substack{C\in\mathcal C\\C\sim Y}}\prod\limits_{c\in C}\mu(c)\prod\limits_{w\in (\Gamma(C)\cup\Gamma(Y))\cap Q}\e^{\mu(w)}\right).
\end{align*}

From the last expression, we obtain precisely  the sum in~\eqref{idlhs} by putting the summands in the last expression which belong to the same $U=\Gamma(Y)\cap Q$ together and dividing by $\prod_{q\in Q}\e^{\mu(q)}$. This yields the claimed lower bound.
\end{proof}
\begin{proof}[Proof of Claim~\ref{claimB3}]
Without loss of generality, we may assume that $Q=\Gamma(x_1)\cap\Gamma(X)$ is a finite non-empty set. Then the claim can be proven via induction over the cardinality of $Q$. 

To start the induction consider the case $Q=\{q\}$, $q\in \mathbb{X}$.  Then $\beta_\varnothing=\e^{-\mu(q)}+\mu(q)\geq 1$ and $\beta_{\{q\}}=1$ by definition.

For the inductive step, let $n\in\mathbb{N}$, $Q_n=\{q_1,...,q_n\}\subset\mathbb{X}$ and let $\mathcal C_n$ be the set of compatible subsets of $Q_n$. Furthermore, let $q_{n+1}\in \mathbb{X}\backslash Q_n$ and let $Q_{n+1}=Q_n\cup \{q_{n+1}\}$. Naturally, there exists a family of subsets $\Theta_n\subset\mathcal{C}_n$, such that the set $\mathcal{C}_{n+1}$ of compatible subsets of $Q_{n+1}$ is  given by  $\mathcal{C}_{n+1}=\{C\cup\{q_{n+1}\}\vert C\in \Theta_n \textit{ or }C=\varnothing\}\cup\mathcal C_n=:\overline{\Theta}_n\cup \mathcal C_n$. 

Under the assumption that $\beta_U(Q_n)\geq 1$ for all $U\subset Q_n$ it is to show that $\beta_U(Q_{n+1})\geq 1$ for all $U\subset Q_{n+1}$. Therefore let $U\subset Q_{n+1}$. If $q_{n+1}\in U$ then $\beta_U(Q_{n+1})= \beta_{U\backslash\{q_{n+1}\}}(Q_n)\geq 1$ by the inductive hypothesis. Left to consider is the case $q_{n+1}\notin U$ (and thus $U\subset Q_n$). Recall that  we defined the coefficient $\beta_U(Q_{n+1})$ by
\begin{align*}
\beta_U(Q_{n+1})=\prod\limits_{q\in Q_{n+1}\backslash U}\e^{-\mu(q)}+\sum\limits_{\substack{C\in\mathcal C_{n+1}\\C\cap U=\varnothing}}\prod\limits_{c\in C}\mu(c)\prod\limits_{w\in (Q_{n+1}\backslash U)\backslash \Gamma(C)}\e^{-\mu(w)}.
\end{align*}
Using the decomposition $\mathcal C_{n+1}=\overline{\Theta}_n\cup \mathcal C_n$, we get
\begin{align*}
\beta_U(Q_{n+1})=&\ \e^{-\mu(q_{n+1})}\prod\limits_{q\in Q_{n}\backslash U}\e^{-\mu(q)}+\sum\limits_{\substack{C\in\overline{\Theta}_n\\C\cap U=\varnothing}}\prod\limits_{c\in C}\mu(c)\prod\limits_{w\in (Q_{n+1}\backslash U)\backslash \Gamma(C)}\e^{-\mu(w)}
\\
&+\sum\limits_{\substack{C\in\mathcal C_n\\C\cap U=\varnothing}}\prod\limits_{c\in C}\mu(c)\prod\limits_{w\in (Q_{n+1}\backslash U)\backslash \Gamma(C)}\e^{-\mu(w)}
\\
=&\ \e^{-\mu(q_{n+1})}\prod\limits_{q\in Q_{n}\backslash U}\e^{-\mu(q)}+\mu(q_{n+1})\prod\limits_{q\in (Q_{n}\backslash U)\backslash\Gamma(q_{n+1})}\e^{-\mu(q)}
\\
&+\mu(q_{n+1})\sum\limits_{\substack{C\in\Theta_n\\C\cap U=\varnothing}}\prod\limits_{c\in C}\mu(c)\prod\limits_{w\in (Q_{n+1}\backslash U)\backslash \Gamma(C\cup\{q_{n+1}\})}\e^{-\mu(w)}\\&+\e^{-\mu(q_{n+1})}\sum\limits_{\substack{C\in\Theta_n\\C\cap U=\varnothing}}\prod\limits_{c\in C}\mu(c)\prod\limits_{w\in (Q_{n}\backslash U)\backslash \Gamma(C)}\e^{-\mu(w)}
\\&+\sum\limits_{\substack{C\in\mathcal C_n\backslash\Theta_n\\C\cap U=\varnothing}}\prod\limits_{c\in C}\mu(c)\prod\limits_{w\in (Q_{n}\backslash U)\backslash \Gamma(C)}
\e^{-\mu(w)}.
\end{align*}
Multiplying the second summand in the last expression by the products of negative exponential weights \[\prod\limits_{w\in\Gamma(\{q_{n+1}\})}\e^{-\mu(w)})\leq 1\]and the third summand by \[\qquad\prod\limits_{w\in\Gamma(\{q_{n+1}\})\backslash\Gamma(C)}\e^{-\mu(w)})\leq 1,\] we obtain the following lower bound:
\begin{align*}
&\beta_U(Q_{n+1})
\\
&\geq \left(\e^{-\mu(q_{n+1})}+\mu(q_{n+1})\right)\left(\ \prod\limits_{q\in Q_{n}\backslash U}\e^{-\mu(q)} +\sum\limits_{\substack{C\in\Theta_n\\C\cap U=\varnothing}}\prod\limits_{c\in C}\mu(c)\prod\limits_{w\in (Q_{n}\backslash U)\backslash \Gamma(C)}\e^{-\mu(w)}\right)\\&\quad+\sum\limits_{\substack{C\in\mathcal C_n\backslash\Theta_n\\C\cap U=\varnothing}}\prod\limits_{c\in C}\mu(c)\prod\limits_{w\in (Q_{n}\backslash U)\backslash \Gamma(C)}\e^{-\mu(w)}
\end{align*}
Since $\e^{-\mu(q_{n+1})}+\mu(q_{n+1})\geq 1$ for any $\mu$, this last expression is in turn bounded from below by 
\begin{align*}
&\prod\limits_{q\in Q_{n}\backslash U}\e^{-\mu(q)} +\sum\limits_{\substack{C\in\Theta_n\\C\cap U=\varnothing}}\prod\limits_{c\in C}\mu(c)\prod\limits_{w\in (Q_{n}\backslash U)\backslash \Gamma(C)}\e^{-\mu(w)}+\sum\limits_{\substack{C\in\mathcal C_n\backslash\Theta_n\\C\cap U=\varnothing}}\prod\limits_{c\in C}\mu(c)\prod\limits_{w\in (Q_{n}\backslash U)\backslash \Gamma(C)}\e^{-\mu(w)}\\
&=\prod\limits_{q\in Q_{n}\backslash U}\e^{-\mu(q)} +\sum\limits_{\substack{C\in\mathcal C_n\\C\cap U=\varnothing}}\prod\limits_{c\in C}\mu(c)\prod\limits_{w\in (Q_{n}\backslash U)\backslash \Gamma(C)}\e^{-\mu(w)}=\beta_U(Q_n).
\end{align*}
By the inductive hypothesis $\beta_U(Q_n)$ is bounded from below by $1$, hence we have shown $\beta_U(Q_{n+1})\geq 1$. This concludes the induction and therefore also the proof of Claim~\ref{claimB3}.
\end{proof}
Combining the three statements from the claims \ref{claimB1},~\ref{claimB2} and~\ref{claimB3} immediately yields the claim of Lemma~\ref{kl2}.
\end{proof}
\section*{Competing interests}
The authors have no competing interests to declare that are relevant to the content of this article.
\section*{Funding}
The authors did not receive support from any funding body for the submitted work.
\section*{Data availability}
Data sharing not applicable to this article as no datasets were generated or analysed during the current study.
\bibliographystyle{plain}
\bibliography{clusterbiblio}

\end{document}